\newtcolorbox{redbox}[1]{colback=red!5!white,colframe=red!50!white,fonttitle=\bfseries,title=#1}
\newtcolorbox{bluebox}[1]{colback=blue!5!white,colframe=blue!50!white,fonttitle=\bfseries,title=#1}
\newtcolorbox{greenbox}[1]{colback=green!5!white,colframe=green!50!white,fonttitle=\bfseries,title=#1}
\newcommand*{\boldify}[1]{%
  \textpdfrender{%
    TextRenderingMode=FillStroke,%
    LineWidth=.60pt,%
  }{#1}%
}
\newcommand{\vbra}[1]{\bra{\boldify{#1}}}
\newcommand{\vket}[1]{\ket{\boldify{#1}}}
\newcommand{\E}{\mathcal{E}}
\newcommand{\ident}{
  \mathds{1}
}
\renewcommand{\cal}[1]{\mathcal{#1}}
\DeclareMathAlphabet\mathbfcal{OMS}{cmsy}{b}{n}
\newtheorem{theorem}{Theorem}[section]
\newtheorem{corollary}{Corollary}[section]
\newtheorem{lemma}{Lemma}[section]
\definecolor{ibm0}{HTML}{648FFF}
\definecolor{ibm1}{HTML}{785EF0}
\definecolor{ibm2}{HTML}{DC267F}
\definecolor{ibm3}{HTML}{FE6100}
\definecolor{ibm4}{HTML}{FFB000}
\newcounter{protocol}
\newenvironment{protocol}[1][]{
  \refstepcounter{protocol}
  \par\noindent\rule{\columnwidth}{1.4pt}  \vspace{-0.3cm}
  \par\noindent Protocol~\theprotocol: \textbf{#1} \rmfamily 
  \par\noindent\rule{\columnwidth}{1pt} \vspace{-0.5cm} }{\par\noindent\rule{\columnwidth}{1.4pt}\medskip}
\def\>{\rangle}
\def\<{\langle}
\def\I{ \ident }
\def\H {\mathcal{H}}
\def\B {\mathcal{B}}
\def\U {\mathcal{U}}
\def\V {\mathcal{V}}
\def\F {\mathcal{F}}
\def\D {\mathcal{D}}
\def\R {\mathcal{R}}
\def\S {\mathcal{S}}
\def\C {\mathcal{C}}
\def\G {\mathcal{G}}
\def\J {\mathcal{J}}
\def\Q {\mathcal{Q}}
\def\M {\mathcal{M}}
\def\bE {{\bar{\mathcal{E}}}}
\def\swap{SW\!\!AP}
\def\x {\otimes}
\DeclareFontFamily{U}{mathx}{}
\DeclareFontShape{U}{mathx}{m}{n}{<-> mathx10}{}
\DeclareSymbolFont{mathx}{U}{mathx}{m}{n}
\DeclareMathAccent{\widenoisy}{0}{mathx}{"71}
\newcommand{\noisy}[1]{{#1}_N}
\begin{document}
\title{A simple formulation of no-cloning and no-hiding \\ that admits efficient and robust verification}
\date{\today}
\author{Matthew Girling}
\email{m.j.girling@leeds.ac.uk}
\affiliation{School of Physics and Astronomy, University of Leeds, Leeds LS2 9JT, United Kingdom}
\author{Cristina C\^{i}rstoiu}
\affiliation{Quantinuum, 13-15 Hills Road, Cambridge CB2 1NL, United Kingdom} 
\author{David Jennings}
\affiliation{School of Physics and Astronomy, University of Leeds, Leeds LS2 9JT, United Kingdom}
\affiliation{Department of Physics, Imperial College London, London SW7 2AZ, United Kingdom}

\begin{abstract}
Incompatibility is a feature of quantum theory that sets it apart from classical theory, and the inability to clone an unknown quantum state is one of the most fundamental instances. The no-hiding theorem is another such instance that arises in the context of the black-hole information paradox, and can be viewed as being dual to no-cloning.  Here, we formulate both of these fundamental features of quantum theory in a single form that is amenable to efficient verification, and that is robust to errors arising in state preparation and measurements. We extend the notion of unitarity - an average figure of merit that for quantum theory captures the coherence of a quantum channel - to general physical theories. Then, we introduce the notion of compatible unitarity pair (CUP) sets, that correspond to the allowed values of unitarities for compatible channels in the theory.  We show that a CUP-set constitutes a simple `fingerprint' of a physical theory, and that incompatibility can be studied through them. We derive information-disturbance constraints on quantum CUP-sets that encode both the no-cloning/broadcasting and no-hiding theorems of quantum theory. We then develop randomised benchmarking protocols that efficiently estimate quantum CUP-sets and provide simulations using IBMQ of the simplest instance. Finally, we discuss ways in which CUP-sets and quantum no-go theorems could provide additional information to benchmark quantum devices.
\end{abstract}
\maketitle

\section{Introduction}

Quantum physics places much stronger limits on how we can transform information, compared to classical physics. These limits can be captured by the notion of incompatibility, that encapsulates fundamental impossibility results in quantum theory~\cite{wootters1982single,yuen1986amplification,barnum2007generalized, nobroadcasting1}.  The most commonly encountered form of incompatibility refers to measurements -- position and momentum cannot be simultaneously measured with the same precision -- leading to formulations of no information without disturbance \cite{kretschmann2008information}. However, incompatibility can be described far more generally \cite{heinosaari2016invitation, bluhm2022incompatibility, jenvcova2018incompatible}. Two local processes on systems $A$ and $B$ are said to be compatible if there exists a \emph{global} process that can produce both. The no-broadcasting theorem, an extension of the famous no-cloning theorem, can be cast as the incompatibility of local identity channels at $A$ and $B$~\cite{nobroadcasting1}. It is readily seen that if a physical theory admits perfect cloning, such as classical theory, then the theory cannot have any form of incompatibility.  

Quantum technologies open new directions to experimentally test foundational aspects of quantum theory~\cite{hensen2015loophole, giustina2015significant, shalm2015strong, samal2011experimental,gao2022experimental}. However, current devices are inherently noisy with error mitigation and correction being key obstacles to overcome for scalable quantum computing~\cite{preskill2018quantum,bombin2015single,suzuki2022quantum,nielsen2002quantum}. This presents a challenge in developing tests for foundational properties in a way that is robust to errors arising from the implementation of the experiment itself (e.g at the state preparation and measurement stage). Viewed  another way, such tests can also produce valuable benchmarks of errors in noisy intermediate scale quantum devices \cite{sadana2022testing}, as they have clear operational significance rooted in fundamental properties of quantum mechanics.

With this in mind, in this work we address the following question:
\begin{quote}\centering
\textit{{Can we formulate measures of quantum incompatibility that can be robustly \& efficiently estimated?}}
\end{quote}

Existing criteria to decide the compatibility of quantum channels can be formulated generally in terms of semidefinite programming and by introducing witnesses of incompatibility \cite{carmeli2019witnessing, carmeli2019quantum}. The task has also been shown to be equivalent to the quantum state marginal problem \cite{girard2021jordan}. Other formulations rely on the Fisher metric \cite{zhang2022fisher} or on the diamond norm \cite{kretschmann2008continuity} to capture information disturbance trade-offs. Another approach is via robustness measures \cite{haapasalo2015robustness,designolle2019incompatibility}. Evaluating these different figures of merit for incompatibility requires extensive optimisations that typically assume access to a full description of the (quantum) processes involved.  

The inability to clone a quantum state \cite{wootters1982single} can be shown to be an extremal case of incompatibility~\cite{heinosaari2016invitation}, and the ability to ``hide'' data in correlations can be viewed as being dual to cloning. This problem arises in the black-hole information paradox \cite{braunstein2007quantum,giddings1995black,maldacena2020black}, and the no-hiding theorem was established to prove the impossibility of hiding a qubit state in quantum correlations \cite{braunstein2007quantum}. This has the implication that black hole information must have some degree of spatial localization, either within the black hole interior or in the external region to the black hole \cite{braunstein2007quantum,giddings1995black,maldacena2020black}.

No-cloning and no-hiding can also be related to other quantum impossibility results such as no-masking \cite{modi2018masking,zhu2021hiding} and no-deleting \cite{kumar2000impossibility}. There have been some recent experimental tests of the no-hiding theorem \cite{samal2011experimental} including the utilization of small scale quantum computers \cite{kalra2019demonstration}. No-cloning has also been tested in the context of information disturbance \cite{gao2022experimental}. However, here we develop a broader framework that exploits recent theoretical ideas that arise in the analysis of quantum technologies.
 
Our approach to this problem is motivated by randomised benchmarking techniques \cite{eisert2020quantum,kliesch2021theory}. Such methods produce estimates of average channel properties (fidelity, unitarity etc.) in a way that is robust to state preparation and measurement (SPAM) errors and does not require exponentially difficult process tomography. In particular, we argue that the unitarity of a quantum channel, which is a measure of its coherence  \cite{wallman2015estimating,dirkse2019efficient,carignan2019bounding}, is a natural means to \emph{simultaneously} describe both no-cloning and no-hiding. In particular we show how such incompatibility can be captured by unitarity within a single information-disturbance inequality.

At a high-level, our work can be viewed as extending the simple concept of the purity of a state, which is a measure of disorder \footnote{in quantum theory this is $\gamma(\rho):= \tr[\rho^2]$ for any state $\rho$}, to what can be viewed as a purity-measure of the physical theory itself. This extension serves as a simple and intuitive 2--dimensional ``fingerprint'' of the theory. An example for quantum theory is shown in Figure~\ref{fig:discrete-banana-ibm-Belem-iRB}. 

\begin{figure}[t]
    \includegraphics[width=7cm]{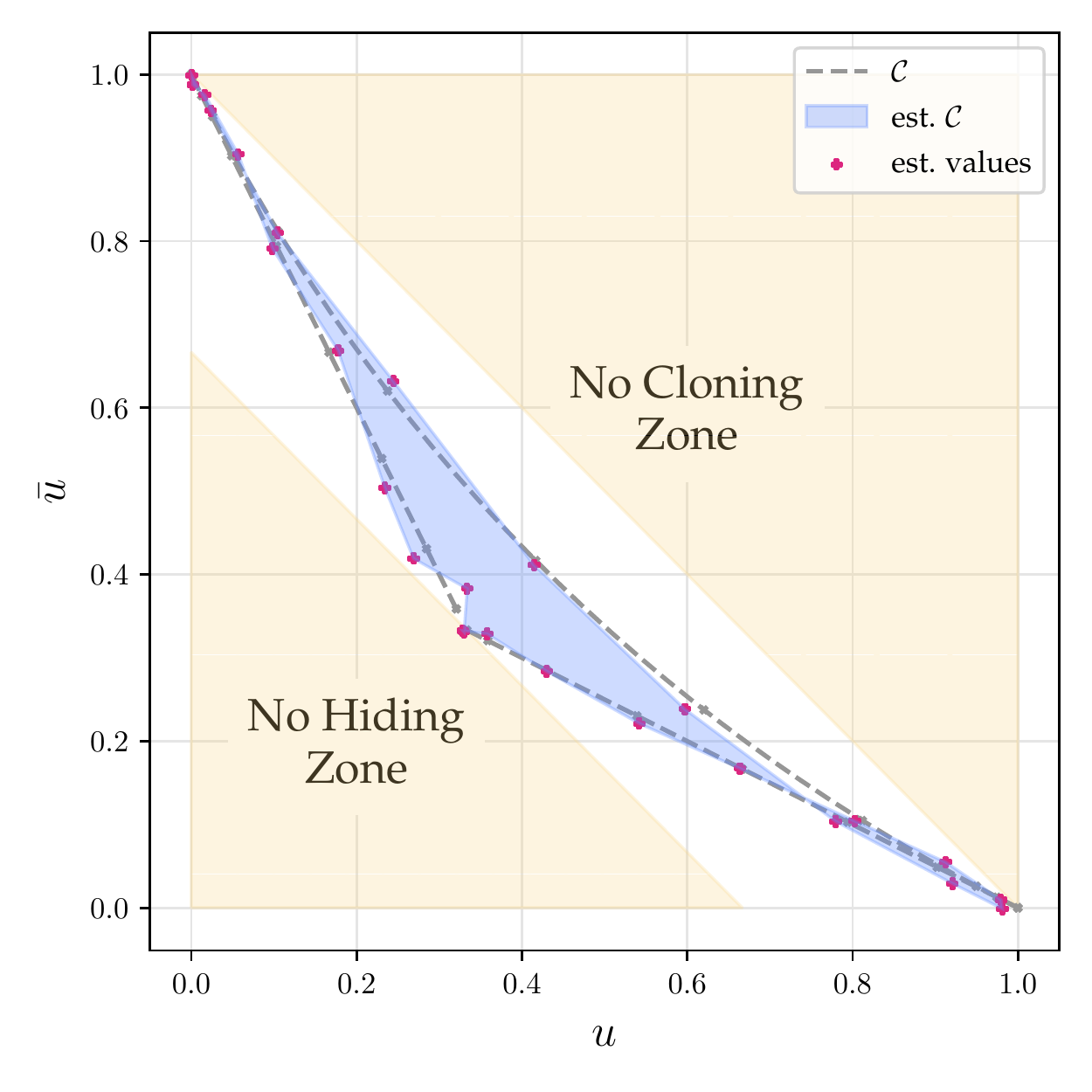}
    \caption{\textbf{Robust and efficient verification of quantum incompatibility.} In classical theory we have the ability to perfectly clone and perfectly hide classical information. In contrast, quantum theory has fundamental incompatibility that prohibits the same behaviour. This is captured by defining CUP-sets, and shown here is the estimation of the simplest quantum CUP-set $\C$. The reversible CUP-set for classical theory corresponds to the full boundary of the unit square $[0,1]^2$, and allows perfect cloning (the point $(1,1)$) and perfect hiding (the point $(0,0)$). Using benchmarking techniques we estimate $\C$, shown here, on an IBM Q device and find that it saturates the general quantum bounds we derive in Theorem~\ref{thm:cup-bounds}. Verifying such fundamental bounds provides a means to test the performance of emerging quantum computers.}
    \label{fig:discrete-banana-ibm-Belem-iRB}
\end{figure}

\subsection{Structure \& main results of the paper}

 Our main focus is to simultaneously handle both classical and quantum theories under a unifying umbrella using average channel properties.
In Section \ref{sec:compatible-unitarity-pairs}, we first develop a generalization of the \textit{unitarity} $u(\E)$ of a quantum channel, that allows extensions of our work to more general physical theories~\cite{chiribella2016quantum,plavala2021general,barnum2006cloning}. We show that the unitarity has key properties that make it well suited to capturing compatibility, compared to other average measures such as fidelities \cite{horodecki1999general,knill2008randomized}. 

We briefly summarize the framework we develop to capture incompatibility of a theory.
For any theory and a globally isometric process, $\V_{X \to AB}$, from a system $X$ into systems $AB$, we consider \emph{marginal channels}
\begin{equation}
    \begin{split}
        \E &:= tr_B \circ \V_{X \to AB}, \\
        \bar{\E} &:= tr_A \circ \V_{X \to AB},
    \end{split}
\end{equation}
by tracing out either $A$ or $B$ respectively.
These channels let us define \textit{compatible unitarity pairs} (CUPs), which we write as
\begin{equation}
    (u(\E),u(\bE)) \equiv (u,\bar{u}).
\end{equation}

Ranging over the set of all valid CUPs in a probability theory forms a \emph{CUP-set} $\C$ -- which depends only on the underlying physical theory and the dimensions $d_X$, $d_A$ \& $d_B$. In Section \ref{sec:compatible-unitarity-pairs}, we show that CUP-sets allow us to compare and contrast fundamental aspects of different physical theories, including incompatibility. In Section \ref{sec:classical-cup-sets}, we establish that classical physics has a CUP-set on the boundary of the unit square, while in stark contrast the simplest CUP-set in quantum theory is described by a non-trivial shape in the plane (see Figure  \ref{fig:discrete-banana-ibm-Belem-iRB}). 

We explain why the shape of CUP-sets encode incompatibility and we prove (see Theorem \ref{thm:cup-bounds}) the following result:\\
\\
\textbf{Result} (Incompatibility bound on quantum CUP-set)
\textit{Any point $(u,\bar{u})$ in a quantum CUP-set lies in the band defined by}
\begin{equation}
    \frac{d_X}{d_X+1} \left(\frac{1}{d_A} + \frac{1}{d_B}\right)    \leq u + \bar{u} \leq 1
\end{equation}
\textit{where $d_X$ is the shared input system dimension, and $d_A$ \& $d_B$ are the respective output dimensions. }\\
This provides a general constraint on any quantum CUP-set.  In Section \ref{sec:quantum-cup-sets}, we relate this result to the no-cloning theorem and the impossibility of perfect hiding of quantum information under unitary evolution, to which there is no classical equivalent. Moreover, the above bounds are tight under general conditions that we discuss. Further, when $d_X=d_A=d_B$ a quantum CUP-set captures the no-hiding theorem exactly (see Theorem \ref{theorem:CUP-equivalence-relations}) which we discuss in Section \ref{sec:quantum-cup-sets}.

We next turn to the estimation of CUP-sets on quantum devices. Firstly, by directly estimating a range of CUPs using the SWAP test \cite{buhrman2001quantum}. These methods are detailed in Section \ref{sec:estimation-state-purity}. Secondly, we consider how techniques for device benchmarking \cite{wallman2015estimating,magesan2012efficient,helsen2019spectral} can be used to estimate CUPs in a SPAM-robust way, see Sections \ref{sec:estimation-through-RB} \& \ref{sec:estimation-spec-tomo}. We show that -- with some assumptions -- quantum CUP-sets can be estimated SPAM robustly on current devices (see Figure~\ref{fig:discrete-banana-ibm-Belem-iRB}).

Finally, we discuss how these methods compare, and to what degree we can infer that current devices obey the limits of quantum incompatibility.

\section{CUP-sets and incompatibility}\label{sec:compatible-unitarity-pairs}

We now construct a framework to study fundamental incompatibilities of a physical theory in a form that is sufficiently simple to allow for efficient and robust estimation. The analysis in this section focusses on quantum and classical theory, but we can extend it to any general probabilistic theory as described in Appendix~\ref{append:incompatibility-robust-measures}. 

\subsection{Unitarity of a general channel}\label{sec:unitarity-defn}
We first introduce a measure -- the unitarity -- that quantifies how noisy a channel is. This measure can also be viewed as the variance of the channel \cite{korzekwa2018coherifying}. For both quantum and classical theory, we have the notion of a physical state $x$ of a system, which may be mixed or pure~\footnote{More precisely a pure state is an extremal point in the set of all states, such as $ \ketbra{\psi}$ in quantum theory, while a mixed state is obtained from probabilistic mixtures of pure states}. For example, in classical statistical mechanics a pure state is a microstate, while a macrostate is a mixed state. The most general evolutions of states are called channels, and a channel $\E$, is simply any map that takes valid states to states. For example, the identity channel $id(x):= x$ for all states $x$. We next need a couple of additional concepts in order to define the unitarity of a channel.

Firstly, for both classical and quantum theory, we have a notion of geometry that arises for the states. In quantum theory we have the Hilbert-Schmidt inner product. For two Hermitian operators $A$ and $B$ this inner product is defined as $\<A,B\>:= \tr (AB)$, and leads to the definition of the \emph{purity} of a quantum state $\rho$ given by $\gamma := \<\rho, \rho\> = \tr (\rho^2)$. The same features exist in classical theory, and for a given probability distribution $(p_k)$ describing a classical state of a system we have its associated purity given by $\gamma(p):=\<p, p\>:= \sum_k p_k^2$. Therefore, in either classical or quantum theory, we can define the purity of a state $x$ as given by $\gamma(x):=\<x,x\>$ for the appropriate inner product. The purity provides a measure of the noisiness of a given state, and for example can be associated to the minimal collision entropy over discriminating measurements in the theory~\cite{muller2013quantum}. Moreover, this quadratic-order measure can be readily estimated for either classical or quantum theory.

Secondly, for both quantum and classical theory we have a preferred measure $d \mu(x)$ which is non-zero over the set $\partial \S$ of pure states of the theory. For quantum theory this is the Haar measure, while for finite-dimensional classical systems it is the uniform measure over the discrete pure states. 

Given this, we now define the \emph{unitarity of a channel} $\E$ as 
\begin{equation}\label{eqn:GPT-unitarity}
    u(\E) := {\rm var}(\E):= \alpha \int_{\partial \S} d\mu(x) \ \gamma( \E( x - \eta ) )
\end{equation}
where $\eta := \int_{\partial \S} \! d\mu(x)\,  x$ is the maximally mixed state under either quantum or classical theory, and where the normalizing constant $\alpha$ is chosen such that $u(id)=1$.

This unitarity measure has a range of nice properties. For example, in Lemma~\ref{lemma:gpt-unitarity-depolar-zero}, we prove that $u(\E)=0$ if and only if $\E$ is a completely depolarizing channel that acts as $\E(x) = y$ for all $x$ and some fixed $y$. Such a channel can be viewed as erasing all information in the input state of the system. Additionally, for any theory in which $\gamma(x) = \<x,x\>$ the unitarity is bounded between 0 and 1, and $u(\V) = 1$ for all isometries $\V$ (see Corollary \ref{cor:u-for-isometry}) which are transformations that perfectly preserve all information in the input state $x$. Similarly, for such theories, the unitarity is invariant under changes of basis $u(\V_1 \circ \E\circ \V_2) = u(\E)$ for any channel $\E$ and unitaries $\V_1,\V_2$ (see Lemma \ref{lemma:unitary-invariance}) \cite{cirstoiu2020robustness,wallman2015estimating}. These attributes make unitarity a natural tool for capturing the incompatibility of channels.

\subsection{Defining cloning and hiding}\label{sec:defn-cloning-and-hiding}
Given a channel $\G$ from a subsystem $X$ to subsystems $AB$ we define the \emph{marginal channels} as
\begin{align}\label{eqn:marginal-channels-no-label}
\tr_B \circ \, \G (x),  \\
\tr_A \circ \, \G (x),
\end{align}
where $\tr_A$ denotes the action of discarding the subsystem $A$, and similarly $\tr_B$ denotes discarding $B$.
With the concept of a marginal channel, we can define what it means to clone or hide a state in a theory. The ability to perfectly clone/broadcast a state can be defined as the existence of a channel $\G$ from an input system $X$ to two output systems $A$ and $B$ such that
\begin{align}
\tr_B (\G(x)) &= id(x) =x, \label{eqn:perfect-clone-e} \\
\tr_A (\G(x)) &= id(x) =x, \label{eqn:perfect-clone-e-bar}
\end{align}
for all states $x$. In other words the state is perfectly copied to the two output subsystems. Note that broadcasting is where one allows correlations between the two output systems, while cloning does not have correlations and is normally considered for pure states only. This distinction is not important here since we focus on the marginal outputs only, and henceforth we refer to the above process as cloning. The no-cloning theorem \cite{wootters1982single} can therefore be cast as a statement that --  under quantum theory -- there is no channel $\G$  such that Equations \ref{eqn:perfect-clone-e} and \ref{eqn:perfect-clone-e-bar} both hold for all states $x$.

The no-hiding theorem in its original formulation~\cite{braunstein2007quantum} says that given a quantum state $|\psi\>$ that unitarily evolves such that the output on one subsystem is a constant state -- namely a completely depolarizing channel -- then the state $|\psi\>$ can be perfectly recovered from the remaining environment subsystem. We can formulate the no-hiding theorem in terms of the above channel marginals in the following way. For a closed quantum system under unitary evolution (e.g. when $\G=\V$), if $ \tr_B(\V(x)) = y$ for some fixed state $y$, then necessarily $x$ must be completely recoverable at $\tr_A(\V(x))$. Therefore the no-hiding theorem requires that  $\tr_{B'} \circ \tr_A(\V(x)) = x$, up to final change of basis, and where the additional partial trace ($\tr_{B'}$) may be required to match the dimension of the input system.

Channel marginals can also capture a more general notion of hiding in any theory. More precisely, we say that we can perform \textit{perfect} hiding in a theory if there is a channel $\G$ from an input system $X$ to two output systems $A$ and $B$ such that for all input states $x$ we have
\begin{align}
\tr_B (\G(x)) & = \D_1(x) =y_1 \\
\tr_A (\G(x)) & = \D_2(x) =y_2,
\end{align}
where $\D_1$ and $\D_2$ are completely depolarizing channels send all states to the fixed states $y_1$ and $y_2$ respectively. In other words the marginal channels of $\G$ fully erase any information encoded in $x$. However, this is not everything. We also require that $x$ is genuinely encoded in the global correlations between $A$ and $B$. Therefore, we additionally require that $\G$ is a reversible transformation, which means there is another channel $\F$ from $A$ and $B$ to $X$ such that $ \F \circ \G(x) = id(x) = x$. This defines {perfect} hiding, but it might be possible to have partial hiding of a state, in the same way as it is possible to partially clone a quantum state. 

We next turn to quantifying how well a theory (classical, quantum or a more general theory) can both clone and hide. To completely capture hiding our framework should reproduce both quantum theory's no-hiding theorem, as well as identify perfect hiding. We do this through the above focus on marginal channels, and use the unitarity to quantify how well these local channels preserve information.

\subsection{Compatible unitarity pairs of a theory }
We now label the two marginal channels  defined in Equation (\ref{eqn:marginal-channels-no-label}) from the input system $X$ to the two output systems $A$ and $B$ as 
\begin{align}
    \E(x) = \tr_B \circ \, \G (x), \label{eqn:marginal-channels-e} \\
    \bE(x) = \tr_A \circ \, \G (x). \label{eqn:marginal-channels-ebar}
\end{align}
We name the tuple of the unitarities of these channels a \textit{compatible unitarity pair} (CUP) and use the notation:
\begin{equation}
    (u(\E),u(\bE)) \equiv (u,\bar{u}).
\end{equation}
From the previous discussion of cloning and hiding we see that the set of global channels we consider $\G$ matters. For hiding, we must consider the set of isometric channels to capture quantum theory's no-hiding theorem -- as well as the set of \emph{reversible} channels for perfect hiding \cite{nayak2006invertible}. In contrast, for cloning we are free to range over all possible channels within a theory. 

It is possible to describe a general process on a closed system in terms of reversible channels on (larger) open systems for both quantum and classical theory \cite{heunen2021bennett, bennett1973logical}. In quantum theory, global isometries suffice (captured by a Stinespring dilation \cite{watrous2018theoryvec}), however classical theory requires the use of auxiliary randomness \cite{bennett1973logical}. The \emph{isometric channels} are a proper subset of reversible channels for both classical and quantum theory and are defined as those channels $\V$ for which we have $\<\V(x), \V(y)\> = \<x,y\>$ for all states $x,y$. The smaller set of isometry channels are the traditional set considered for incompatibility in quantum theory, due to the Stinespring dilation theorem.

When $\G \in \V$, the set of isometric channels, and $\E$ \& $\bE$ are its marginal channels (as defined in  Equations~\ref{eqn:marginal-channels-e} \& \ref{eqn:marginal-channels-ebar}) then we write $\E \sim \bE$ and say that these channels are isometrically compatible. In this case, for quantum theory, the channels $\E$ \& $\bE$ are complementary to each other.

Similarly if $\G \in \R$, the set of reversible channels, then  we write $\E \sim_r \bE$. Finally, when we consider $\G$ to be the set of \emph{all} channels in a theory, we write $\E \sim_* \bE$ such that $\E$ and $\bE$ are marginals of any valid channel $\G$ from $X$ to $AB$. This notation is just to simplify definitions, and does not suggest an equivalence relation.

We now define the \textit{set} of \textit{compatible unitarity pairs} (the CUP-set) as 
\begin{align}
\C^{X \rightarrow AB} :=\{ (u(\E), u(\bE) ) \in \mathbb{R}^2 :   \E \sim \bE   \}
\end{align}
which is determined by both the structure of the particular state spaces and the admissible isometry channels in the theory. In a similar way, we define the \emph{reversible CUP-set}, $\C_r^{X \rightarrow AB}$, when $ \E \sim_r \bE $. Finally, we define the \emph{full CUP-set}, $\C_*^{X \rightarrow AB}$, for the marginals of any valid channel, when $\E \sim_* \bE$.

For the remainder of this work we shall drop the superscripts specifying the subsystems and just write $\C, \C_r$ and $\C_*$ for the CUP-sets.

Since the unitarity is bounded between $0$ and $1$, we have the following series of inclusions 
 \begin{equation}
\C \subseteq \C_r \subseteq \C_* \subseteq [0,1]^2.
\end{equation}
It turns out that CUP-sets can be defined for general probabilistic theories, and we discuss this in Appendix~\ref{append:incompatibility-robust-measures}.
Note that for any theory we have $(0,0) \in \C_*$, since we are always free to discard the input state and prepare an arbitrary constant state on the output systems (for which the unitarity vanishes). Likewise, since the identity channel is in any theory, and we are free to swap/relabel subsystems (all isometric processes) we also have that $(1,0)$ and $(0,1)$ lie in $\C$. These are common points for CUP-sets across different physical theories. 

\subsection{No-cloning and no-hiding through the CUP set}
No-cloning and no-hiding fit into this framework as follows. Firstly, if the physical theory admits perfect cloning then this implies that $(1,1) \in \C_*$ since $u(\E) = 1$  if and only if  $\E = id$ up to a final isometry \cite{cirstoiu2020robustness}.
We also note that it has been shown \cite{barnum2007generalized,chiribella2016quantum} that broadcasting is possible in a physical theory if and only if the theory has a simplex state space of perfectly distinguishable states, and so essentially only classical theory has $(1,1)$ in its CUP-sets. The no-cloning theorem can be cast compactly as a statement that -- for quantum theory, the full CUP-sets $\C_*$ (and therefore all CUP-sets) exclude the point $(1,1)$.

Secondly, from Section \ref{sec:defn-cloning-and-hiding}, the no-hiding theorem given in terms of marginal channels states that -- for quantum theory under isometric evolution --- if $\E=\D$ (a completely depolarizing channel) then necessarily the input state can be completely recovered in the other subsystem. In the case $d_X=d_A=d_B$, this implies $\bE=\V$, an isometry. We will show (see Section \ref{sec:quantum-cup-sets}) that this statement of the no-hiding theorem is captured exactly by the isometric quantum CUP-set $\C$, as for the point $(0,x)$ in $\C$ then $x=1$ only. Additionally in the case of unequal subsystems, the no-hiding theorem and the impossibility of perfect hiding implies the point $(0,0)$ must still be strictly excluded from the isometric CUP-set $\C$. We prove this also holds in Section \ref{sec:quantum-cup-sets}.

We also consider whether a theory admits perfect hiding with the addition of auxiliary randomness. This is captured by the reversible CUP-sets, $\C_r$. If the theory admits perfect hiding with auxiliary randomness then we have $\D_1 \sim_r \D_2$ for some completely depolarizing channels $\D_1$ and $\D_2$. However, as we prove in the Appendices, this statement is equivalent to the existence of the origin in the reversible CUP-set, $(0,0) \in \C_r$. We will show that the reversible CUP-sets of classical probability theory always contain $(0,0)$ whereas the quantum reversible CUP-sets only contain $(0,0)$ under certain dimensional restrictions related to mixed state purification.  

Finally, for both quantum and classical theory we examine the case with the smallest  non-trivial dimensions in detail. Since quantum physics neither admits perfect cloning, nor perfect hiding under unitary evolution, the simplest quantum CUP-sets form non-trivial subsets of the unit square $[0,1]^2$, which we discuss shortly. In contrast, both $(0,0)$ and $(1,1)$ always lie within the classical (reversible) CUP-sets.

\section{Classical CUP-sets}\label{sec:classical-cup-sets}
We now explore in detail how the classical CUP set captures the compatibility allowed in classical probability theory. We shall see that the CUP-sets of classical theory are radically different from quantum theory, and so are a simple and vivid way to contrast the two theories.

\subsection{Unitarity of classical channels}
For a classical probability distribution on a $d$ dimensional system, the pure states correspond exactly to the $d$ extremal points $\{x_i\}_{i=1}^d$ of the state space. The unitarity reduces to 
\begin{equation}
u(\E) = \frac{d}{d-1}\sum_{i=0}^{d-1}  \gamma(\E(x_i-\eta)),
\end{equation}
where $\eta = \frac{1}{d} \sum_{i=0}^{d-1} x_i$ is the maximally mixed state.

The only isometric operations with input and output systems of the same dimensions are those that permute the pure states. Recall that for any isometry we have $u=1$. Furthermore, reversible classical channels  are fully generated by the set of isometries and auxiliary classical randomness \cite{bennett1973logical,aaronson2015classification,axelsen2011reversible}, as they correspond to injective Boolean functions. This allows us to characterise CUP-sets for classical theory.

The state space of a single probabilistic classical bit is a $d=2$ system with two possible pure states $x_0 := (1,0)$ and $x_1 := (0,1)$ in $\mathbb{R}^2$. Any pure state $x_m$ encodes the bit $m \in \{0,1\}$. There are only two single-bit isometries:  the identity channel $id$ and $NOT$ operation for which  $NOT(x_0)=x_1$ \& $NOT(x_1)=x_0$. For a two-bit system $d=4$, we can define the pure states through the tensor product of the single bit pure states e.g. $x_{ab} := x_a \x x_b$ for $a,b \in\{0,1\}$.

\subsection{Cloning and hiding in classical theory}
\label{sec:cloninghide}

To clone/broadcast a classical bit in a state  $x_m := (p,1-p)$, one simply brings in an auxiliary bit in the pure state $x_0 = (1,0)$ and then performs a controlled-not (CNOT) gate, controlled on the state $x_m$ with the auxiliary bit as the target. The marginal distributions are then both given by $x_m$; the input information was perfectly copied to the marginals. In terms of channels, the protocol is simply given by
\begin{equation}
	\V(x_m) = CNOT(x_m \otimes x_0),
\end{equation}
which outputs a $2$--bit state.

Hiding of a classical (deterministic) bit involves encoding a bit $m=0$ or $m=1$ entirely in correlations so that the marginal bit states are $\eta = (1/2,1/2)$, the maximally disordered state. However, we also require that the bit $m$ is still perfectly recoverable from the total state.
This can be done as follows: we introduce a single auxiliary bit in the state $\eta$, which is viewed as an unknown key bit, and we perform a controlled-not gate on $e_m$ that is controlled on $\eta$. Equivalently we get 
\begin{equation}
	\R_{\rm{hide},1/2}(x_m) = CNOT(\eta \otimes x_m),
\end{equation}
which is a correlated $2$--bit state. It has marginals $\eta$ and also since $CNOT \circ CNOT = id$ we can perfectly recover $m$ from the joint $2$--bit state. This is the classical one-time-pad protocol for encryption  and the operation performs perfect hiding in classical theory \cite{shannon1949communication}. 

\subsection{The simplest classical CUP-set }

The most general isometries from a single bit into two bits take the form of 
$\V (x) : = \pi_{AB} (x\otimes x_0)$ where $\pi_{AB}$ is a permutation on the four basis states. There will be 6 such different isometric operations, however they produce the same 3 points on the CUP diagram as follows. As the unitarity of the marginal channels $\E$ and $\bar{\E}$ are invariant under local isometries at $A$ and $B$,  then separable operations $\pi_{AB} = \pi_A\otimes \pi_B$ will give the point $(u,\bar{u}) =(1,0)$, corresponding to $\E = id$ and $\bar{\E}=\D$.  The swap operation permuting the two systems will produce the point $(u,\bar{u} ) = (0,1)$. Finally, if the permutation corresponds to the $CNOT$ operation with control on system $A$, then on the CUP-set diagram this gives the point $(u,\bar{u})=(1,1)$, as $\E = \bar{\E} = id$. We therefore have that
\begin{equation}
\C = \{ (1,0),(0,1), \mbox{ and } (1,1)\},
\end{equation}
for the simplest non-trivial CUP-set in classical theory.

\subsection{Classical reversible CUP-set}
We now consider the CUP-set produced by the set of  reversible global operations $\R$.

The following class of operations
\begin{equation}
	\R_{p}(x) = x \x (px_0 + (1-p)x_1)
\end{equation}
introduces an auxiliary system $B$ prepared in a fixed probabilistic state. This is not isometric, but satisfies  $tr_B \circ \, \R_{p}(x) = x$, and is therefore reversible. All such channels $\R_{p}$ correspond to point $(u,\bar{u})=(1,0)$ of the CUP-sets. Generally, reversible operations are given by
\begin{equation}
	\R := \pi_{AB} \circ \R_p.
\end{equation}

Motivated by the single-bit hiding protocol in Sec.~\ref{sec:cloninghide}, we consider $\pi_{AB} = CNOT_{AB}$, the controlled-not with $A$ as the control and with $B$ as the target, that generates the following family of reversible maps 
\begin{equation}
    \R_{\rm hide, p} := CNOT_{AB} \circ \R_{p}.
\end{equation}
These are partially-hiding channels with the perfect-hiding channel occurring for $p=1/2$, corresponding to the point $(u,\bar{u}) = (0,0)$. For general $p \in [0,1]$ we have $(u,\bar{u}) = (1,p')$ with $p'=(1 - 2 p)^2$. Since we can swap output subsystems we also get $(u,\bar{u}) = (p',1)$. The remaining reversible channels are obtained from
\begin{equation}
    \R_{{\rm broad},p} := CNOT_{BA} \circ \R_{p}
\end{equation}
which gives the points $(u,\bar{u}) = (0,p')$ with $p'=(1 - 2 p)^2$ and similarly, $(u,\bar{u}) = (p',0)$ if we swap the output subsystems. We therefore have that
\begin{equation}
\C_r = \{ (t,0), (0,t), (t,1), \mbox{ and } (1,t) \mbox{ for all } t\in [0,1]\}.
\end{equation}
In other words the reversible CUP-set $\C_r$ is simply the border of the unit square $[0,1]^2$.

\subsection{Classical full CUP-set}
Finally, we consider the full 1 to 2 bit CUP-set, $\C_*$, obtained by ranging over all single-bit to two-bit systems. It can be shown (see Corollary \ref{corollary:squash-CUP-to-origin}) that if a global channel $\E$ from $X$ to $AB$ gives a point $(u,\bar{u})$ in any CUP-set and $\D$ is any global, completely depolarizing channel, then the set of convex mixtures $p \E + (1-p)\D$ give the line segment joining $(u, \bar{u})$ to $(0,0)$. This automatically implies that for classical theory we have
\begin{equation}
\C_* = [0,1]^2,
\end{equation}
since we can take convex mixtures of reversible channels with a completely depolarizing channel and the resulting line segments fill the unit square.

\section{Quantum CUP-sets}\label{sec:quantum-cup-sets}
Quantum CUP-sets are much more tightly constrained in the unit square $[0,1]^2$ than their classical counterparts, and we relate this to quantum no-go theorems. In this section we provide evidence for this statement, via tight analytical bounds on the sum of quantum CUPs. 

\begin{figure}[h]
    \includegraphics[width= 7.5cm]{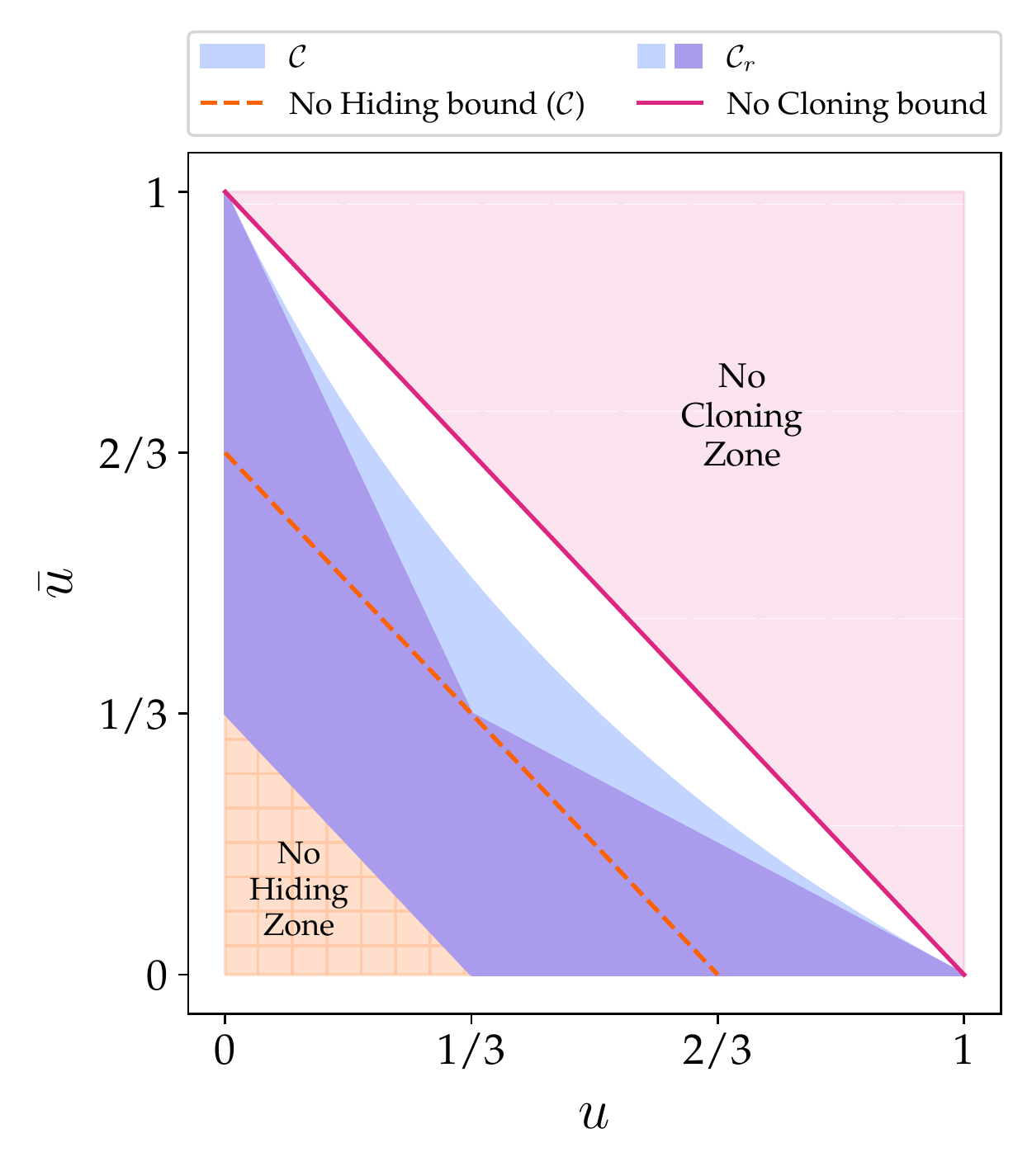}
    \caption{\textbf{Quantum CUP-sets} The simplest isometric and reversible CUP-sets under quantum theory, with their analytical bounds ($d_X = d_A = d_B =2$). The CUP-set $\C$ generated by global isometries is the central boomerang-shaped region (blue). Extending this to reversible operations $\C_r$ increases the set in the direction of $(0,0)$ to the boundary with the No-Hiding Zone (yellow). The two diagonal red lines are obtained from the general analytic upper and lower bounds for CUP-sets in quantum theory. In contrast, for classical theory we have that $\C_r$ is the border of the unit square, while $\C$ is the triple of points $(1,1), (1,0), (0,1)$.}
	\label{fig:quantum-banana}
\end{figure} 

\subsection{Unitarity of quantum channels}
Under quantum theory, from Equation (\ref{eqn:GPT-unitarity}), the unitarity of a  quantum channel $\E: \B(\H_X) \to \B(\H_Y)$ is 
\begin{equation}\label{unitarity:haar-defn}
    u(\E) := \frac{d_X}{d_X-1} \int \dd \psi  \tr[\mathcal{E}\left(\psi - \frac{\ident_X}{d_X}\right)^2]
    ,
\end{equation}
where $d_X$ is the dimension of system $X$, and where the integration is with respect to the Haar measure.

Within the context of benchmarking quantum devices, the unitarity $u$ of the average noise channel $\E$ associated with a gate-set can be estimated using Randomized Benchmarking (RB) \cite{wallman2016noise}. The unitarity of a noise channel gives additional information, beyond the average gate fidelity \cite{carignan2019bounding}. Knowing the unitarity of a channel in addition to the average gate fidelity, gives improved bounds on the diamond norm distance of a given channel to the identity \cite{wallman2015bounding}, a key figure of merit for fault-tolerant computation. Benchmarking protocols to estimate the unitarity of noise are efficient and robust against state preparation and measurement (SPAM) errors.

\subsection{Incompatibility and hiding via trade-off relations on CUP-sets}

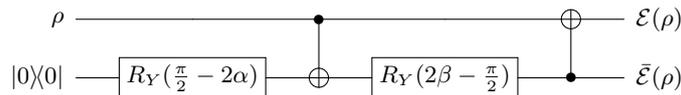
\begin{figure*}[t]
	\centering
	\begin{equation*}
		\Qcircuit @C=1.8em @R=1.2em {
			\lstick{\rho} & \qw   &  \ctrl{1}  & \qw      & \targ  & \rstick{\E(\rho)} \qw \\
			\lstick{\ketbra{0}} &  \gate{R_Y(\frac{\pi}{2}- 2 \alpha)}   & \targ & \gate{R_Y(2\beta - \frac{\pi}{2})} & \ctrl{-1} & \rstick{\bar{\E}(\rho)} \qw
		}
	\end{equation*}
	\caption{\textbf{Sufficient circuit decomposition for  2 qubit isometry $\V(\alpha,\beta)$} For $d_X=d_A=d_B=2$, the above isometry is sufficient to generate all points of the CUP-set $(u,\bar{u})$ with $0 \leq \alpha,\beta \leq \pi$ \cite{vatan2004optimal}. This follows from the general decomposition given in Figure \ref{fig:general-2-qubit-circuit}, observing that the initial two gates do not change the state of the system, and the invariance of unitarity under local unitaries.}
	\label{fig:sufficient-2-qubit-circuit}
\end{figure*}
We can now establish the following general bounds on the quantum CUP-sets that arise from isometric channels. This gives us a handle on the structure of such sets and in particular how they relate to cloning and hiding.
\begin{theorem}[General bounds on quantum CUP-set $\C$]\label{thm:cup-bounds}
    Given any input system $X$ of dimension $d_X$ and output systems $A$ and $B$ of dimensions $d_A, d_B$, with $d_X \le d_Ad_B$. The associated quantum CUP-set $\C \subseteq [0,1]^2$ is confined to the band in the $(u,\bar{u})$--plane defined by 
    \begin{equation}\label{eqn:no-hiding-bound}
        \frac{d_X}{d_X+1} \left(\frac{1}{d_A} + \frac{1}{d_B}\right) \leq u + \bar{u} \leq 1.
    \end{equation}
    This bound is tight and the quantum CUP-set $\C$ intersects the bounding lines at $(1,0), (0,1)$ and when $d_A = d_B$ it also attains the optimal hiding point $(u, \bar{u}) = (\frac{d_X}{d_A(d_X+1)}, \frac{d_X}{d_A(d_X+1)})$. 
\end{theorem}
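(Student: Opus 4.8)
The plan is to collapse $u+\bar u$ into a statement about the purities of the two marginals of a single fixed state, and then to bound those purities by elementary positivity.

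First I would put the unitarity in closed form. Writing $\V(\cdot)=V(\cdot)V^\dagger$ for the isometry $V:\H_X\to\H_A\otimes\H_B$ and expanding $\psi-\I_X/d_X$ in a Hermitian orthonormal basis $\{\sigma_i\}_{i=1}^{d_X^2-1}$ of traceless operators, the Haar average $\int \mathrm{d}\psi\,\psi^{\otimes 2}=(\I+\mathrm{SWAP})/(d_X(d_X+1))$ collapses the defining second moment to
\begin{equation}
u(\E)=\frac{1}{d_X^2-1}\sum_i \tr[\E(\sigma_i)^2],
\end{equation}
and likewise for $\bE$. This is the only genuinely analytic input and it is standard.

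Next I would evaluate the sum. Setting $M_i:=V\sigma_i V^\dagger$ and using $\sum_i\sigma_i\otimes\sigma_i=\mathrm{SWAP}_X-\I_X\otimes\I_X/d_X$ gives $\sum_i M_i\otimes M_i=\mathrm{SWAP}_{AB}(\Pi\otimes\Pi)-\tfrac{1}{d_X}\Pi\otimes\Pi$ with $\Pi:=VV^\dagger$. Writing each summand as a swap overlap, $\tr[\E(\sigma_i)^2]=\tr[(M_i\otimes M_i)\mathrm{SWAP}_A]$ and $\tr[\bE(\sigma_i)^2]=\tr[(M_i\otimes M_i)\mathrm{SWAP}_B]$, and using $\mathrm{SWAP}_{AB}=\mathrm{SWAP}_A\mathrm{SWAP}_B$, the swap algebra reduces everything to reduced purities and yields the clean identity
\begin{equation}
u+\bar u=\frac{d_X}{d_X+1}\big(\gamma(\omega_A)+\gamma(\omega_B)\big),
\end{equation}
where $\omega_{AB}:=\V(\I_X/d_X)=\Pi/d_X$ is the image of the maximally mixed input (a normalized rank-$d_X$ projector, so $\gamma(\omega_{AB})=1/d_X$) and $\omega_A,\omega_B$ are its marginals. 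The whole theorem now reduces to two-sided control of $\gamma(\omega_A)+\gamma(\omega_B)$. The lower bound is immediate: $\gamma(\omega_A)\ge 1/d_A$ and $\gamma(\omega_B)\ge 1/d_B$, since purity is minimized by the maximally mixed state, with equality precisely when both marginals are maximally mixed. For the upper bound I would use the operator identity
\begin{equation}
1+\gamma(\omega_{AB})-\gamma(\omega_A)-\gamma(\omega_B)=\tr\big[(\I-\mathrm{SWAP}_A)(\I-\mathrm{SWAP}_B)(\omega_{AB}\otimes\omega_{AB})\big],
\end{equation}
whose right-hand side is non-negative because $\I-\mathrm{SWAP}_A$ and $\I-\mathrm{SWAP}_B$ are commuting positive operators and $\omega_{AB}\otimes\omega_{AB}\ge 0$. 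Inserting $\gamma(\omega_{AB})=1/d_X$ and multiplying through by $d_X/(d_X+1)$ reproduces both the lower band $\tfrac{d_X}{d_X+1}(1/d_A+1/d_B)$ and the upper band $1$ exactly.

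Finally, for tightness I would exhibit saturating isometries. The upper line is met at $(1,0)$ by taking $\E=\mathrm{id}$ with $B$ prepared in a fixed pure state -- then $\omega_{AB}$ is a product, $\omega_B$ is pure, and the $B$-antisymmetric factor annihilates $\omega_{AB}\otimes\omega_{AB}$ -- and at $(0,1)$ by interchanging $A$ and $B$. The lower line requires an isometry whose $d_X$-dimensional image has both reduced states maximally mixed; when $d_A=d_B$ one can further impose invariance under the $A\leftrightarrow B$ swap, forcing $u=\bar u$ and landing on the symmetric optimal hiding point. I expect the real obstacle to lie here: showing that such a balanced (and, for $d_A=d_B$, swap-symmetric) subspace exists across the admissible range $d_X\le d_Ad_B$, and identifying the dimensional regime in which the lower line is genuinely attained rather than merely respected -- together with the caveat that reaching $(1,0),(0,1)$ needs $d_A,d_B\ge d_X$. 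The two inequalities and the $(1,0),(0,1)$ endpoints are routine once the purity formula is established; the achievability of the optimal hiding point is the delicate part.
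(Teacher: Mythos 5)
Your proposal is correct and follows essentially the same route as the paper's proof: reduce $u+\bar{u}$ to $\frac{d_X}{d_X+1}\left(\gamma(\omega_A)+\gamma(\omega_B)\right)$ for $\omega_{AB}=\V(\I/d_X)$, lower-bound each marginal purity by $1/d_{A}$ or $1/d_B$, and upper-bound the sum via $\gamma(\omega_A)+\gamma(\omega_B)\le 1+\gamma(\omega_{AB})=1+1/d_X$. The only difference is presentational: you derive the two imported ingredients (the complementary-channel purity identity for $u(\E)$ and the bipartite purity subadditivity) self-containedly with swap-operator algebra, where the paper cites them from the literature, and like the paper's appendix proof you defer the explicit saturating isometries (the identity/swap channels and the generalized CNOT for the hiding point) to a separate construction.
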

The proof of this is provided in Appendix \ref{append:nohidingbound}.
These bounds place hard limits on the amount of quantum information that can be hidden in the correlations between systems, and also how it can be shared between local systems. The upper bound can be directly related to the no-cloning theorem, as if $u=1$ for the identity channel then necessarily $\bar{u} =0$ and the other marginal is completely depolarizing.  The perfect hiding point $(0,0)$ is always precluded from the (isometric) CUP set which is a consequence of the no-hiding theorem. Further, the upper bound is saturated  for the identity and swap-channels, while the lower bound can be saturated in the case $d_X=d_A=d_B=d$ via a $d^2$ dimensional generalization of the Controlled-NOT operation. 

In the case of equal subsystem dimensions, the quantum CUP-set is further restricted:
\begin{theorem}[No-hiding bound on quantum CUP-set $\C$]\label{theorem:CUP-equivalence-relations}
    Given any input system $X$ and output systems $A$ and $B$ all of equal dimension. The associated quantum CUP-set $\C$ is confined such that for
    \begin{equation}
        (u,\bar{u}) = (0,x) \Rightarrow x = 1.
    \end{equation}
\end{theorem}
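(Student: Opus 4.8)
The plan is to reduce the statement to the no-hiding theorem as reformulated in Section~\ref{sec:defn-cloning-and-hiding}, using the extremal characterisations of unitarity already established. Fix $d_X=d_A=d_B=d$ and suppose $(u,\bar u)=(0,x)\in\C$, realised by a global isometry $\V:\H_X\to\H_A\otimes\H_B$ with marginals $\E=\tr_B\circ\V$ and $\bE=\tr_A\circ\V$. The first step is to turn the hypothesis $u=u(\E)=0$ into structural information about $\E$: by Lemma~\ref{lemma:gpt-unitarity-depolar-zero}, $u(\E)=0$ forces $\E$ to be completely depolarising, i.e.\ $\E(\rho)=\sigma$ for a single fixed state $\sigma$ and every input $\rho$.

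This is precisely the premise of the no-hiding theorem. Applying it (in the marginal-channel form of Section~\ref{sec:defn-cloning-and-hiding}) to the isometry $\V$ with constant marginal $\E=\D$, the input becomes recoverable from the complementary output, and the theorem delivers the structural form
\begin{equation}
  \V\ket{\psi}=\sum_k\sqrt{p_k}\,\ket{\alpha_k}_A\otimes\bigl(\ket{\psi}\otimes\ket{\phi_k}\bigr)_B,
\end{equation}
with fixed vectors $\ket{\alpha_k},\ket{\phi_k}$ and $\sigma=\sum_k p_k\ketbra{\alpha_k}$, so that a copy of $\H_X$ sits inside $\H_B$.

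The crux --- and the only place equality of dimensions is used --- is a dimension count on this decomposition. The factor $\ket{\psi}\otimes\ket{\phi_k}$ embeds $\H_X$ (dimension $d$) into $\H_B$ for each term in the range of $\sigma$, so that $d_B\ge d\cdot\operatorname{rank}(\sigma)$. Since $d_B=d$, we are forced to have $\operatorname{rank}(\sigma)=1$: the fixed state $\sigma$ is pure, only one term survives, and $\V\ket{\psi}=\ket{\alpha}_A\otimes(U\ket{\psi})_B$ for a fixed unit vector $\ket{\alpha}$ and a unitary $U$ identifying $\H_B\cong\H_X$. Equivalently, the extra ``padding'' subsystem $B'$ appearing in the general recovery statement $\tr_{B'}\circ\bE=\mathrm{id}$ is trivial here, so no auxiliary trace is needed. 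Consequently $\bE(\rho)=\tr_A(\V\rho\V^\dagger)=U\rho U^\dagger$ is a unitary channel.

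To finish, I would invoke the extremality of unitarity for reversible maps: since $\bE$ is an isometry (indeed a unitary channel), Corollary~\ref{cor:u-for-isometry} gives $\bar u=u(\bE)=1$, i.e.\ $x=1$, as claimed. I expect the dimension count to be the delicate step, since it is exactly what distinguishes the equal-dimension regime from the general band of Theorem~\ref{thm:cup-bounds}: that band alone only yields $\bar u\ge \tfrac{2d_X}{d(d_X+1)}<1$, so the sharp conclusion $x=1$ genuinely requires the structural content of the no-hiding theorem rather than the averaged trade-off inequality.
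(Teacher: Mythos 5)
Your proof is correct, but it takes a genuinely different route from the paper's. The paper proves this statement (Appendix~\ref{append:equivalence-for-qubits}) entirely in channel-theoretic terms: it uses the fact that the complementary channel of a given channel is unique up to an isometry on its output, exhibits the product isometry $\U_A\otimes\U_B(\rho\otimes\ketbra{0})$ realising the pair (unitary, depolarising-to-a-pure-state), and then chains the equivalences $u(\E)=0\iff\E=\D_\psi\iff\bE=\U\iff u(\bE)=1$. You instead import the Braunstein--Pati no-hiding theorem as a black box to obtain the structural decomposition $\V\ket{\psi}=\sum_k\sqrt{p_k}\ket{\alpha_k}_A\otimes(\ket{\psi}\otimes\ket{\phi_k})_B$ and then run an explicit rank count $d_B\ge d\cdot\operatorname{rank}(\sigma)$ to force $\sigma$ pure. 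Your dimension count is in fact \emph{more} explicit than the corresponding step in the paper, which merely asserts that ``the only marginal channel that can be constructed that disregards any input state completely'' is depolarising to a pure state; your argument is the honest justification of that assertion, and you correctly identify it as the one place equal dimensions are used (the general band of Theorem~\ref{thm:cup-bounds} only gives $\bar u\ge 2/(d+1)$). The trade-off is conceptual rather than logical: the paper's intent is that Theorem~\ref{theorem:CUP-equivalence-relations} \emph{encodes} the no-hiding theorem in CUP-set geometry, so its proof deliberately avoids invoking no-hiding and derives the result from complementary-channel uniqueness, making no-hiding a readable consequence; your version establishes the same statement but as a corollary of no-hiding, which weakens the ``the CUP-set captures no-hiding'' narrative even though it is perfectly valid as a proof of the stated implication.
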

Proof given in Appendix \ref{append:equivalence-for-qubits}. This restriction on quantum CUP-sets encapsulates both the no-hiding theorem in the following manner. The no-hiding theorem here states that if one marginal channel contains no information about the input system, then necessarily all the information can be recovered through the other marginal channel \cite{braunstein2007quantum}. Therefore when $d_X=d_A=d_B$, if $\E = \D $ then necessarily $\bE = \U$, a unitary. From Theorem \ref{theorem:CUP-equivalence-relations}, a quantum CUP-set captures this geometrically, as for the point $(0,x)$ in $\C$ then $x=1$ only. Which corresponds exactly (and only) to $\E = \D$ and $\bE = \U$, thereby capturing the no-hiding theorem.

\subsection{The simplest quantum CUP-set}
The simplest quantum CUP-set, with non-trivial bipartite outputs, is the case $d_X=d_A=d_B=2$. 
From Theorem \ref{thm:cup-bounds}, for the (isometric) CUP-set $\C$, this gives the following bounds
\begin{equation}
    \frac{2}{3} \leq u + \bar{u} \leq 1.
\end{equation}
with the optimal hiding point given by $(u,\bar{u}) = (1/3,1/3)$.

For isometries mapping single qubit to two qubits, $\V(\rho) := \U_{AB} (\rho \x \ketbra{0})$, it is sufficient to range over all unitaries $\U_{AB}$ to explore the full parameter space of $(u,\bar{u})$ for $\C$. The general form of two qubit unitaries contains at most 3 CNOTs and 3 independently parametrised single qubit rotations \cite{vatan2004optimal} (see Appendix~\ref{append:figures}). However, the two parameter isometry set with $\U_{AB} = {\U_{AB}}(\alpha,\beta)$ (for $\alpha,\beta \in \mathbb{R}$) (with circuit as in  Figure \ref{fig:sufficient-2-qubit-circuit})  generates all possible complementary channel pairs, up to local unitaries \cite{vatan2004optimal}.  As the CUP-set is invariant under local unitaries, this family suffices to fully describe it.

In Figure \ref{fig:quantum-banana} we plot this simplest CUP-set, where 3 boundary curves can be identified. The families of channels generating the boundary are of interest for structural reasons and will be key to the experimental implementation we devise.
The curved upper curve is given by a smooth interpolation between the identity channel and the SWAP channel that simply swaps the outputs on $A$ and $B$. More precisely it is given by $\U_{AB} =  SWAP^\alpha$ for $0 \leq u \leq 1$ and $0 \leq \alpha \leq 1$. The analytical relationship between $u$ \& $\bar{u}$ for the upper curve is
\begin{equation}\label{eqn:upper-boundary-of-cup-set}
    (u,\bar{u}) = (u, 3 + u - 2\sqrt{1 + 3 u}).
\end{equation} 
The analytical relationship between $u$ and $\bar{u}$ for the lower curves is linear, as shown in Figure \ref{fig:quantum-banana}. The lower right curve is given by $\U_{AB} = CNOT_{AB}^\alpha$, over the domain $\frac{1}{3} \leq u \leq 1$. While the left curve is given by $\U_{AB} = CNOT_{BA}^\alpha \circ CNOT_{AB}$ over the domain $0 \leq u \leq \frac{1}{3}$. The derivations of the boundary curves are provided in Appendix \ref{sec:analytcal-form-marginals}.

\subsection{The reversible CUP-set}
A general reversible quantum CUP-set $\C_r$ (where $d_A$, $d_B$ and $d_X$ are not necessarily equal) is given by considering the marginals of the set of globally reversible channels. For $d_X < d_A d_B $ this set will be strictly larger than the set of isometric channels. In fact, as for classical theory, we can always write any reversible channel $\R$ as the convex combination of isometries $\R = \sum_i^r p_i \V_i$ (see Corollary \ref{cor:rev-upper-bound}). Therefore we can think of the reversible CUP-set $\C_r$ as introducing auxiliary classical randomness to the isometric CUP-set $\C$. Any point $(u,\bar{u})$ in $\C_r$ will obey the same upper bound as $\C$ (see Corollary \ref{cor:rev-upper-bound}) -- adding randomness does not increase our ability to clone information. However, the existence of a non-trivial lower bound for $\C_r$, and therefore the possibility of perfect hiding, will depend on the values of $d_A$, $d_B$ and $d_X$.
\subsubsection{Perfect hiding with classical randomness}
In the case $d_X=d_A=d$ and $d_B = d^2$, the reversible CUP-set $\C_r$ contains the point $(0,0)$ and perfect hiding can be achieved. However, as the channel is both non-unitary and not isometric, it does not constitute a violation of the no-hiding theorem. The following channel illustrates the perfect hiding channel when $d=2$, and can be generalised.

Labeling the 4 Pauli operators on a single qubit as $\{P_i \} = \{\I,X,Y,Z\}$ we randomly apply an operator to the input state and record which to a classical register, such that
\begin{equation}
    \R(\rho) = \frac{1}{4} \sum_i^4 P_i \rho P_i \x \ketbra{i}.
\end{equation}
where $\{\ketbra{i}\}$ are the four computational basis states on two qubits. This channel has maximally mixed marginals, $\tr_A[\R(\rho)]= \I/2$ and $\tr_B[\R(\rho)] = \I/4$. Thus $(u,\bar{u})=(0,0)$. However there exists a quantum channel, $\R'$, such that $\R'\circ\R(\rho)=\rho$ for any state $\rho$. Physically, $\R'$ is implemented by measuring the classical register, $B$, and applying the corresponding Pauli operator to system $A$, then discarding the register. The Kraus operators, $\R'(\cdot) = \sum_i R_i' \cdot R_i'^\dag$, for this channel will be of the form:
\begin{equation}
    \{R_i'\} = \{ P_i \x \bra{i} \}
\end{equation}
It is readily seen that $\sum_i R_i'^\dag R_i' = \I$.

We can connect any channel, to a isometric CUP-set in a higher dimension through the stinespring dilation.
The following isometry
\begin{equation}
    V = \frac{1}{4} \sum_i^4 P_i \x \ket{i}_{B} \x \ket{i}_{C}
\end{equation}
gives $\R(\rho) = \tr_{C}[ V \rho V^\dag ]$ where the dimension of subsystem $C$ is $d_C = 4$. However by tracing out the $A$ subsystem, we find $\rho$ can be completely recovered in $BC$. In fact, any bipartite combination of the subsystems $A$, $B$ and $C$ defines a pair of marginal channels for the isometric CUP sets $\C$ with dimensions $(2,16)$ or $(4,8)$. The lower bound on isometric CUP-sets given in Theorem \ref{thm:cup-bounds} then guarantees that there is no arrangement of $A$, $B$ and $C$ such that both marginals are completely depolarising -- confirming that quantum information cannot be completely hidden, and can always be recovered fully in the unitary dynamics of the larger system.

\subsubsection{Boundaries of the simplest reversible CUP-set}
In the case $d_X=d_A=d_B=2$, the reversible quantum CUP-set $\C_r$ is quite similar to $\C$. It has exactly the same upper boundary but different lower boundaries which are again straight lines. We have the following analytical bounds for reversible CUPs of these dimensions
\begin{equation}
    \frac{1}{3} \leq u + \bar{u} \leq 1.
\end{equation}
Where the lower bound can be found algebraically from the general circuit decomposition of a unitary on two qubits and using the characterisation theorem of reversible channels.

The lower bounding curves are straight lines, and given by considering the marginal unitarities of the reversible channel $\R(\rho)=\U_{AB}(\rho \x \frac{\I}{2})$. The right lower surface is given by $\U_{AB} = CNOT_{AB}^\alpha$ over the domain $ \frac{1}{3}\leq u \leq 1$. The middle lower surface is $\U_{AB} = CNOT_{BA}^\alpha \circ CNOT_{AB}$ for $0 \leq u \leq \frac{1}{3}$. Finally, the left surface is given by $\U_{AB} = CNOT_{AB}^\alpha \circ CNOT_{BA} \circ CNOT_{AB}$ for $u=0$.

A similar construction to the lower boundaries of this reversible CUP set appears in the context of interleaved fidelity randomized benchmarking \cite{dubovitskii2022partial}.

\begin{figure*}[t]
	\centering
	\begin{subfigure}[t]{0.5\linewidth}
		\centering
		\begin{equation*}
			\begin{array}{c}
				\Qcircuit @C=1.2em @R=1.2em {
					\lstick{\ketbra{0}} & \qw &  \gate{H} &  \ctrl{2}   &  \gate{H} &  \rstick{\ev{Z_{i,j}}}  \qw \\
					\lstick{\ketbra{0}} &  \gate{X^{i}} &  \gate{\E} &  \qswap \qwx  & \qw  & \blacktriangleright \qw \\
					\lstick{\ketbra{0}} &  \gate{X^{j}} & \gate{\E} &  \qswap \qwx   & \qw  & \blacktriangleright \qw \\
				}
			\end{array}
		\end{equation*}
		\caption[b]{For $X^1 = X$ \& $X^0 = id$, four settings of the above circuit give an estimation of $ \gamma(\E(\I/2))$ through the relation $ \frac{1}{4}( \ev{Z_{0,0}} + \ev{Z_{0,1}} + \ev{Z_{1,0}} + \ev{Z_{1,1}} ) = \gamma(\E(\I/2))$.}
	\end{subfigure}\hfil
	\begin{subfigure}[t]{0.5\linewidth}
		\centering
		\begin{equation*}
			\begin{array}{c}
				\Qcircuit @C=1.2em @R=1.2em {
					\lstick{\ketbra{0}}  & \qw       & \qw       &  \gate{H}  & \ctrl{3} & \ctrl{4} & \gate{H} & \rstick{\ev{Z}} \qw \\
					\lstick{\ketbra{0}} & \gate{H} & \ctrl{1} &  \gate{\E}  &  \qswap &  \qw &  \qw   & \blacktriangleright \qw \\
					\lstick{\ketbra{0}}  &  \qw      & \targ  &  \qw      &  \qw    &  \qswap     & \qw      & \blacktriangleright \qw \\
					\lstick{\ketbra{0}}  & \gate{H} & \ctrl{1} &  \gate{\E}  &  \qswap &  \qw &  \qw & \blacktriangleright \qw \\
					\lstick{\ketbra{0}}  &  \qw      & \targ  &  \qw      &  \qw     &  \qswap    & \qw      & \blacktriangleright \qw \\
				}
			\end{array}
		\end{equation*}
		\caption{The above circuit gives an estimation of the Choi state purity through $\ev{Z} = \gamma(\J(\E))$.}
	\end{subfigure}
	\caption{Circuits for estimation of unitarity $u(\E)$ of single qubit channel $\E$ through state purity relations.} \label{fig:choi-state-circuits}
\end{figure*}
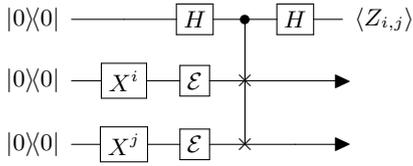
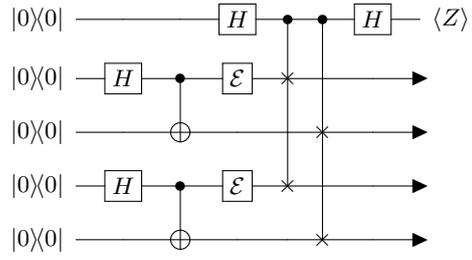

\subsection{Boundaries of the full CUP-set}
Finally, the full quantum CUP-set $\C_*$ is given by the marginal unitarities of any 1 to 2 qubit quantum channel. The upper boundary of $\C_*$ is given in Equation (\ref{eqn:upper-boundary-of-cup-set}), which it shares with both $\C_r$ and $\C$. Any point $(u,\bar{u})$ below the upper boundary is part of the full CUP-set. This is readily seen by considering a partially depolarizing channel on each output subsystem, as discussed in Section \ref{sec:depolarization}.

\section{Direct quantum CUP-set estimation through state purity measurements}\label{sec:estimation-state-purity}
Having established the relationship between CUP-sets, no-go theorems and quantum incompatibility, we now address the estimation of quantum CUP-sets. In this section, we use formulations of unitarity in terms of quantum state purities to directly estimate the individual terms.  Our simulations using IBMQ focus on two qubit systems, as this allows for the smallest non-trivial quantum CUP set. The minimal circuit decompositions  generating the boundary of the isometric quantum CUP-sets are shown in Appendix~\ref{append:figures} (Fig.~\ref{fig:lower-left-cup-set-circuit}, Fig.~\ref{fig:lower-right-cup-set-circuit}).

\subsection{Effects of noise}\label{sec:depolarization}
The methods for estimating CUP-sets we employ can be separated into two stages: (i) the preparation of the channels that generate the CUP-set, (ii) the estimation of the prepared channel's unitarity. There will be errors associated with both (i) and (ii). The errors in (i) are our primary interest as they place a limit on the device's performance at estimating CUPs. However for the direct methods we cannot easily distinguish between these errors, so refer to a noisy version, $(\cdot)_N$, of the whole process 
$(u_N,\bar{u}_N)$ for estimating $({u},{\bar{u}})$.

The simplest way to model how noise affects CUP-sets is through a  depolarizing channel given by
\begin{equation}
	\D_p := (1-p) id + p \D
\end{equation}
where $id(\rho)=\rho$ \& $\D(\rho) = \sigma$, for $\sigma$ another fixed quantum state. Given $u(id\circ \E) = u(\E \circ id)=u(\E)$ and $u(\D \circ \E) = u(\E \circ \D)=u(\D)=0$ then for any CUP-set we have
\begin{equation}
	\begin{split}
		(u_N,\bar{u}_N) &= (u(\D_{p_A} \circ \E),u(\D_{p_B} \circ \bE)), \\
		&= ((1-p_A)^2 u,(1-p_B)^2 \bar{u}).
	\end{split}
\end{equation}
Therefore by varying $p_A$ \& $p_B$ independently a CUP-set can be projected towards either axis, or towards the origin. As this allows us to reach any point in the full CUP-set ($\C_*$), we can use depolarization as a crude way to quantify how `noisy' an estimated CUP-set is ($\C$ or $\C_r$).

\subsection{Estimation through complementarity formulation}

For any quantum channel, $\E$, (with input dimension $d_X$) we can express the unitarity in terms of purities as
 \begin{equation}\label{eqn:puritys-marginal-unitarities-main-text}
 u(\E)= \frac{d_X}{d_X^2-1}\left( d_X \gamma( \tilde{\E} (\frac{\I}{d_X})) - \gamma(\E (\frac{\I}{d_X}))\right),
 \end{equation}
where $\tilde{\E}$ is any channel complementary to $\E$ \cite{cirstoiu2020robustness}. For the isometric CUP-set, $\C$, any compatible pair of channels $(\E, \bE)$ will be complementary to each other. Therefore, by estimating the two purity terms in Equation~\ref{eqn:puritys-marginal-unitarities-main-text} we get the point $({u},{\bar{u}})$.

The purity of a quantum state can be estimated through a SWAP test \cite{nielsen2002quantum}.
For two unknown quantum states, $\rho$ \& $\sigma$, the following circuit performs a SWAP test of the states
\begin{equation}\label{eqn:swap-test-circ}
    \begin{array}{c}
        \Qcircuit @C=1.2em @R=1.6em {
            \lstick{\ketbra{0}} &  \gate{H} &  \ctrl{2}   &  \gate{H}  & \rstick{\ev{Z}} \qw \\
            \lstick{\rho} &  \qw &  \qswap \qwx  &  \qw  &   \blacktriangleright \qw \\
             \lstick{\sigma} & \qw &  \qswap \qwx   &  \qw &  \blacktriangleright \qw \\
    }
    \end{array}
\end{equation}
giving $\ev{Z} = \tr[\rho \sigma]$, for the expectation value of Pauli $Z$ measured on the first qubit. The central gate is the controlled $\swap$ (or Fredkin) gate.

With $\rho=\sigma=\E(\I/2)$ or $\rho=\sigma=\bE(\I/2)$ (restricting to the case $d_X = d_A = d_B =2$), we can use the SWAP test circuit on a quantum device to get direct, albeit noisy, estimation $(\noisy{u},\noisy{\bar{u}})$ of a point $({u},{\bar{u}})$ of the CUP-set. This however requires the preparation of the maximally mixed state, which we discuss in Sec.~\ref{secV:prepartionmm}.

{
\begin{figure*}[t]
	\centering
	\begin{subfigure}[t]{0.5\linewidth}
		\centering
		\includegraphics[width=7cm]{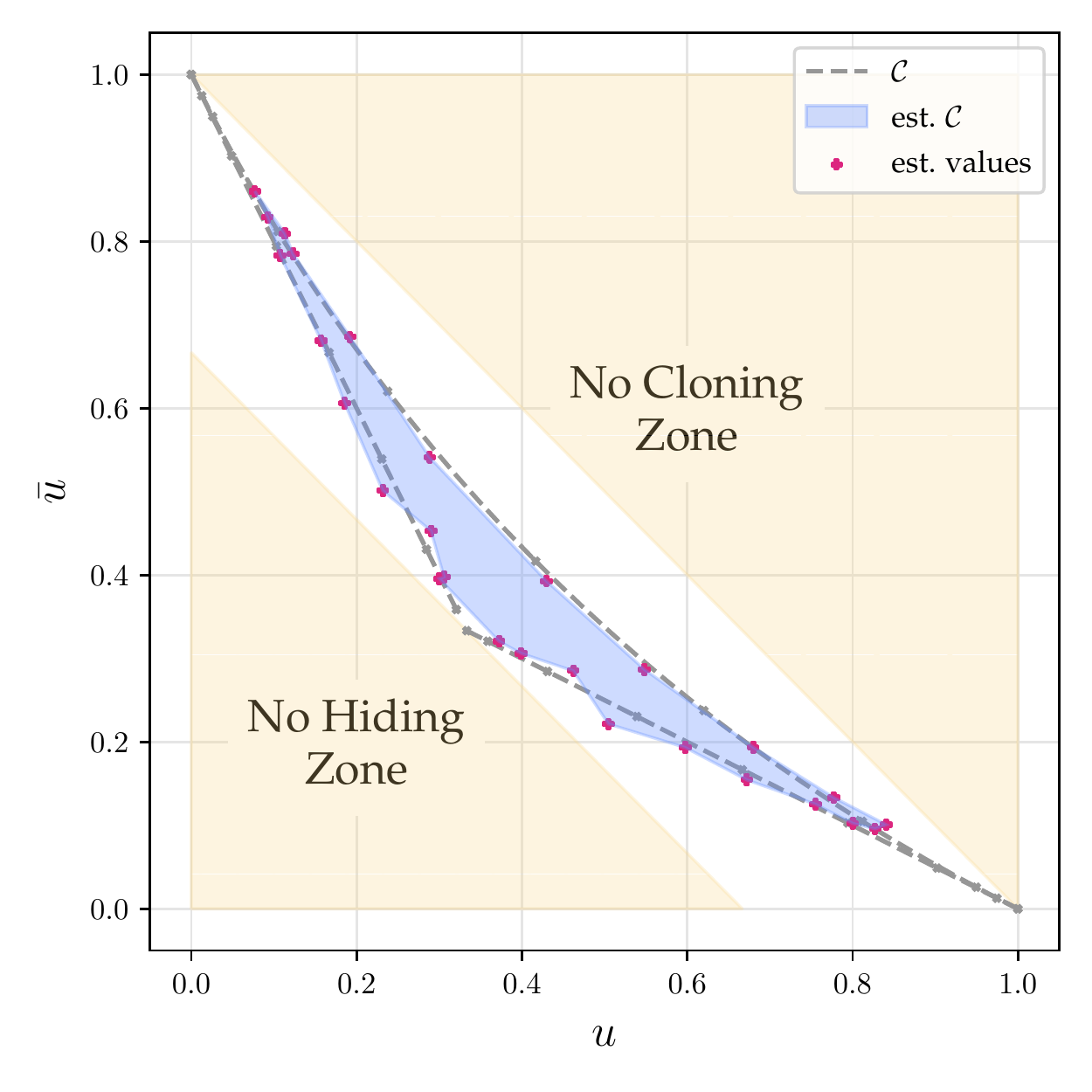}
		\caption{CUP-set ($\C$), estimated with complementary channel method.}
	\end{subfigure}\hfil
	\begin{subfigure}[t]{0.5\linewidth}
		\centering
		\includegraphics[width=7cm]{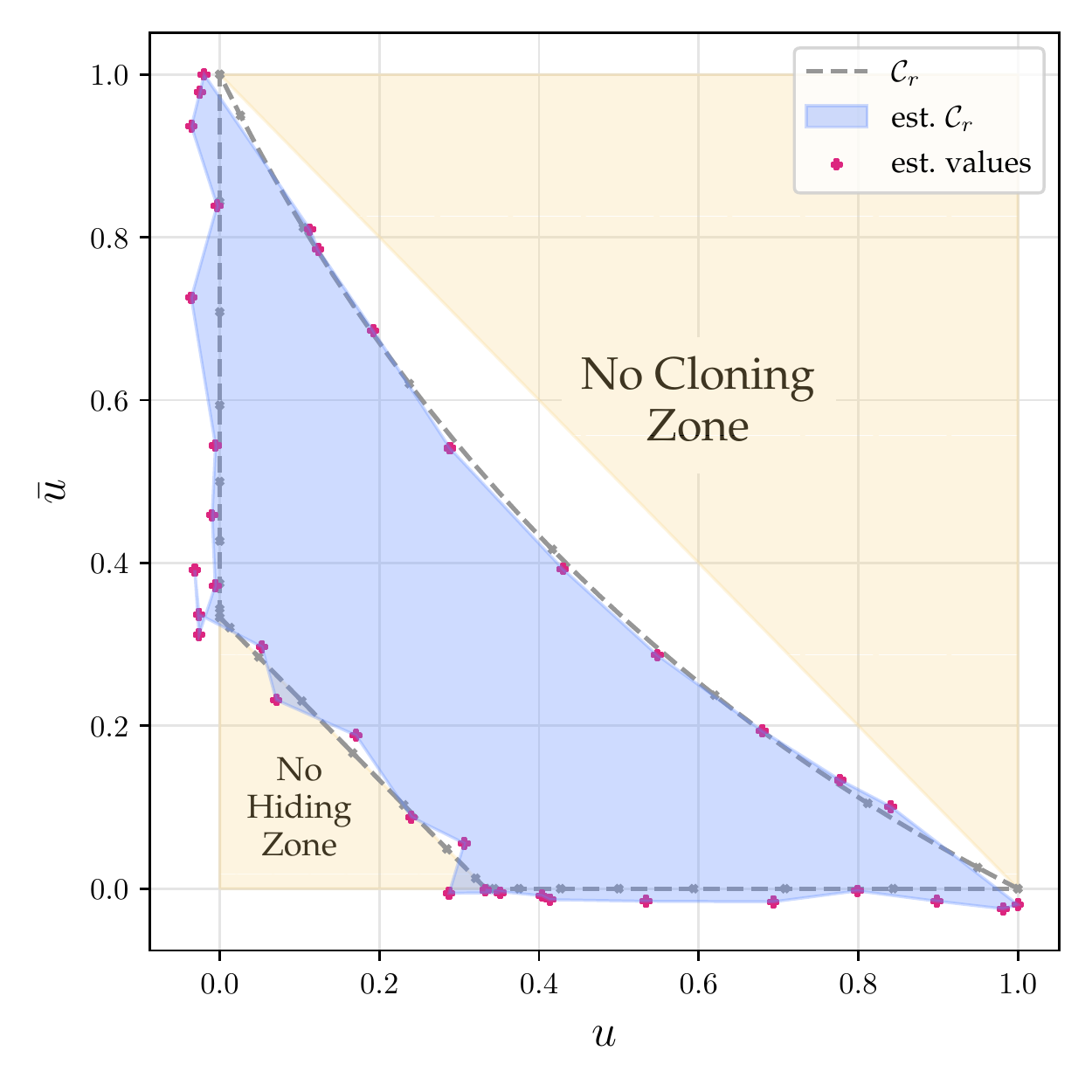}
		\caption[b]{Reversible CUP-set ($\C_r$), estimated with Choi state method.}
	\end{subfigure}
	\caption{\textbf{Direct estimation of quantum CUP-sets.} The simplest quantum CUP-sets are experimentally estimated directly through SWAP test schemes. A best fit depolarising noise model has been applied to each surface (see Table \ref{table:depolar-fit-values}).}
    \label{fig:discrete-ibm-Belem-SWAP-test}
\end{figure*}
}

\subsection{Estimation through Choi state formulation}

For the reversible CUP-set, $\C_r$, the resulting compatible pair of channels $(\E,\bE)$ are not necessarily complementary to each other. While it is straightforward to derive complementary channels for the families of channels we consider, the number of purity terms to be estimated from Equation (\ref{eqn:puritys-marginal-unitarities-main-text})  doubles compared to $\C$. Further, these new complementary channels will necessarily have a larger dimension, thereby increasing the complexity of the SWAP test. However equivalently, and perhaps more naturally, we can formulate an approach using only the channels $(\E,\bE)$ through the Choi-Jamiołkowski isomorphism.

For any quantum channel $\E$ (with input dimension $d_X$) we have
\begin{equation}\label{eqn:purity-choi-to-complementary}
    u(\E)= \frac{d_X}{d_X^2-1} ( d_X \gamma(\J(\E)) - \gamma(\E (\frac{\I}{d_X})))
\end{equation}
where $\J(\E)$ is the Choi-Jamiołkowski state of the channel $\E$ (given in Appendix \ref{append:unitarity-props}) \cite{cirstoiu2020robustness}.

Restricting to $d_X = d_A = d_B =2$, from Equation~(\ref{eqn:swap-test-circ}) we can estimate the first purity by preparing two copies of the Choi state e.g. $\rho=\sigma=\J(\E)$. For a channel with dimension $d$, the Choi state has dimension $d^2$, therefore the number of target qubits in the controlled $\swap$ for $\C_r$ is doubled compared to estimating $\C$. The second term in Equation (\ref{eqn:swap-test-circ}) can be obtained from $\rho=\sigma=\E(\I/2)$. As this process must be repeated for $u(\bE)$, estimating points on $\C_r$ will generally require twice the number of experiments of $\C$.

\subsection{Preparation of the maximally mixed state and experimental results}
\label{secV:prepartionmm}
Both of the above methods require the preparation of the maximally mixed state. With a unitary circuit, we can do this (i) statistically, by averaging the results of experiments performed on computational basis states, or (ii) by discarding information about a prepared pure state (e.g. a marginal state of a Bell state). The former method requires more experiments while the later introduces further uncertainty into the estimation.

We use (i) to estimate the isometric CUP-set $\C$ using complementarity formulation, as it requires a smaller system size. The exact circuits for the complete purity estimations are given in Figure \ref{fig:choi-state-circuits}(a). We then experimentally estimate a range of CUPs on the surface of the CUP-set $\C$  on a simulated IBM device. The results of this experiment are shown in Figure \ref{fig:discrete-ibm-Belem-SWAP-test}(a) where a partially depolarizing model has been fitted to each surface.

Then we pair (ii) with estimation through the Choi state. The exact circuits for this method are given in Figure \ref{fig:choi-state-circuits}(b). We again estimate a range of CUPs on the surface of the reversible CUP-set $\C_r$. The results of this experiment are shown in Figure \ref{fig:discrete-ibm-Belem-SWAP-test} (b) where a partially depolarizing model has been fitted to each surface.

\subsection{Discussion of direct methods}
The direct methods we have implemented have a few sources of errors. For any estimated CUP, $(\noisy{u},\noisy{\bar{u}})$, the largest error, in terms of the size of intended operation, will be on the controlled SWAP gate(s). Secondly, as the SWAP test relies upon the final measurement being taken in the correct basis, the direct methods are sensitive to even small final SPAM errors.

Examining Figure \ref{fig:discrete-ibm-Belem-SWAP-test} we observe variance in the data, even after a round of averaging over 100 experimental runs has been performed. The lack of robustness to SPAM errors, means that we cannot ascribe this variance to one source -- it may come primarily from SPAM ($\noisy{u} \approx \noisy{u}(\E)$) or it may occur in the preparation of the channel itself ($\noisy{u} \approx {u}(\noisy{\E})$). This is the main weakness with the direct methods, compared to methods we discuss in the following section.

However, we note that even after the depolarizing fit is applied, for both $\C$ and $\C_r$ the noisy estimated CUP-set is found strictly below the no-cloning upper bound, and therefore in the full CUP-set $\C_*$. 

The size of parameters needed for the depolarizing fit let us compare between the estimation of $\C$ and $\C_r$. From Table \ref{table:depolar-fit-values}, the estimated depolarization is two to three times higher for $\C_r$. As the channels required to generate $\C_r$ are very similar to $\C$, we can prescribe this increase directly to the larger overhead and complexity of the protocol for $\C_r$.

Finally, we note that  the direct methods rely on a SWAP test(s), and are therefore not efficiently scaleable in number of qubits.

\section{ SPAM robust CUP-set estimation}

\begin{figure*}[t]
	\centering
	\begin{subfigure}[t]{0.5\linewidth}
		\centering
		\includegraphics[width=7cm]{CUP-set-IBM-gateset-RB-simulation-curve-fit.pdf}
		\caption{CUP-set ($\C$)}
	\end{subfigure}\hfil
	\begin{subfigure}[t]{0.5\linewidth}
		{\centering
		\includegraphics[width=7cm]{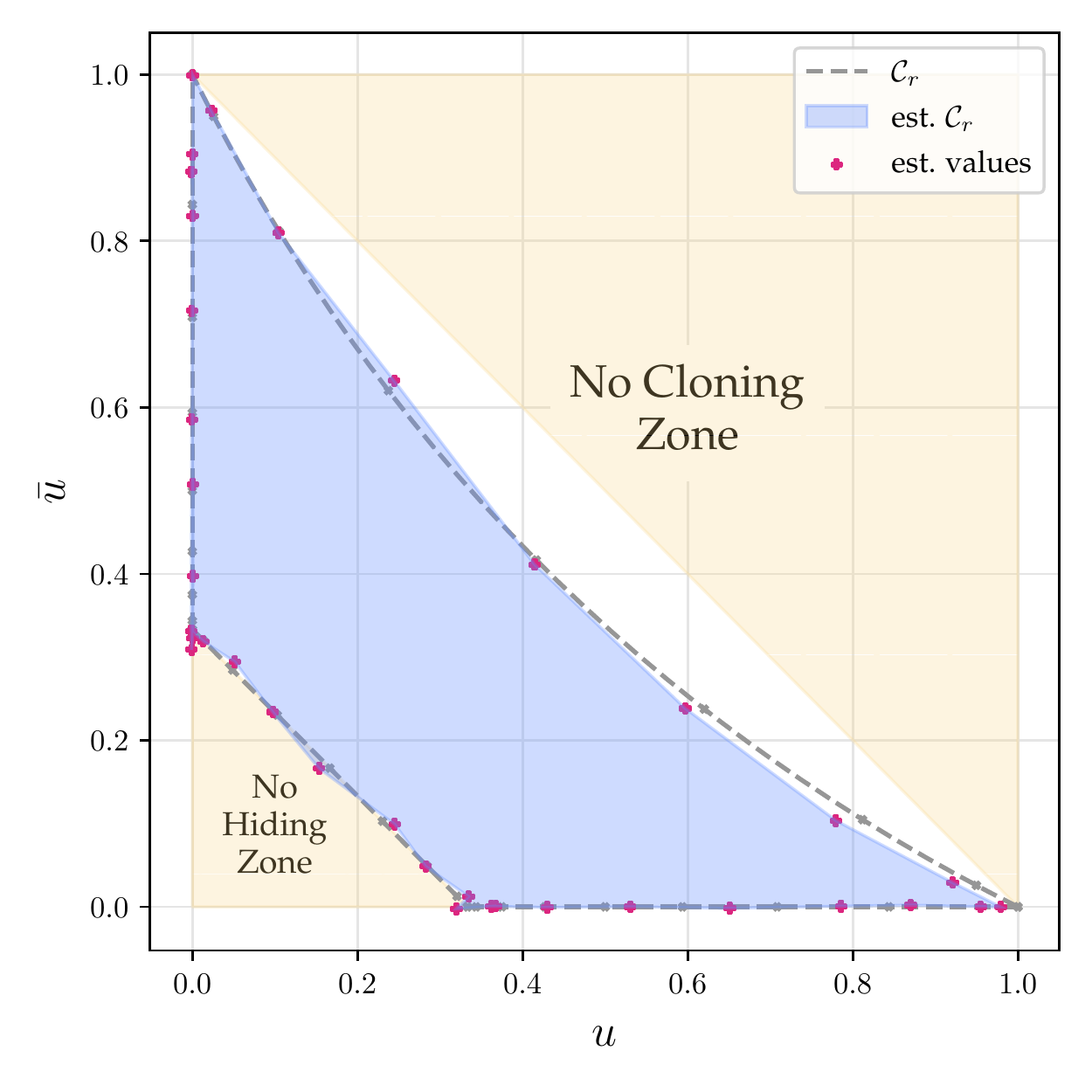}
		\caption[b]{Reversible CUP-set ($\C_r$)}}
	\end{subfigure}
	\caption{\textbf{SPAM robust estimation of quantum CUP-sets.} The simplest quantum CUP-sets are experimentally estimated through an interleaved unitarity randomized benchmarking scheme. A best fit depolarising noise model has been fitted to each surface (see Table \ref{table:depolar-fit-values}), where each surface is produced from 9 pairs of experimental values.} \label{fig:discrete-ibm-Belem-iRB}
\end{figure*}

With the direct method of the previous section, we make no distinction between errors in the implementation of the target channel, and errors in the estimation protocol including initial state preparation and final measurement SPAM errors. This severely limits the usefulness of the direct method as a measure of whether a device obeys the CUP-set's informational bounds. For example, in the extreme, we could imagine a device that implements any quantum channel perfectly but has SPAM errors such that it applies a final Hadamard transform on all qubits before measurement. With the direct SWAP test method, this would only generate the point $(0,0)$ on the CUP-set diagram. From this we might conclude the device is not acting as a closed quantum system -- when in fact, prior to measurement, it was performing perfectly.

With the above in mind, in this section we consider protocols to estimate quantum CUP-sets that are robust to SPAM errors.
However will see that the SPAM robust protocols come with a cost of much larger operational overheads, and introduce different sources of potential noise compared to the direct methods.

\subsection{Estimation through randomized benchmarking}\label{sec:estimation-through-RB}

Through randomized benchmarking (RB) \cite{wallman2015estimating} we can estimate the unitarity $u(\Lambda_C)$ of the average error channel $\Lambda_C$ induced by a computational gate-set $\{\U_C\}$ generating the Clifford group.

If we interleave a target channel of fixed dimension, $\E$, between rounds of random Clifford unitaries in the RB protocol, we can estimate the unitarity of the joint channel $u({\E} \circ \Lambda_C)$. Therefore in the limit $\Lambda_C = id$ the interleaved RB protocol returns an exact estimation of $u({\E})$. More generally, as unitarity is proportional to the Hilbert-Schmidt norm of the channel's matrix representation we also have the relation $u(\E\circ \Lambda_C) \leq u(\E)||\Lambda_C||_{\infty}$, where $||\Lambda_C||_{\infty}$ corresponds to the largest singular value of the average noisy Clifford gate-set channel. This may also be determined, for example via spectral methods as in \ref{sec:estimation-spec-tomo} to obtain more precise bounds for $u(\E)$ in the presence of noisy Clifford operations. 

Applying interleaved RB to an estimation of the CUP-set follows from the above. However, in addition, it involves an interleaved implementation of $\E$ using an ancilla initialisation, the global unitary $\U_{AB}$ and a partial trace. We require the additional assumption that we can perform mid-circuit resets, $\cal{D}(\rho):=\ketbra{0}$, and that the noisy version of these resets are incoherent -- in that none of the state $\rho$ is carried through even if $\D$ induces some larger error on the device. This allows us to include the error $\D$ in $\Lambda_C$.

Through interleaved RB we can estimate the unitarity of the following channel in a SPAM robust manner
\begin{equation}
    \noisy{\E}(\rho) = tr_B \circ \Lambda_{AB} \circ \U_{AB} \circ \Lambda_C ( \rho \x \ketbra{0} ),
\end{equation}
where $\Lambda_{AB}$ is the noise channel associated with the experimental implementation of $(\E,\bE)$, the channels generating the (isometric) CUP-set. Therefore, in the noiseless limit $\Lambda_C = \Lambda_{AB} = id$  the following protocol returns exactly $u({\E})$ in the isometric CUP-set $\C$.

\begin{protocol}[Interleaved unitarity RB for channel $\E(\rho) := \tr_B \circ \ \U_{AB}(\rho \x \ketbra{0})$.]
    \label{protocol:inter-uRB-E_A}
    \begin{enumerate}[wide, labelwidth=!, labelindent=0pt]
        \setlength{\itemsep}{2pt}
        \setlength{\parskip}{0pt}
        \setlength{\parsep}{0pt}
        \item \textbf{Prepare} the system in the state $\rho_A \x \ketbra{0}_B$.
        \item \textbf{Select} a sequence of length $k$ of random elements of the Clifford group, $\{\U_{C,i}\}$, on subsystem $A$, starting with $k=1$,  while performing a reset on subsystem $B$ after every gate. E.g. for each gate $\U_{C,i} \x \cal{D}$
        \item \textbf{Interleave} the bipartite unitary $\U_{AB}$ after every Clifford gate (such that the final gate is a Clifford gate).
        \item \textbf{Estimate} the square $(m_A)^2$, of the expectation value of an observable $M_A$ on subsystem $A$ for this particular sequence of gates.
        \item \textbf{Repeat 1, 2, 3 \& 4} for many random sequences of the same length, finding the average estimation $\mathbb{E}_\rho[(m_A)^2]$ of $(m_A)^2$.
        \item \textbf{Repeat 1, 2, 3, 4 \& 5} increasing the length of the sequence $k$ by 1.
        \item \textbf{Fit} the data $\mathbb{E}_\rho[(m_A)^2] = c_0 + c_1 s^{k-1}$ where $c_0,c_1$ are real constants, and find the estimated unitarity, $s$.
    \end{enumerate}
\end{protocol}
The above protocol gives the decay parameter that estimates $s=u(\noisy{\E})$ for the noisy channel $\noisy{\E}$ which includes the device errors from preparation of the channel $\E$, but also protocol-specific errors coming from the noisy random Cliffords.

The protocol for $\bE$ is very similar (see Appendix \ref{append:interleaveduRB}) but requires an additional $\swap$ operation after each interleaved unitary, and the resource costs associated with it. Allowing us to estimate $u(\noisy{\bE})$ for
\begin{equation}
    \noisy{\bE}(\rho) = tr_A \circ \Lambda_{AB} \circ \U_{AB} \circ \Lambda_C ( \rho \x \ketbra{0} ).
\end{equation}
Proofs showing that the above protocols indeed produce estimates of CUP sets can be found in Appendix \ref{append:interleaveduRB}. An examination of how the protocols behave under gate independent noise is given in Appendix \ref{append:general-noise-in-protocol}.

We implement these protocols on a simulated version of the \textsf{ibm belem} device, in an efficient manner (see Appendix \ref{sec:efficient-protocols}) \cite{dirkse2019efficient}. The results of the experiment for $\C$ are shown in Figure \ref{fig:discrete-ibm-Belem-iRB}(a), and for $\C_r$ in Figure \ref{fig:discrete-ibm-Belem-iRB}(b) where a depolarization model has been fitted to each surface.

\subsection{Estimation through spectral methods}\label{sec:estimation-spec-tomo}
We next discuss if spectral methods (that estimate eigenvalues of a channel) are an alternative SPAM robust path to estimate CUP-sets.  We include several results that link unitarity to quantities estimable through spectral tomography  \cite{helsen2019spectral} which may be of independent interest.

\subsubsection{Unitarity and channel eigenvalues}

Any quantum channel $\E$: $\B(\H) \to \B(\H)$  on a system $\H$ of dimension $d$ has a (Liouville) representation as a $d^2\times d^2$ matrix. Its non-unital $d^2-1 \times d^2-1$ block $T_{\E}$  has eigenvalues $\{\lambda_i(\E)\}$ that are real or come in complex conjugate pairs \cite{wolf2010inverse}. The following bound (with proof given in  Appendix \ref{append:unitarity-props}) holds for all quantum channels of fixed dimension.

\begin{lemma}\label{lemma:eigenvalue-lower-bound-rand-unit}
    For any quantum channel $\E$, of fixed dimension $d$ and unitary channels $\U_i$ and $\U_j$ we have 
    \begin{equation}
        u(\E) \geq  \sum_{k=1}^3 \frac{\abs{\lambda_k(\U_i \circ \E \circ \U_j)}^2}{d^2 - 1}.
    \end{equation}
\end{lemma}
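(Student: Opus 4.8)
The plan is to reduce the statement to a classical matrix inequality between the eigenvalues and the Hilbert--Schmidt norm of the non-unital block $T_{\E}$, after exploiting the unitary invariance of the unitarity. First I would recall the Liouville (superoperator) representation of the unitarity. Writing a trace-preserving channel in an orthonormal Hermitian basis whose first element is proportional to the identity puts its superoperator in the block form
\begin{equation}
    \begin{pmatrix} 1 & 0 \\ \vec{s} & T_{\E}\end{pmatrix},
\end{equation}
and a short Haar average, using $\int \dd\psi\, \psi \x \psi = \frac{1}{d(d+1)}(\I + \swap)$, collapses the defining integral to
\begin{equation}\label{eqn:u-is-HS-of-T}
    u(\E) = \frac{1}{d^2-1}\,\tr[T_{\E}\hc T_{\E}].
\end{equation}
This is precisely the purity/Liouville characterisation established in Appendix~\ref{append:unitarity-props}, so I would simply invoke it rather than re-derive it.

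Next I would use that $u$ is invariant under pre- and post-composition with unitary channels (Lemma~\ref{lemma:unitary-invariance}), setting $\E' := \U_i \circ \E \circ \U_j$ so that $u(\E) = u(\E')$. It is worth flagging why the sandwich matters: on the traceless subspace the unitaries act as orthogonal conjugations that are generally \emph{not} inverse to one another, so they leave $\tr[T\hc T]$ fixed while genuinely changing the spectrum of the block. Thus the eigenvalues $\lambda_k(\E') = \lambda_k(T_{\E'})$ depend on both $\U_i$ and $\U_j$, which is exactly what makes the bound useful when only a few eigenvalues are accessible to spectral tomography.

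The core step is the matrix inequality $\sum_k |\lambda_k(M)|^2 \leq \tr[M\hc M]$, valid for any square matrix $M$ (Schur's inequality: the sum of squared eigenvalue moduli is bounded by the sum of squared singular values, with equality iff $M$ is normal). Applying this to $M = T_{\E'}$ and then discarding all but three of the nonnegative terms $|\lambda_k(\E')|^2$ gives
\begin{equation}
    \begin{split}
        \sum_{k=1}^3 |\lambda_k(\E')|^2 &\leq \sum_{k} |\lambda_k(\E')|^2 \leq \tr[T_{\E'}\hc T_{\E'}] \\
        &= (d^2-1)\,u(\E') = (d^2-1)\,u(\E),
    \end{split}
\end{equation}
using \eqref{eqn:u-is-HS-of-T} and unitary invariance in the last line. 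Dividing by $d^2-1$ yields the claim. For $d=2$ the truncation to three terms is vacuous since $T$ is $3\times 3$, and the only slack relative to equality is the non-normality of $T$.

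The main obstacle is a mild bookkeeping one: justifying the eigenvalue/singular-value inequality for the generically non-normal block $T_{\E'}$, and being careful that the $\lambda_k$ refer to eigenvalues of the \emph{non-unital} block (so the trivial eigenvalue $1$ forced by trace preservation is excluded). A secondary point worth stating explicitly is why the three summed eigenvalues may be chosen freely: every omitted term is nonnegative, so any selection of three eigenvalues yields a valid lower bound, which is what lets one plug in whichever eigenvalues spectral tomography actually returns. Once these are settled, the argument is immediate.
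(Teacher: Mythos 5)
Your proof is correct and takes essentially the same route as the paper's: both reduce to the traceless block $T$, apply the eigenvalue--singular-value inequality (your ``Schur's inequality'' is the squared-modulus case of the Weyl majorant theorem the paper cites), and use unitary invariance to identify $\tr[T_{\E'}\hc T_{\E'}]$ with $(d^2-1)u(\E)$. Your explicit remark that discarding all but three nonnegative terms preserves the bound is a small point the paper's proof leaves implicit, but it does not change the argument.
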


Further, for a single qubit channel, $\E$, we can improve upon this bound:

\begin{theorem}[Variational Formulation]\label{theorem:max-eigenvalue-unitarity}
    For any single qubit quantum channel $\E$, maximising over all single qubit unitary channels $\{\U_i\}$ and $\{\U_j\}$ gives
    \begin{equation}
        u(\E) = \max_{\U_i,\U_j} \sum_{k=1}^3 \frac{\abs{\lambda_k(\U_i \circ \E \circ \U_j)}^2}{3}.
    \end{equation}
\end{theorem}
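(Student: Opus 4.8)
The plan is to prove the two inequalities separately. The direction $u(\E)\ge \max_{\U_i,\U_j}\sum_{k=1}^3 |\lambda_k(\U_i\circ\E\circ\U_j)|^2/3$ is immediate: specialising Lemma~\ref{lemma:eigenvalue-lower-bound-rand-unit} to $d=2$ (so that $d^2-1=3$) gives $u(\E)\ge \sum_{k=1}^3 |\lambda_k(\U_i\circ\E\circ\U_j)|^2/3$ for \emph{every} pair of unitary channels, and taking the supremum over $\U_i,\U_j$ on the right-hand side preserves the inequality. The content of the theorem is therefore that this supremum is actually \emph{attained}, i.e. that some choice of unitaries achieves equality.

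To establish the reverse inequality I would pass to the Bloch (Liouville) picture. For a single qubit the non-unital block $T_\E$ is a real $3\times 3$ matrix acting on Bloch vectors, and a short calculation (using $\int d\hat n\, \hat n\hat n^T=\tfrac13\I$ together with $\tr[(\vec a\cdot\vec\sigma)^2]=2|\vec a|^2$, the translation part dropping out on the traceless input $\psi-\I/2$) yields the standard identity $u(\E)=\tfrac13\tr[T_\E^\dagger T_\E]=\tfrac13\sum_{k=1}^3 s_k^2$, where $s_1,s_2,s_3$ are the singular values of $T_\E$. Under composition with unitary channels the block transforms as $T_{\U_i\circ\E\circ\U_j}=R_i T_\E R_j$, where $R_i,R_j\in SO(3)$ are the rotations assigned to $\U_i,\U_j$ by the two-to-one homomorphism $SU(2)\to SO(3)$. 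Since singular values are invariant under orthogonal factors this re-derives the unitary invariance of $u$, but crucially it also lets me use the rotations to normalise $T_\E$.

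The key step is to exhibit rotations bringing $T_\E$ to diagonal (hence normal) form, so that the eigenvalues of the block coincide with its singular values up to sign. Starting from a singular value decomposition $T_\E=U\Sigma V^T$ with $\Sigma=\operatorname{diag}(s_1,s_2,s_3)$ and $U,V$ orthogonal, I would set $R_i=U^T$, $R_j=V$, giving $R_i T_\E R_j=\Sigma$; for this channel the block eigenvalues are exactly $s_1,s_2,s_3$, so $\sum_k|\lambda_k|^2=\sum_k s_k^2=3u(\E)$ and equality holds. The main obstacle is the determinant bookkeeping: the orthogonal SVD factors may have determinant $-1$, whereas physical unitaries realise only $R_i,R_j\in SO(3)$. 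I would resolve this with a signed (``special'') SVD. When $\det U=\det V=-1$, inserting $D=\operatorname{diag}(1,1,-1)$ via $T_\E=(UD)\,\Sigma\,(VD)^T$ (using $D\Sigma D=\Sigma$) flips both determinants simultaneously, so that $UD,VD\in SO(3)$. When $\det T_\E<0$, forcing $\det U\det V=-1$, I instead absorb a single sign into the smallest singular value, reducing $T_\E$ to $\operatorname{diag}(s_1,s_2,-s_3)$ with $R_i,R_j\in SO(3)$.

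In every case the resulting block is diagonal, its eigenvalues are $s_1,s_2,\pm s_3$, and $\sum_{k=1}^3|\lambda_k|^2=s_1^2+s_2^2+s_3^2=3u(\E)$, so the bound of Lemma~\ref{lemma:eigenvalue-lower-bound-rand-unit} is saturated and the maximum is attained, completing the proof. The degenerate cases where some $s_k=0$ are easier, since a vanishing singular value lets the sign ambiguity be absorbed freely, and the remaining genuine obstacle is precisely the $SO(3)$ determinant constraint handled above.
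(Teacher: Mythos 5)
Your proposal is correct and follows essentially the same route as the paper: the lower bound comes from specialising Lemma~\ref{lemma:eigenvalue-lower-bound-rand-unit} to $d=2$, and the reverse direction is obtained by saturating Weyl's inequality with local rotations that diagonalise the $3\times 3$ unital block, so that its eigenvalues coincide with its singular values up to sign. The only difference is that the paper simply cites the existence of such a diagonalising pair of unitaries (the canonical form of a qubit channel), whereas you supply the argument explicitly via the signed SVD and the $SO(3)$ determinant bookkeeping, which is a welcome filling-in of the cited step.
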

Proof given in Appendix \ref{append:unitarity-props}. The practical application of Theorem \ref{theorem:max-eigenvalue-unitarity} to the channels that generate the CUP-set is shown in Figure \ref{fig:bound-with-units}.

\subsubsection{Estimation of CUP set through spectral tomography}

Putting this together, a spectral protocol to estimate the CUP-set would require the following steps. For any point, estimate the eigenvalues of the channel $\U_i \circ \E \circ \U_j$ through spectral tomography for $N$ different randomly chosen $\U_i$ \& $\U_j$. From Lemma \ref{lemma:eigenvalue-lower-bound-rand-unit}, the set of estimated eigenvalues provide a lower bound on $u(\noisy{\E})$ where $\noisy{\E}$ is a noisy experimental implementation of $\U_i \circ \E \circ \U_j$. Repeat for $\bE$ to obtain a lower bound on $u(\noisy{\bE})$ similarly. For $N \to \infty$, and in practice for at most $N \approx 100$ (see Figure \ref{fig:bound-with-units}), from Theorem \ref{theorem:max-eigenvalue-unitarity} the estimated lower bound becomes an estimation of exactly the required unitarities. 

We performed the above sequence of spectral tomographic experiments on a simulated version of the IBMQ device, \textsf{ibm belem}. However, using a similar number of resources to the interleaved RB protocol, we were unable to extract eigenvalues accurately from the tomographic data. We suspect this was due to the finite sampling of expectation values, as for state vector simulations (without sampling) we were able to extract eigenvalues correctly. While  increasing the number of shots may therefore help, the experimental overhead would be greatly increased compared to the other SPAM robust techniques we consider.

\subsection{Discussion of SPAM robust methods}
We now briefly discuss the limitations of the interleaved randomized benchmarking technique we give for estimating CUP-sets. While the protocol is robust to SPAM errors, it relies on mid-circuit measurements to perform resets which must be incoherent (but can be noisy). Under this assumption, the decay parameter of the protocol gives a robust estimation of the unitarity of the given channel,  $s = u(\noisy{\E})$. If, as we might expect on a NISQ device, the reset allows some coherent information through, then the decay parameter can no longer be directly related to the unitarity, e.g. $s = \noisy{u}(\noisy{\E})$. For further discussion see Appendix \ref{append:general-noise-in-protocol}.

For the channels $(\noisy{\E},\noisy{\bE})$ to be close (in terms of unitarity) to the channels that generate the CUP-set $(\E,\bE)$, we need the error $\Lambda_C$ on one qubit Clifford unitaries to be small. As the error preparing $(\E,\bE)$ should be of similar size to $\Lambda_C$, then we expect that the approximately half of the depolarizing fit required in Figure \ref{fig:discrete-ibm-Belem-iRB} can be attributed to the preparation of $(\E,\bE)$.

\subsection{Comparison with direct methods}
While the SPAM robust methods require an additional assumption about the nature of resets on the device, this is a vast improvement over the direct methods of Section \ref{sec:estimation-state-purity}, where errors arising in the protocol and in the channel preparation could not be separated. The estimation of each CUP-set obtained through interleaved RB is also significantly better in terms of the required depolarizing fit than the direct methods (see Table \ref{table:depolar-fit-values}). The variance in the data points is also significantly lower for interleaved RB, even when performing an additional round of averaging for the direct methods.

Additionally, we see that the interleaved RB protocol is very good at estimating points where $u$ or $\bar{u}=0$, especially compared to the direct methods. This is likely due to the fact that, for the direct methods, these points require the estimation of two non-zero purities for any value of $u$, whereas the SPAM robust methods estimate a single decay parameter.

\section{Outlook}
\subsection{The long arm of purity}
One way of viewing the approach we have taken here, is that we are starting with the concept of purity and applying it with greater and greater abstraction in the sequence
\begin{equation}
\mbox{States} \rightarrow \mbox{Channels}\rightarrow \mbox{Theories}.
\end{equation}
Let us make this more precise.  In any general theory we can begin with an elementary notion of disorder of a state, which can be quantified via the purity $\gamma(x)$. This can now be extended to the channel level for the theory and we obtain the unitarity $u(\E)$ which is the natural generalization of purity. Indeed if we view a state $x$ as itself being a preparation \emph{channel} $1\rightarrow x$ from the trivial system to $S$ then it is readily seen that we have that $u(x) = \gamma(x)$ and the two notions coincide. For more non-trivial channels it can be shown \cite{cirstoiu2020robustness} that the unitarity coincides with the conditional purity of the Choi state of $\E$. The unitarity is a variance-based measure of the disorder of a channel from one input system to one output system. 

We next extend this further to consider how order can be shared or distributed amongst subsystems $A$ and $B$ of a theory and unitarity pairs $(u,\bar{u})$, and subsets of channels. Again, this is a generalization of the preceding concept since if $B$ is the trivial system then $(u,\bar{u}) = (u(\E), 0)$, which is just the unitarity of a channel. When applied to \emph{sets} of channels this leads to encodings of no-go results of the theory. In this sense CUP-sets are purity measures of a given physical theory within the space of all operational theories. 

\subsection{Conclusions}
We have derived a simple formulation of information disturbance and incompatibility in quantum theory, given through the set of \textit{compatible unitarity pairs} (CUPs). These pairs of compatible channels can be defined in any generalized probability theory, and they capture key limits of information transformation under the chosen theory. 

We undertook a thorough comparison between CUP-sets under quantum theory, where they are tightly bound, and classical theory where the CUP-set lies on the boundary of the unit square. We then explored the CUP-set for quantum theory in detail, including general bounds on these sets, which we related to quantum no-go theorems.

As the CUP-set encapsulates fundamental incompatibility limits of quantum theory, it may be used as a tool for benchmarking quantum devices. To this aim, we showed how the quantum CUP-set can be estimated through simple and direct purity methods, but also in a SPAM robust way through interleaved randomised benchmarking of unitarity. While estimating many points on the CUP-set may not be an efficient method of benchmarking, the extremal points (given for the qubit case by $(0,1)$, $(1,0)$ and $(1/3,1/3)$), requires just 6 experiments. The extremal points capture both the core CUP-set geometry and the unitarity-based information disturbance relation given in Theorem \ref{thm:cup-bounds}, and therefore are a natural minimal set. Future work will focus on implementing the estimation methods for CUP-set on different quantum hardware.  In particular, we may also consider randomised measurements \cite{elben2022randomized} for direct purity estimation, which would give an additional method to produce the CUP-set, with minimal implementation overhead.

Recently, many theoretical results have analysed how the effect of noise on quantum algorithms results in a computation that can be efficiently simulated classically \cite{stilck2021limitations}. This behaviour remains even for quantum advantage experiments \cite{aharonov2022polynomial}. Similarly, we have seen that noise affects the quantum CUP-set by shifting it towards regions that exhibit classical behaviour such as hiding.  An interesting future direction would be to connect these two aspects and determine if device benchmarking via CUP-sets can provide additional information to bound finite size classical simulability of quantum circuits in the presence of noise.

While we expect classical devices to perform a perfect estimation of the isometric CUP-set, the reversible classical CUP-set relies on a source of randomness to perform perfect hiding. The accuracy of the estimated CUP-set can then be directly related to the bias in the randomness. Therefore the CUP-set formulation may also be useful as a diagnostic tool in assessing the quality of a source of randomness.

Finally, in this work we primarily considered two theories, classical theory and quantum theory, however CUP-sets can be derived for more general physical theories. It would be interesting to see how the structure of CUP-sets varies between different theories.

\section*{Acknowledgments}
We thank Roger Colbeck for useful discussions on perfect hiding in  quantum theory.
MG is funded by a Royal Society Studentship.
DJ was supported by the Royal Society and also a University Academic Fellowship. We acknowledge the use of IBM Quantum services for this work. We thank Matty Hoban and Daniel Mills for useful feedback on this manuscript.


\clearpage
\onecolumngrid

\appendix

\section{General incompatibility and reversibility} \label{append:incompatibility-robust-measures}

\subsection{A general definition of unitarity}
Generalized probability theories (GPTs), provide a broad framework in which one can compare different physical theories and study their fundamental properties from an abstract, often information-theoretic viewpoint \cite{chiribella2016quantum}. Our primary aim is to capture incompatibility, and no-go theorems through measures which, at least for quantum theory, can be efficiently computed and are robust to noise. However our work can be framed in a general GPT setting, which we explore in this section.

A GPT is defined by a closed, convex set $\S$  of states, and an effects space $E$, from which the allowed measurements on $\S$ are constructed. The extremal points of $\S$ are called the pure states, and we denote this set by $\partial \S$. We shall further assume that we can embed both the state space $\S$ and effects space $E$ in a Euclidean vector space, with inner product $\<\cdot, \cdot \>$.  A measurement $\mathcal{M}$ is given by any tuple of effects $\mathcal{M}= \{ m_1, m_2, ..., m_N \}$ with $m_i \in E$ such that $\sum_k^N \ev{m_k,x} = 1$ for all states $x$ in $\S$. The probability of getting an outcome $m_k$ on a state $x$ is given by $p(m_k|x) = \<m_k,x\>$. The dimension $d$ of the state space is given by be the maximal number of completely distinguishable states $\{ x_1,x_2,...,x_d\}$ in $\S$, where a set of states is completely distinguishable if there is a measurement $\mathcal{M}^\sharp = \{ m_1, m_2, ..., m_d \}$ that unambiguously identifies which of the states was measured through its deterministic outcome. We call $\mathcal{M}^\sharp$ a \textit{sharp} measurement.   Any physical process corresponds to a channel $\E$, which is a linear map that sends any valid state $x$ in the input system to another valid state $\E(x)$ in the output system. 

We define the following function, the \emph{purity} $\gamma(x)$ of a state $x$ via
\begin{equation}\label{eqn:general-purity}
\gamma (x) := \max_{\mathcal{M}^\sharp} \sum_k \<m_k , x\>^2,
\end{equation}
where the maximization is taken over all sharp measurements $\mathcal{M}^\sharp=\{m_k\}$ in the theory \cite{chiribella2010probabilistic}. While this optimization is non-trivial it turns out that the optimal measurements are simply the measurement of the pure states in classical theory, and the rank-1 projective measurement in the eigenbasis of the state in the case of quantum theory (see Lemma \ref{lemma:quantum-purity-from-gpt}).  Additionally we can define a generalized maximally mixed state for any GPT obtained by averaging over the pure states $\partial \S$ of the theory
\begin{equation}\label{eqn:general-MM-state}
\eta := \int_{\partial \S} \!\! d\mu(x)\,  x.
\end{equation}
Together Equations (\ref{eqn:general-purity}) \& (\ref{eqn:general-MM-state}) allow for the unitarity of a channel to be calculated for theories which do not have an inner product purity $\ev{x,x}$.
More generally, the constant $\alpha$ in the definition of unitarity given in Equation (\ref{eqn:GPT-unitarity}) will generally depend on the structure of the state space $\S$ and the measure $d\mu(x)$.

The way in which the state space of subsystems relates to the state space of the global system is slightly non-trivial, and the details can be found in \cite{chiribella2016quantum,janotta2014generalized,barrett2007information,plavala2021general,hardy2001quantum}.  For composite systems we also have the notion of tracing-out or discarding of subsystems, that corresponds to the unit effect. Sometimes for clarity, we put subscripts to specify the systems involved, so that $x_A$ is a state for system $A$ and $x_{ABC}$ is a state for a tripartite system $ABC$. For a state $x_{AB}$ on a bipartite system $AB$ we assume there is channel $x_{AB} \rightarrow \tr_A [ x_{AB}] =: x_B$ that outputs a state $x_B$ on $B$ that results from discarding or ignoring system $A$. This amounts to computing the marginal of a probability distribution. We also define the identity channel as $id(x) = x$ for all $x\in \S$. Given a channel $\E$ from a subsystem $X$ to subsystems $AB$ we define the \emph{marginal channels} as
\begin{align}
\E_A(x) &:= \tr_B \circ \, \E (x) \\
\E_B (x) &:= \tr_A \circ \, \E (x).
\end{align}

\subsection{Channel compatibility in general theories}

While recent works deal with incompatibility of measurements in general theories~\cite{bluhm2022incompatibility,jenvcova2018incompatible}, one can also extend to the notion of (in)compatible channels~\cite{heinosaari2017incompatibility}.  Two (or more) channels in a theory are \emph{compatible} if they arise as marginal channels of a valid global channel within the theory. 

Given the structure of the perfect-hiding channel in classical theory, we therefore argue that to capture no-go theorems, the appropriate set of global channels to consider in a theory is the set of reversible channels.  In any theory, we say a channel $\E$ is \emph{reversible} precisely if there is a second channel $\F$ in the theory such that $ \F \circ \E (x) = x$ for all states $x\in \S$. A particular subset of reversible channels are \emph{isometry} channels $\V$, which preserve the inner product structure i.e for any pair of states $y, z$ it satisfies $\<y,z\> = \<\V(y), \V(z)\>$.  

We also note that perfect-cloning in classical theory involves an isometry channel, while perfect-hiding in classical theory involves a non-isometric, but reversible channel. Therefore if we restricted to isometric channels in a theory this would suggest that is impossible to hide a bit in classical theory, which is not true.

In light of this, we say that a theory admits perfect cloning precisely if there is a channel $\E$ from a system $X$ into a bipartite system $AB$ such that the marginal channels are both the identity channel. We also say that the theory admits perfect hiding precisely if there is a reversible channel $\R$ from $X$ into $AB$ with marginals being two completely erasing channels $\D_1$ from $X$ into $A$ and $\D_2$ from $X$ into $B$. Here a channel $\D$ is completely erasing if for all $x\in \S$ we have $\D(x) = y$ for some fixed $y$.

The set of reversible quantum channels has been fully characterised \cite{nayak2006invertible}: $\E$ is a reversible channel if and only if there is a unitary $U$ and a \emph{mixed} state $\sigma$ 
\begin{equation}
\E (\rho) = U (\rho \otimes \sigma )U^\dagger.
\end{equation}\label{rev-qchannel}
Alternatively, $\E$ is a reversible quantum channel if and only if 
\begin{equation}
\<\E(\rho), \E(\tau)\> = c \< \rho, \tau\>
\end{equation}
for all states $\rho,\tau$ and some constant $c>0$. Here the inner product is the Hilbert-Schmidt inner product given by $\<X, Y\> := \tr [ X^\dagger Y]$. The latter demonstrates that reversible channels are a natural generalization of isometry channels.

\subsection{Properties of unitarity for GPT channels}
\begin{lemma}\label{lemma:gpt-unitarity-depolar-zero}
    For any GPT in which $d\mu(x)$ is non-zero over all of $\partial S$ we have that $u(\E)=0$ if and only if $\E = \D$ a completely depolarizing channel $\D(y) = z$ for all states $y$, and $z$ fixed. 
\end{lemma}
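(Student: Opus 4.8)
The plan is to prove each direction in turn, with the forward implication carrying the real content. The reverse direction is immediate: if $\E = \D$ with $\D(y) = z$ for every state $y$, then linearity of the channel gives $\D(x - \eta) = \D(x) - \D(\eta) = z - z = 0$, and since $\gamma(0) = \max_{\mathcal{M}^\sharp}\sum_k \langle m_k, 0\rangle^2 = 0$, the integrand in Equation~(\ref{eqn:GPT-unitarity}) vanishes pointwise and hence $u(\D) = 0$.

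For the forward direction I would first argue that $u(\E) = 0$ forces the integrand to vanish on all of $\partial\S$. The normalizing constant $\alpha$ is strictly positive (it is fixed by the requirement $u(id) = 1$), and the purity is a supremum of squares, so $\gamma(\E(x - \eta)) \geq 0$ for every $x$. A nonnegative integrand whose integral against a measure $d\mu$ that is strictly positive over all of $\partial\S$ vanishes must itself be zero $\mu$-almost everywhere; combining this with the continuity of the map $x \mapsto \gamma(\E(x-\eta))$ (the purity is continuous and $\E$ is linear, hence continuous) and the fact that the support of $d\mu$ is all of $\partial\S$ upgrades this to $\gamma(\E(x) - \E(\eta)) = 0$ for every pure state $x \in \partial\S$.

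The crux is then to convert the vanishing of the purity into equality of states $\E(x) = \E(\eta)$. By the definition of $\gamma$, the condition $\gamma(\E(x) - \E(\eta)) = 0$ means $\langle m_k, \E(x) - \E(\eta)\rangle = 0$ for every effect $m_k$ occurring in any sharp measurement $\mathcal{M}^\sharp$. I would invoke that the sharp-measurement effects separate states, equivalently that they are tomographically complete and span the dual of the state space (in quantum theory the rank-1 projectors span the Hermitian operators, and in classical theory the pure-state effects span $\mathbb{R}^d$). Two states agreeing on such a separating family must coincide, giving $\E(x) = \E(\eta)$ for every pure $x$. This is the step I expect to be the main obstacle, since it is precisely where the structure of the theory enters through the richness of its sharp measurements; for a fully general GPT one must assume this separation property, which holds for the classical and quantum cases of interest.

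Finally, setting $z := \E(\eta)$, I would extend from pure states to all states by convex-linearity: every state $y$ is a convex mixture $y = \sum_i \lambda_i x_i$ of pure states, so $\E(y) = \sum_i \lambda_i \E(x_i) = \sum_i \lambda_i z = z$. Hence $\E$ sends every state to the single fixed state $z$, i.e. $\E = \D$ is completely depolarizing, which closes the equivalence.
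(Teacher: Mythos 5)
Your proof is correct and follows essentially the same route as the paper's: nonnegativity of the integrand forces $\gamma(\E(x-\eta))=0$ pointwise on $\partial\S$, the definition of $\gamma$ then forces every sharp-measurement effect to annihilate $\E(x)-\E(\eta)$, and separation of states by effects yields $\E(x)=\E(\eta)$ for all $x$. You are in fact more explicit than the paper at the one delicate step -- the paper deduces $\E(x-\eta)=0$ merely from $m_k\neq 0$, whereas you correctly flag that this requires the sharp-measurement effects to be tomographically separating, an assumption that holds in the classical and quantum cases the paper cares about.
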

\begin{proof}
A sum of non-negative numbers is zero if and only if each number is identically zero. Therefore we have that $ u(\E) = 0$ if and only if $\<m_k, \E(x-\eta)\> =0 $ for all $m_k$ in the optimal measurement and for all $x\in \partial S$. Since $m_k \ne 0$ for all $k$ this means that $u(\E) = 0$ if and only if $\E(x-\eta) = 0$ for all $x$, which is true if and only if $\E(x) = \E(\eta) = y$ for all $x$ and fixed $y$. 
\end{proof}

\begin{lemma}\label{lemma:unitarity-of-mix-with-depolarisation}
    For any GPT in which $d\mu(x)$ is non-zero over all of $\partial S$ we have $u(p\E + (1-p)\D) = p^2 u(\E)$ where $\E$ is any channel, and $\D$ is a completely depolarizing channel $\D(y) = z$ for all states $y$ with $z$ fixed. 
\end{lemma}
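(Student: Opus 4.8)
The plan is to reduce the statement directly to the definition of unitarity in Equation~(\ref{eqn:GPT-unitarity}) and to lean on two elementary structural facts: the linearity of channels on the ambient Euclidean embedding space, and the quadratic homogeneity of the purity functional $\gamma$. Writing $\G := p\E + (1-p)\D$, I would first note that the unitarity is defined through the action of a channel on the (generally non-state) vector $x-\eta$, so all maps here are understood as linear operators on the embedding space. Since $\G$ and $\E$ share the same input system, the maximally mixed state $\eta$ and the normalising constant $\alpha$ are identical for the two, and neither need be tracked separately.

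Next I would compute $\G(x-\eta)$ explicitly. Using linearity together with the defining property $\D(y)=z$ for every state $y$, one has $\G(x)=p\,\E(x)+(1-p)z$ and $\G(\eta)=p\,\E(\eta)+(1-p)z$. The key cancellation is that the fixed output $z$ of the completely depolarizing channel drops out of the difference, leaving $\G(x-\eta)=\G(x)-\G(\eta)=p\,\E(x-\eta)$ pointwise in $x$.

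Then I would invoke homogeneity of the purity defined in Equation~(\ref{eqn:general-purity}): for any scalar $p$ and any vector $v$ in the embedding space, $\gamma(pv)=\max_{\mathcal{M}^\sharp}\sum_k \<m_k,pv\>^2 = p^2\,\gamma(v)$, since the non-negative factor $p^2$ factors out of the maximisation over sharp measurements. Applying this with $v=\E(x-\eta)$ gives $\gamma(\G(x-\eta))=p^2\,\gamma(\E(x-\eta))$. Substituting into Equation~(\ref{eqn:GPT-unitarity}) and pulling the constant $p^2$ outside the integral over $\partial \S$ then yields $u(\G)=p^2\,u(\E)$, as required.

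I do not expect a genuine obstacle here: the computation is essentially routine. The only points that warrant care are (i) justifying that the channels act linearly on the difference $x-\eta$, which is already implicit in the very definition of $u$, and (ii) observing that the maximisation defining $\gamma$ commutes with the non-negative scalar $p^2$. This lemma then feeds directly into Corollary~\ref{corollary:squash-CUP-to-origin} and underpins the claim that convex mixtures of a channel with a completely depolarizing channel trace out the line segment joining $(u,\bar{u})$ to the origin.
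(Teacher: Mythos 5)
Your proposal is correct and follows essentially the same route as the paper's proof: expand by linearity, observe that the fixed output of the depolarizing channel cancels in the difference $x-\eta$ (the paper phrases this as $\D(x-\eta)=0$), and pull $p^2$ out of the purity via its quadratic homogeneity. No gaps.
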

\begin{proof}
    The proof follows from the expansion of the definition of unitarity under linearity, and that $\D(x-\eta) = 0$ (from Lemma \ref{lemma:gpt-unitarity-depolar-zero}). Putting this together
    \begin{equation}
        \begin{split}
            u(p\E + (1-p)\D) &:= \alpha \int_{\partial \S} d\mu(x) \ \gamma( p\E( x - \eta ) + (1-p)\D( x - \eta ) ), \\
            &= \alpha \int_{\partial \S} d\mu(x) \ \max_{\mathcal{M}^\sharp} \sum_k \<m_k , p\E( x - \eta )\>^2, \\
            &= p^2 \alpha \int_{\partial \S} d\mu(x) \ \max_{\mathcal{M}^\sharp} \sum_k \<m_k , \E( x - \eta )\>^2 = p^2 u(\E).
        \end{split}
    \end{equation}
\end{proof}

\begin{corollary}\label{corollary:squash-CUP-to-origin}
    For any CUP-set if the global channel $\E$ from $X \to AB$ gives the point $(u,\bar{u})$ then the set of convex mixtures $p\E + (1-p)\D$ gives the point $(p^2 \ u,p^2 \ \bar{u})$, where $\D$ is a global completely depolarizing channel $\D(y) = z$ for all states $y$ with $z$ fixed.
\end{corollary}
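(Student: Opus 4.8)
The plan is to reduce the corollary directly to Lemma~\ref{lemma:unitarity-of-mix-with-depolarisation}, which already establishes the quadratic scaling of unitarity under mixing with a completely depolarizing channel. The only genuinely new ingredient is to pass from a mixture at the level of the global channel $X\to AB$ to a pair of mixtures at the level of the two marginal channels, and then apply the lemma independently in each coordinate of the CUP.

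First I would use linearity of the partial trace to commute it past the convex combination. Writing the marginals as $\E_A := \tr_B \circ \E$ and $\E_B := \tr_A \circ \E$, so that $(u,\bar u) = (u(\E_A), u(\E_B))$, linearity gives
\begin{equation}
    \tr_B \circ \bigl(p\E + (1-p)\D\bigr) = p\,\E_A + (1-p)\,(\tr_B \circ \D),
\end{equation}
and symmetrically for the $A$-marginal. Hence each coordinate of the CUP of the mixture is itself a convex combination of a marginal of $\E$ with the corresponding marginal of $\D$. (Note that the convex combination of channels is again a valid channel, so the object whose CUP we compute is legitimate.)

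Second I would verify the one structural fact the argument relies on: that the marginal of a global completely depolarizing channel is again completely depolarizing. If $\D(y)=z$ for every input state $y$ with $z$ a fixed state on $AB$, then $\tr_B \circ \D(y) = \tr_B(z)$ is a fixed state on $A$ independent of $y$, so $\tr_B \circ \D$ is completely depolarizing, and likewise $\tr_A \circ \D$. This is the only place the special form of $\D$ enters, and it is the one point in the argument that is not a pure tautology.

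Finally I would invoke Lemma~\ref{lemma:unitarity-of-mix-with-depolarisation} twice. Applied to the $A$-marginal, with $\E_A$ playing the role of the arbitrary channel and $\tr_B \circ \D$ the completely depolarizing channel, it yields $u(p\E_A + (1-p)(\tr_B\circ\D)) = p^2\,u(\E_A) = p^2 u$; the identical argument on the $B$-marginal gives $p^2\bar u$. Combining the two coordinates shows that $p\E + (1-p)\D$ realises the CUP $(p^2 u, p^2 \bar u)$, as claimed. The only mild obstacle is therefore the bookkeeping observation of the previous paragraph; once a global depolarizing channel is seen to restrict to a local depolarizing channel on each output, the result is an immediate two-coordinate re-use of the preceding lemma.
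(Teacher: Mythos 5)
Your argument is correct and follows exactly the paper's route: the published proof is a one-line reduction to Lemma~\ref{lemma:unitarity-of-mix-with-depolarisation} together with the observation that the marginals $\tr_A\circ\D$ and $\tr_B\circ\D$ of a global completely depolarizing channel are themselves completely depolarizing. Your write-up just makes the linearity-of-partial-trace step and the two coordinate-wise applications of the lemma explicit.
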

\begin{proof}
    This follows from Lemma \ref{lemma:unitarity-of-mix-with-depolarisation} with the observation that the marginals $\tr_A \circ \D$ \& $\tr_B \circ \D$ of a completely depolarizing channel are also completely depolarising channels (to a different fixed state).
\end{proof}

\begin{lemma}\label{lemma:unitary-invariance}
    Consider a GPT in which $\gamma(x) =\<x,x\>$. Then for any isometry, $\V$, and any other channel, $\E$, we have $u(\V \circ \E) = u(\E)$.
\end{lemma}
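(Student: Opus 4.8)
The plan is to reduce the claim to the defining integral for unitarity and then exploit the isometry property termwise. Starting from Equation~(\ref{eqn:GPT-unitarity}), and noting that $\V\circ\E$ has the same input system as $\E$ (so the measure $d\mu$, the maximally mixed state $\eta$, and the normalization $\alpha$ are all unchanged), I would write
\begin{equation}
u(\V \circ \E) = \alpha \int_{\partial \S} d\mu(x)\ \gamma\big(\V(\E(x-\eta))\big).
\end{equation}
The claim then follows immediately once I establish the pointwise identity $\gamma(\V(\E(x-\eta))) = \gamma(\E(x-\eta))$ for every pure state $x$, since equality of the integrands for all $x$ forces equality of the integrals.

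The core step is to upgrade the isometry condition, which is stated only for pairs of \emph{states}, to the argument $w := \E(x-\eta)$, which is a \emph{difference} of states rather than a state. Here I would use that $\E$ is linear and that $x,\eta$ are states, so that $w = \E(x) - \E(\eta) = y - z$ with $y := \E(x)$ and $z := \E(\eta)$ both valid states in the output system. Under the hypothesis $\gamma = \langle\cdot,\cdot\rangle$ I may write $\gamma(\V(w)) = \langle \V(w), \V(w)\rangle$ and, using linearity of $\V$ to pull the difference through, expand by bilinearity as
\begin{equation}
\langle \V(y) - \V(z),\, \V(y) - \V(z)\rangle = \langle \V(y),\V(y)\rangle - 2\langle \V(y),\V(z)\rangle + \langle \V(z),\V(z)\rangle.
\end{equation}
Applying the isometry identity $\langle \V(a), \V(b)\rangle = \langle a, b\rangle$ to each of the three terms (each now involving only the states $y$ and $z$) collapses this back to $\langle y - z, y - z\rangle = \langle w, w\rangle = \gamma(w)$, which is precisely $\gamma(\E(x-\eta))$.

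The only real subtlety — and the step I would be most careful about — is exactly this extension of inner-product preservation from states to the linear span of states. The assumption $\gamma(x) = \langle x, x\rangle$ is what lets me identify the purity with an honest quadratic form on the enveloping vector space, and linearity of $\V$ is what lets the bilinearity argument go through; absent the quadratic-form hypothesis the purity need not interact with the inner product in this way, and the result would fail. Once the pointwise identity is secured, no further analysis of $d\mu$ or $\alpha$ is required, so the remainder of the proof is immediate.
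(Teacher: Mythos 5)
Your proof is correct and follows essentially the same route as the paper's: rewrite $\gamma$ as the inner product inside the defining integral and cancel $\V$ using inner-product preservation. The only difference is that you explicitly justify extending the isometry identity $\<\V(a),\V(b)\>=\<a,b\>$ from pairs of states to the difference $\E(x)-\E(\eta)$ via linearity and bilinearity, a step the paper's proof applies without comment.
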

\begin{proof}
    The proof also follows from expansion of the definition of unitarity under linearity, and that $\ev{\V(x),\V(x)} = \ev{x,x}$ for all isometries $\V$ and states $x$.
     Then
    \begin{equation}
        \begin{split}
            u(\V \circ \E) &=  \alpha \int_{\partial \S} d\mu(x) \ \gamma( \V( x - \eta ) ), \\
            &=  \alpha \int_{\partial \S} d\mu(x) \ \< \V \circ \E( x - \eta ) , \V \circ \E(x-\eta ) \>, \\
                        &=  \alpha \int_{\partial \S} d\mu(x) \ \< \E( x - \eta ) , \E(x-\eta ) \>, \\
            &=  \alpha \int_{\partial \S} d\mu(x) \ \gamma( \E(x - \eta) ) = u(\E). \\
        \end{split}
    \end{equation}
\end{proof}

\begin{corollary}\label{cor:u-for-isometry}
    Consider a GPT in which $\gamma(x) =\<x,x\>$. Then for any isometry, $\V$, we have $u(\V) = 1$.
\end{corollary}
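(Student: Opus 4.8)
The plan is to obtain this as an immediate consequence of Lemma~\ref{lemma:unitary-invariance}. That lemma already establishes, for any GPT with $\gamma(x) = \<x,x\>$, that $u(\V \circ \E) = u(\E)$ for every isometry $\V$ and every channel $\E$. Since the identity channel $id$ is a valid channel in any theory, I would simply specialise to $\E = id$. This gives $u(\V \circ id) = u(id)$, and because $\V \circ id = \V$ while $u(id) = 1$ by the very choice of the normalising constant $\alpha$ in Equation~(\ref{eqn:GPT-unitarity}), the result $u(\V) = 1$ follows at once.

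If one instead wanted a self-contained argument that does not route through the lemma, I would unfold the definition directly. Writing $u(\V) = \alpha \int_{\partial \S} d\mu(x)\, \gamma(\V(x-\eta))$ and using $\gamma(y) = \<y,y\>$ together with the defining property $\<\V(y),\V(z)\> = \<y,z\>$ of an isometry, the map $\V$ drops out under the inner product, leaving $\alpha \int_{\partial \S} d\mu(x)\, \<x-\eta, x-\eta\> = \alpha \int_{\partial \S} d\mu(x)\, \gamma(id(x-\eta)) = u(id) = 1$. This is really just Lemma~\ref{lemma:unitary-invariance} specialised and inlined, so I would favour the one-line route above and relegate this expansion to a parenthetical remark.

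There is no genuine obstacle here: all of the substantive work lives in Lemma~\ref{lemma:unitary-invariance} (and, upstream, in the normalisation fixing $u(id) = 1$). The only point worth flagging is the hypothesis $\gamma(x) = \<x,x\>$, which is exactly what licenses pulling the isometry through the inner product; without it one could not conclude $\gamma(\V(y)) = \gamma(y)$, and the corollary need not hold for a general purity functional $\gamma$.
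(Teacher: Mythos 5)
Your proposal is correct and matches the paper's own proof, which likewise obtains the corollary by applying Lemma~\ref{lemma:unitary-invariance} with $\E = id$ and invoking the normalisation $u(id)=1$. The direct expansion you sketch as an alternative is just the lemma inlined, as you note, so there is nothing to add.
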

\begin{proof}
    This follows from Lemma \ref{lemma:unitary-invariance} with $\E = id$, noting that $u(id) =  \alpha \int_{\partial \S} d\mu(x) \ \gamma( x - \eta )$.
\end{proof}

\subsection{Sharp measurements for quantum theory}
\begin{lemma}\label{lemma:quantum-purity-from-gpt}
    For any quantum state $\rho$ of dimension $d$ we have
    \begin{equation}
    \max_{\mathcal{M}^\sharp} \sum_k \<m_k , \rho\>^2 = \tr[\rho]^2,
    \end{equation}
    where the maximization is taken over all sharp measurements $\mathcal{M}^\sharp=\{m_k\}$ of dimension $d$.
\end{lemma}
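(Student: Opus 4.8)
The right-hand side as typeset, $\tr[\rho]^2 = (\tr\rho)^2$, equals unity for every normalised state and therefore cannot be the optimal value for a mixed $\rho$; I read it as a typographic slip for the purity $\tr[\rho^2]$, which is exactly the quantity the lemma is meant to identify with the GPT purity $\gamma(\rho)$ of Equation~(\ref{eqn:general-purity}) and which was already observed to equal $\<\rho,\rho\> = \tr[\rho^2]$ in the main text. The plan is thus to establish $\max_{\mathcal{M}^\sharp}\sum_k \<m_k,\rho\>^2 = \tr[\rho^2]$, with the optimum achieved by the projective measurement in the eigenbasis of $\rho$.

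First I would identify the feasible sharp measurements. In quantum theory the effects $m_k$ are positive operators bounded by $\I$ that sum to $\I$, and a sharp measurement must deterministically identify a maximal family of $d$ perfectly distinguishable states. Deterministic discrimination of $d$ mutually orthogonal pure states $\{\ketbra{f_k}\}$ forces $\<m_j,\ketbra{f_k}\> = \delta_{jk}$, and combined with $0 \le m_k \le \I$ and $\sum_k m_k = \I$ this pins each $m_k$ to the rank-one projector $\ketbra{e_k}$ onto an orthonormal basis $\{\ket{e_k}\}_{k=1}^d$. Hence the sharp measurements are exactly the rank-one projective measurements, parametrised by the choice of orthonormal basis.

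With $m_k = \ketbra{e_k}$ the objective becomes $\sum_k \<m_k,\rho\>^2 = \sum_k p_k^2$, where $p_k := \bra{e_k}\rho\ket{e_k}$ are the diagonal entries of $\rho$ in that basis. The key step is then the elementary identity, valid because $\rho$ is Hermitian,
\begin{equation}
\tr[\rho^2] = \sum_{j,k}\abs{\bra{e_j}\rho\ket{e_k}}^2 = \sum_k p_k^2 + \sum_{j\neq k}\abs{\bra{e_j}\rho\ket{e_k}}^2,
\end{equation}
which gives $\sum_k p_k^2 = \tr[\rho^2] - \sum_{j\neq k}\abs{\bra{e_j}\rho\ket{e_k}}^2 \le \tr[\rho^2]$ for every basis, with equality precisely when all off-diagonal elements vanish, i.e.\ when $\{\ket{e_k}\}$ is an eigenbasis of $\rho$. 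Maximising over sharp measurements therefore yields $\tr[\rho^2]$, attained by the eigenbasis projective measurement.

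The main obstacle I anticipate is the first step: rigorously deriving, from the abstract GPT notion of a sharp measurement (deterministic perfect distinguishability of $d$ states), that the effects must be rank-one orthonormal projectors, rather than merely asserting it. Once the feasible set is correctly characterised, the optimisation is the short Hermitian off-diagonal computation above. I would also check the degenerate-spectrum case, where the maximiser is non-unique (any eigenbasis attains the bound) while the optimal value $\tr[\rho^2]$ is unaffected.
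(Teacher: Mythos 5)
Your proof is correct, and your reading of the right-hand side as the purity $\tr[\rho^2]$ is the intended one: the paper's own proof makes exactly the same identification, writing the claimed value as $\sum_i\lambda_i^2$ for the eigenvalues of $\rho$. However, your route differs from the paper's in the key step. You first characterise the feasible set, showing that sharpness plus positivity forces each effect to be a rank-one projector $\ketbra{e_k}$ onto an orthonormal basis (the argument that a vanishing diagonal entry of a positive operator kills the corresponding column is exactly what is needed there), and then conclude with the elementary Hermitian identity $\tr[\rho^2]=\sum_k p_k^2+\sum_{j\neq k}\abs{\bra{e_j}\rho\ket{e_k}}^2\geq\sum_k p_k^2$, with equality precisely for an eigenbasis. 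The paper never pins the effects down to projectors: it only uses that the matrix $M_{ki}:=\tr[m_k\ketbra{e_i}]$, with $\{\ket{e_i}\}$ an eigenbasis of $\rho$, is doubly stochastic (rows sum to one by completeness $\sum_k m_k=\I$, columns because each sharp effect has unit trace), writes the objective as $\lambda^{T}M^{T}M\lambda$, and then invokes a majorization argument — $\lambda$ majorizes $M^{T}M\lambda$ — together with Schur-convexity of $x\mapsto x^2$ to obtain the bound $\sum_i\lambda_i^2$, attained by the eigenbasis measurement. The trade-off is clear: your argument is more elementary (no majorization machinery) but leans on the full quantum characterisation of sharp measurements, which you rightly flag as the step needing care; the paper's argument extracts only the minimal properties (unit trace and completeness) from sharpness, which keeps it closer to the abstract GPT definition and avoids the classification of effects altogether. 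Both handle degenerate spectra correctly, since the optimum value is basis-choice independent.
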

\begin{proof}
    We can write any quantum state in its eigenbasis, $\rho = \sum_i^d \lambda_i \ketbra{e_i}$, such that $\tr[\rho]^2 = \sum_i^d \lambda_i^2$. As $\M^\sharp$ completely distinguishes $d$ states we have
    \begin{equation}
        \begin{split}
            \max_{\mathcal{M}^\sharp} \sum_k \<m_k , \rho\>^2 &= \max_{\mathcal{M}^\sharp} \sum_k^d ( \sum_i^d \lambda_i \tr[ m_k^\dagger \ketbra{e_i} ] )^2, \\
            &= \max_{M} \sum_k^d ( \sum_i^d \lambda_i M_{ki}  )^2 , \\
        \end{split}
    \end{equation}
    where $M_{ki}:= \tr[ m_k^\dagger \ketbra{e_i} ]$, and forms a doubly stochastic matrix where $\sum_i^d M_{ki} = \sum_k^d M_{ki} = 1$. Expanding the purity 
    \begin{equation}
        \begin{split}
            \max_{\mathcal{M}^\sharp} \sum_k \<m_k , \rho\>^2 &= \max_M \sum_{i,j,k}^d \lambda_i \lambda_j M_{ki} M_{kj}, \\
            &= \max_M \sum_{i,j}^d \lambda_i \lambda_j (M^T M)_{ij}.
        \end{split}
    \end{equation}
    As the product of any two doubly stochastic matrices is doubly stochastic, $\sum_i (M^T M)_{ij} = \sum_j (M^T M)_{ij} = 1$. We then use that the vector $\lambda = (\lambda_1,\lambda_2,...,\lambda_d)^T$ majorizes the vector  $\mu := M^T M \lambda$. Such that
    \begin{equation}
        \begin{split}
            \max_{\mathcal{M}^\sharp} \sum_k \<m_k , \rho\>^2 &= \sum_i \lambda_i \mu_i \leq  \sum_i \frac{ \lambda_i^2 + \mu_i^2 }{2} \leq \sum_i \lambda_i^2.
        \end{split}
    \end{equation}
    Where the last inequality follows the Schur-convexity of $f(x)=x^2$. Equality holds as we can choose the measurement in the eigenbasis, which attains the bound.
\end{proof}

\section{Depolarization table}\label{append:figures}
\begin{table}[h!]
    \centering
    \begin{tabular}{@{}llllllllll@{}}
    \toprule
                   \textbf{Technique}  & \# of qubits ($\C$) \ & \# of qubits ($\C_r$)  \hspace{0.3cm}  & States & Measurements & Sequences & Repetitions  & Shots  & \textbf{Total Runs} \\ \midrule
    Complementarity SWAP  \hspace{0.3cm}   &  5 & 7 &      4    &     1        &   1       &      100     & 200    &    80000       \\
    Choi SWAP     &  7 & 9 &      1    &     1        &   1       &      100     & 200    &    20000       \\
    Spectral tomography\ \ \ & 2 & 3 &  2 &     3        &   40      &      10      & 200    &    480000      \\
    Interleaved RB & 2 & 3 &     6    &     3        &   10      &      10      & 200    &    360000      \\ \bottomrule
    \end{tabular}
    \caption{Comparison of total number of experiments undertaken for the estimation of a point on the CUP distribution. For each technique, an experimental scheme was run on a simulation of a quantum device, with device noise imported from IBM Q via qiskit.}
\end{table}

\begin{table}[h!]
    \begin{tabular}{@{}lllllll@{}}
    \toprule
                    & Upper \ \  & $\C$ Left \ \ & $\C$ Right \ \ & $\C_r$ Left \ \ & $\C_r$ Middle \ \ & $\C_r$ Right \\ \midrule
    Interleaved RB \ \  &    (0.063,0.137)   & (0.032,0.125) &    (0.095,0.159)     &    (0,0.015)         &       (0.047,0.107)        &       (0,0)       \\
    SWAP test      &   (0.165,0.147)    &     (0.073,0.160)      &     (0.084,0.153)       &    (0,0.562)         &     (0.365,0.337)          &        (0.253,0)      \\ \bottomrule
    \end{tabular}
    \caption{ \textbf{Depolarization fits for noisy CUP-sets} For each experimental estimation of the quantum CUP-sets $\C$ \& $\C_r$ we fit a depolarising noise model $ ({u_N},{\bar{u}_N}) = ((1-p_A)^2 {u},(1-p_B)^2\bar{u})$ to each surface (see Section \ref{sec:depolarization}). Best fit values for $(p_A,p_B)$ are tabulated here.}\label{table:depolar-fit-values}
\end{table}

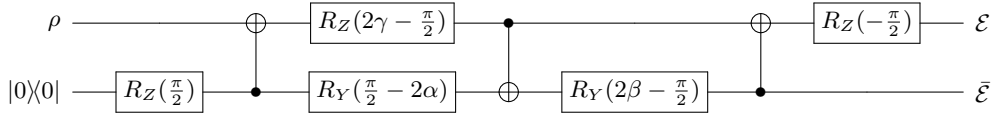
\begin{figure}[h!]
    \label{fig:genericcircuit}
    \begin{equation*}
    \Qcircuit @C=1.8em @R=1.2em {
        \lstick{\rho} & \qw    &  \targ &  \gate{R_Z(2 \gamma - \frac{\pi}{2})} &  \ctrl{1}  & \qw      & \targ      &  \gate{R_Z(-\frac{\pi}{2})}  &  \rstick{\E} \qw \\
        \lstick{\ketbra{0}} & \gate{R_Z(\frac{\pi}{2})} &  \ctrl{-1}      &  \gate{R_Y(\frac{\pi}{2}- 2 \alpha)}   & \targ & \gate{R_Y(2\beta - \frac{\pi}{2})} & \ctrl{-1} &  \qw  &  \rstick{\bar{\E}} \qw
    }
    \end{equation*}
    \caption{\textbf{Circuit decomposition for generic 2 qubit isometry $\V(\alpha,\beta,\gamma)$} For $d_X=d_A=d_B=2$, all isometries can be expressed in the above form, where $0 \leq \alpha,\beta,\gamma \leq \pi$ \cite{vatan2004optimal}. The complementary channels $\E = \tr_B \circ \V$ \&  $\bE = \tr_A \circ \V$ are shown, by ranging over $\alpha,\beta,\gamma$ we can generate the CUP-set $\C$ for 1 to 2 qubits.}
    \label{fig:general-2-qubit-circuit}
\end{figure}

\begin{figure}[h]
    \centering
    \begin{equation*}
        \Qcircuit @C=1.8em @R=1.2em {
            \lstick{\rho} & \qw   & \qw   &  \ctrl{1} & \qw  & \rstick{\E(\rho)} \qw \\
            \lstick{\ketbra{0}} &  \gate{\sqrt{X} } & \gate{R_Z(\pi(1-\alpha))}  & \targ & \gate{R_Z(-\frac{\pi}{2})} & \rstick{\bar{\E}(\rho)} \qw
        }
    \end{equation*}
    \caption{\textbf{Circuit decomposition  in IBM gateset for lower right CUP-set surface} Circuit for 2 qubit isometry $CNOT_{AB}^\alpha(\rho \x \ketbra{0})$, and complementary channels $\E$ \& $\bE$ for the lower right surface of the CUP-set are shown. The final $R_Z$ rotation is optional but aids in the estimation of CUPs through spectral techniques. }
    \label{fig:lower-right-cup-set-circuit}
\end{figure}
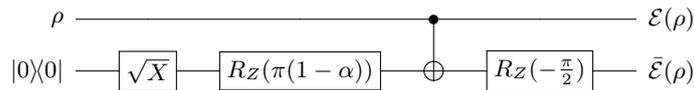

\begin{figure}[h]
    \centering

    \begin{equation*}
        \Qcircuit @C=1.8em @R=1.2em {
            \lstick{\rho}  &  \ctrl{1} & \qw   & \qw & \qw   & \targ  & \rstick{\E(\rho)} \qw \\
            \lstick{\ketbra{0}} & \targ  &  \gate{\sqrt{X} } & \gate{R_Z(\pi(1-\alpha))}   &  \gate{\sqrt{X} }  & \ctrl{-1} & \rstick{\bar{\E}(\rho)} \qw
        }
    \end{equation*}
    \caption{\textbf{Circuit decomposition in IBM gateset for lower left CUP-set surface} Circuit for  2 qubit isometry $ CNOT_{BA}^\alpha \circ CNOT_{AB}(\rho \x \ketbra{0})$, and complementary channels $\E$ \& $\bE$ for the lower left surface of the CUP-set are shown.}
    \label{fig:lower-left-cup-set-circuit}
\end{figure}
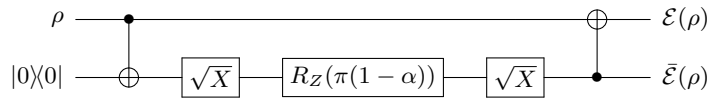

\begin{figure*}[h!]
    \centering
    \begin{subfigure}[t]{0.5\linewidth}
        \centering
        \includegraphics[width=8cm]{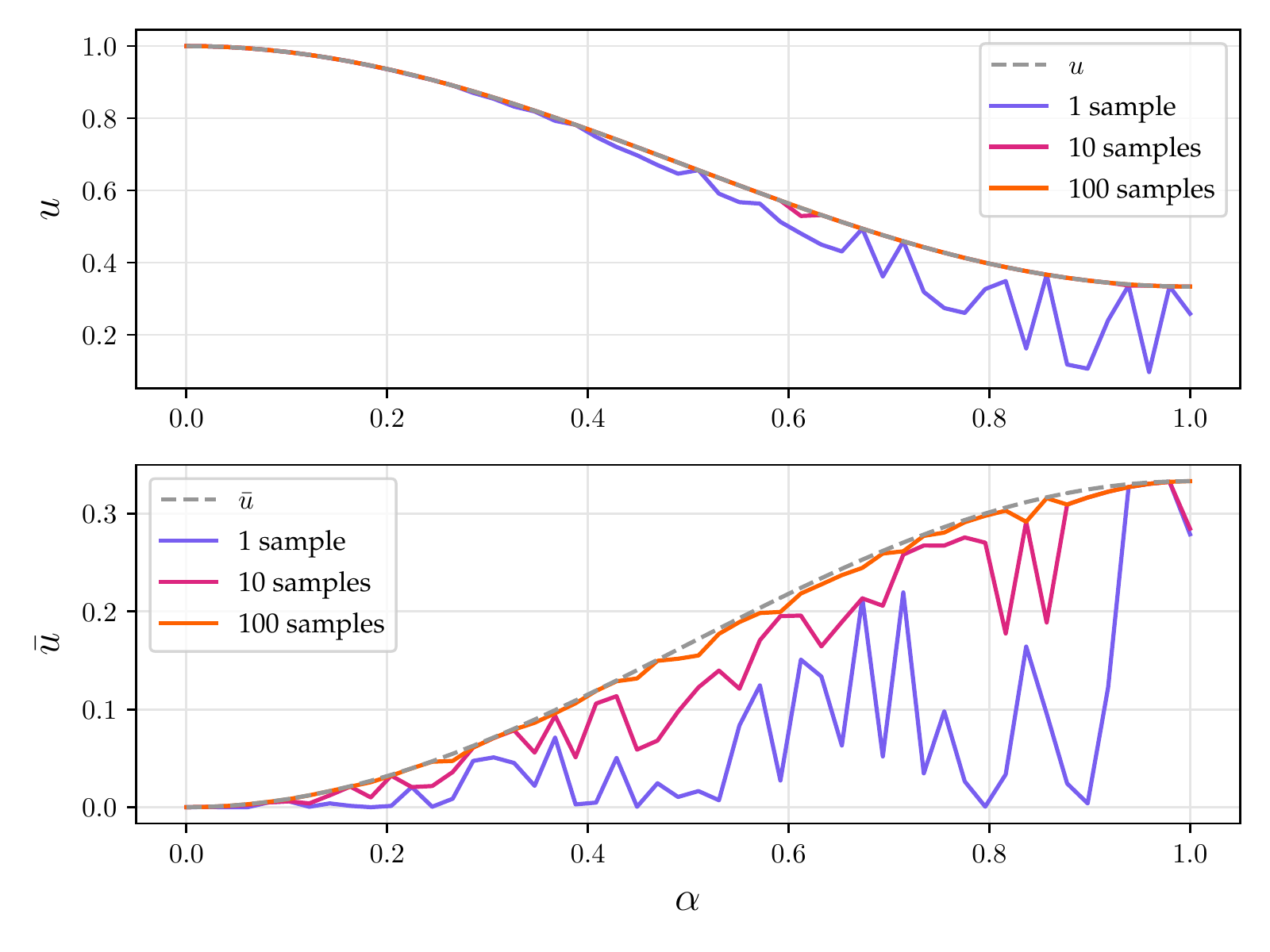}
        \caption{  $\U_{AB} = CNOT_{AB}^\alpha$}
    \end{subfigure}\hfil
    \begin{subfigure}[t]{0.5\linewidth}
        \centering
        \includegraphics[width=8cm]{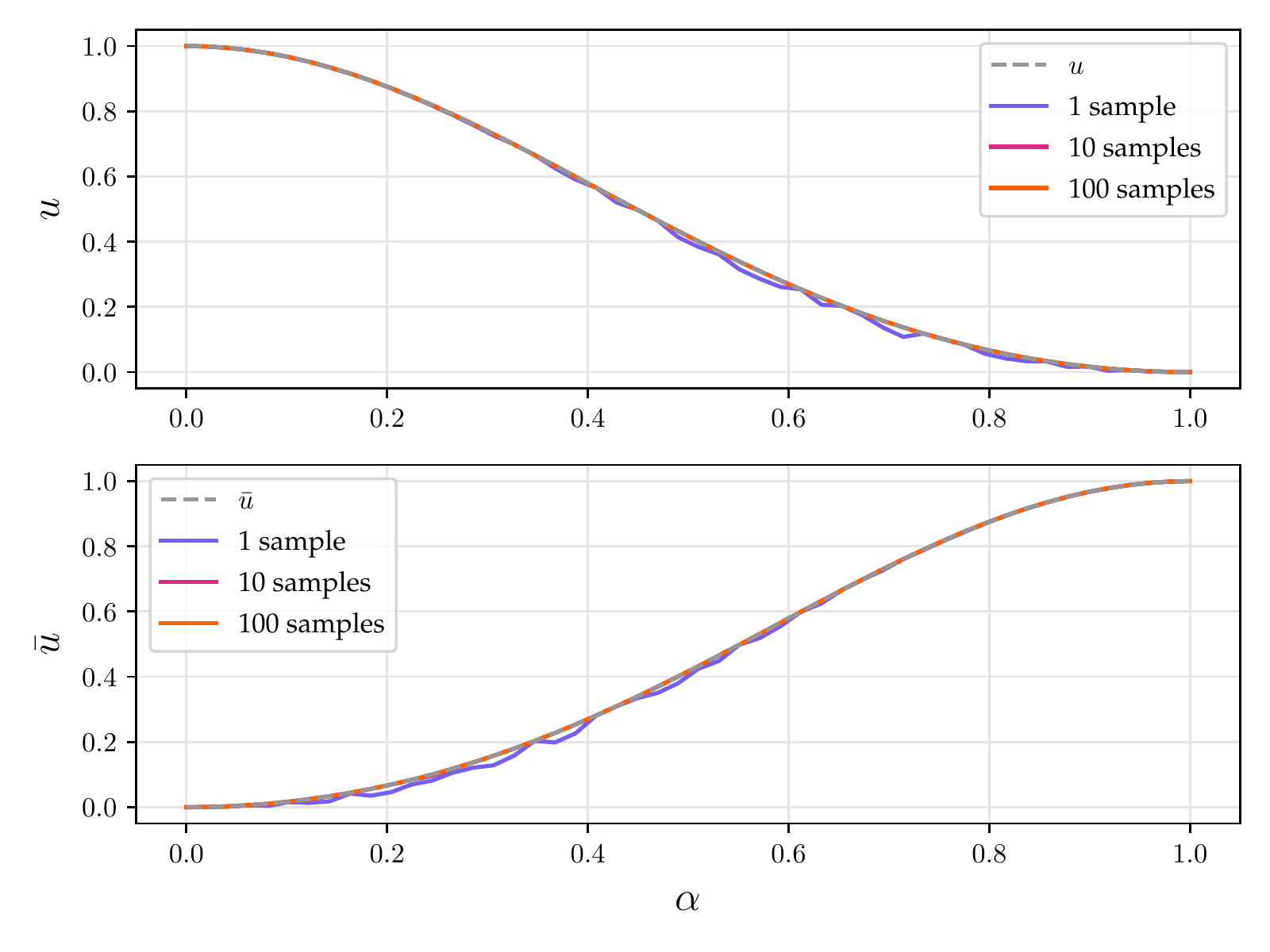}
        \caption{$\U_{AB} = \swap^\alpha$}
    \end{subfigure}
    \caption{\textbf{Bound on CUP-set through random unitaries} For two surfaces of the CUP-set $\C$ (for a range of 50 discrete values of $0 \leq \alpha \leq 1$) we can test how quickly the lower bound given in Lemma \ref{lemma:eigenvalue-lower-bound-rand-unit} converges towards the actual unitarity given in Theorem \ref{theorem:max-eigenvalue-unitarity}. Roughly, we observe, for bound within 1\% of the unitarity we need 1 random setting if $u > 2/3$, and at most 100 random settings for  lower values. } \label{fig:bound-with-units}
\end{figure*}

\begin{figure}[h!]
    \label{fig:deformation}
\centering
\includegraphics[width=7cm]{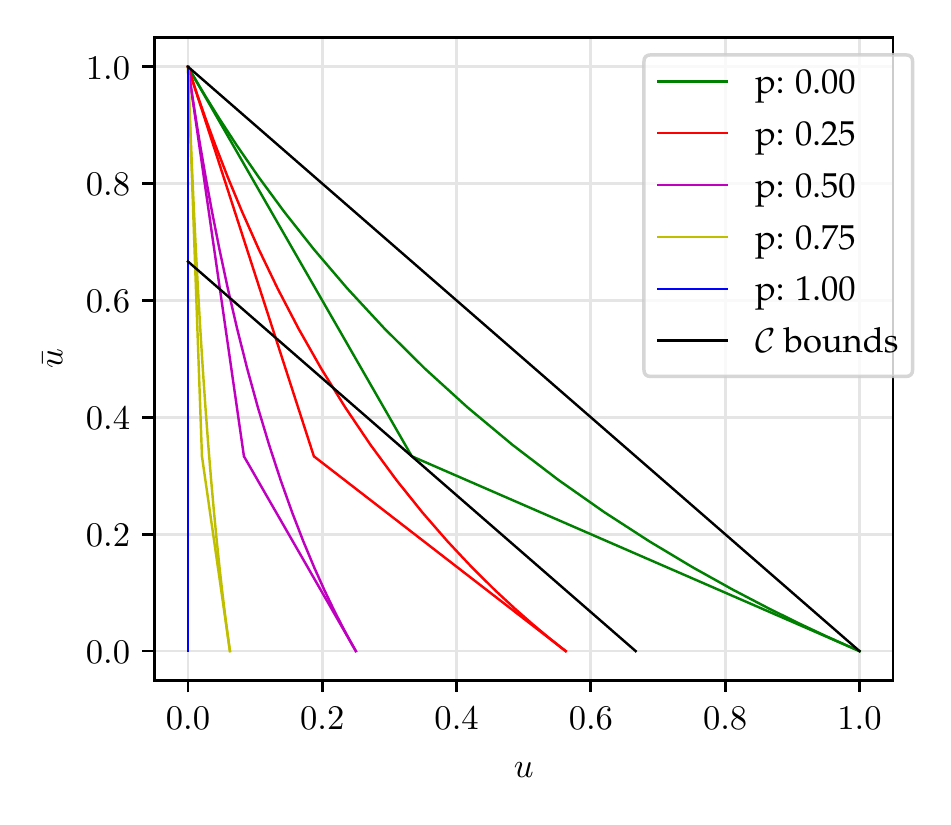}
\caption{\textbf{CUP-set deformation through depolarization} The simplest CUP-set $\C$ is shown when one output is depolarized, $(u(\D_p \circ \E),u(\bE))$  for different values of $p$.}
\label{fig:depolar-output}
\end{figure}

\section{Properties of quantum CUP-sets}
\subsection{Information disturbance bounds}\label{append:nohidingbound}

\begin{theorem}[General incompatibility bound on quantum CUP-sets]
    Given any input system $X$ of dimension $d_X$ and output systems $A$ and $B$ of dimensions $d_A, d_B$, with $d_X \le d_Ad_B$. The associated quantum CUP-set $\Q \subseteq [0,1]^2$ is confined to the band in the $(u,\bar{u})$ plane defined by 
    \begin{equation}
        \frac{d_X}{d_X+1} (\frac{1}{d_A} + \frac{1}{d_B}) \leq u + \bar{u} \leq 1.
    \end{equation}
    This bound is tight and the CUP-set $\Q$ intersects the bounding lines at $(1,0), (0,1)$ and when $d_A = d_B$ it also attains the optimal hiding point $(\frac{d_X}{d_A(d_X+1)}, \frac{d_X}{d_A(d_X+1)})$. 
\end{theorem}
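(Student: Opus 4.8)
The plan is to collapse the entire statement onto two elementary facts about the purities of the marginals of one bipartite state. Write $V:\H_X\to\H_A\x\H_B$ for the Stinespring isometry generating the compatible pair, so $\E=\tr_B[V(\cdot)V\hc]$ and $\bE=\tr_A[V(\cdot)V\hc]$. Since $\E$ and $\bE$ are mutually complementary (the defining feature of the isometric CUP-set $\C$), I can apply the purity formula (\ref{eqn:puritys-marginal-unitarities-main-text}) with $\tilde\E=\bE$ and $\tilde{\bE}=\E$. Setting $\sigma_{AB}:=V(\I/d_X)V\hc$, with marginals $\sigma_A:=\tr_B\sigma_{AB}=\E(\I/d_X)$ and $\sigma_B:=\tr_A\sigma_{AB}=\bE(\I/d_X)$, and writing $\gamma_A:=\gamma(\sigma_A)$, $\gamma_B:=\gamma(\sigma_B)$, the two unitarities add so that after the cancellation $d_X^2-1=(d_X-1)(d_X+1)$ one finds
\begin{equation}
    u+\bar u=\frac{d_X}{d_X+1}\,(\gamma_A+\gamma_B).
\end{equation}
Everything thus reduces to bounding $\gamma_A+\gamma_B$. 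The crucial structural fact is that $\sigma_{AB}=\tfrac{1}{d_X}P$ with $P=VV\hc$ a rank-$d_X$ projector, so its purity is rigidly fixed at $\gamma_{AB}:=\gamma(\sigma_{AB})=\tr[P^2]/d_X^2=1/d_X$.

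For the lower bound I would simply use that the purity of any state on a $d$-dimensional system is at least $1/d$, attained only by the maximally mixed state; hence $\gamma_A\ge 1/d_A$ and $\gamma_B\ge 1/d_B$, giving $u+\bar u\ge \frac{d_X}{d_X+1}(\tfrac1{d_A}+\tfrac1{d_B})$ at once. The upper bound is the substantive step and I would derive it from the purity subadditivity inequality $\gamma_A+\gamma_B\le 1+\gamma_{AB}$. The clean route is the swap trick: introduce a second copy $\sigma_{A'B'}$ and swap operators $S_A$ on $\H_A\x\H_{A'}$, $S_B$ on $\H_B\x\H_{B'}$, so that $\tr[(\sigma_{AB}\x\sigma_{A'B'})S_A]=\gamma_A$, likewise $\gamma_B$, and $\tr[(\sigma_{AB}\x\sigma_{A'B'})S_AS_B]=\gamma_{AB}$. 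The product of antisymmetric projectors $\tfrac{\I-S_A}{2}\x\tfrac{\I-S_B}{2}$ is positive semidefinite, so its expectation in the positive operator $\sigma_{AB}\x\sigma_{A'B'}$ is nonnegative, yielding $\tfrac14(1-\gamma_A-\gamma_B+\gamma_{AB})\ge 0$, which is precisely the claimed subadditivity. Substituting $\gamma_{AB}=1/d_X$ then gives $u+\bar u\le \frac{d_X}{d_X+1}\cdot\frac{d_X+1}{d_X}=1$, and the width of the band is independent of $d_A,d_B$.

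For tightness I would exhibit saturating isometries. The upper line $u+\bar u=1$ is reached at $(1,0)$ by an isometry that embeds $X$ into $A$ and loads a fixed state on $B$, so $\E$ is an isometry with $u(\E)=1$ by Corollary \ref{cor:u-for-isometry} while $\bE$ is completely depolarizing with $u(\bE)=0$ by Lemma \ref{lemma:gpt-unitarity-depolar-zero}; composing with a swap gives $(0,1)$. The lower line demands both marginals be simultaneously maximally mixed, which is exactly the perfect-hiding configuration; when $d_A=d_B=d$ I would realize it with the $d$-dimensional generalization of CNOT, e.g. $V\ket{j}=\tfrac{1}{\sqrt{d}}\sum_k\ket{k}_A\ket{k\oplus j}_B$, for which a direct computation gives $\sigma_A=\I_A/d_A$ and $\sigma_B=\I_B/d_B$, placing the point exactly at $\big(\tfrac{d_X}{d_A(d_X+1)},\tfrac{d_X}{d_A(d_X+1)}\big)$. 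The main obstacle is the upper-bound inequality $\gamma_A+\gamma_B\le 1+\gamma_{AB}$; once the reduction to purities and the fixed value $\gamma_{AB}=1/d_X$ are secured, the swap-operator positivity argument is the one nontrivial ingredient, and the identification of the saturating generalized-CNOT isometry in the unequal-dimension case is the remaining delicate point, which is why the hiding point is claimed only for $d_A=d_B$.
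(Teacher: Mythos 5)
Your proof is correct and follows essentially the same route as the paper's: the complementary-channel purity formula reduces $u+\bar{u}$ to $\frac{d_X}{d_X+1}(\gamma_A+\gamma_B)$, the lower bound comes from $\gamma\ge 1/d$ on each marginal, and the upper bound from the purity subadditivity $\gamma_A+\gamma_B\le 1+\gamma_{AB}$ combined with $\gamma_{AB}=1/d_X$ for the projector $VV^{\dagger}/d_X$. The only differences are cosmetic and in your favour: you prove the subadditivity inequality inline via the swap-operator positivity trick where the paper cites a reference, and you exhibit explicit saturating isometries, which the paper's appendix proof omits.
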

\begin{proof}
It can be shown~\cite{cirstoiu2020robustness} that the unitarity of a channel can be expressed as
 \begin{equation}
 u(\E) = \frac{d_X}{d_X^2-1} (d_X\tr [ \tilde{\E} (\I/d_X)^2]  - \tr [ \E (\I/d_X)^2])
 \end{equation}
 where $ \tilde{\E}$ is any complementary channel to $\E$, which we can choose to be $\bE$. Applying the above expression to the complementary pair $(\E, \bE)$ we then have that
 \begin{equation}\label{eqn:puritys-marginal-unitarities}
 u(\E) +u(\bE) = \frac{d_X}{d_X+1} (\tr [ \E (\I/d_X)^2] + \tr [ \bE (\I/d_X)^2]).
 \end{equation}
 For the lower bound, we bound each purity term individually. For any quantum state, $\rho$, for a system of dimension $d$ we have the purity is lower bounded as $\tr[\rho^2]\geq 1/d$, and therefore
 \begin{equation}
    u(\E) +u(\bE) \geq \frac{d_X}{d_X+1} (\frac{1}{d_A} + \frac{1}{d_B}).
\end{equation}
For the upper bound, we use the following property of complementary channels. We have that $\E:= \tr_B\circ \V_{X \to AB}$ \& $\bar{\E}:= \tr_A \circ \V_{X \to AB}$ for an isometric channel $\V_{X \to AB}$. Therefore the state $\rho_{AB} = \V_{X\rightarrow AB}(\I/d_X)$ has the marginals $\rho_A = \E(\I/d_X)$ and $\rho_B = \bE(\I/d_X)$. For a general bipartite quantum state $\rho_{AB}$ we have \cite{man2014deformed} that
\begin{equation}
\gamma(\rho_A) + \gamma(\rho_B) \le 1 + \gamma(\rho_{AB}),
\end{equation}
and therefore
\begin{equation}
    u(\E) + u(\bE) \le \frac{d_X}{d_X+1} (1 +\gamma(\V_{X\rightarrow AB}(\I/d_X))).
\end{equation}
As $\V_{X\rightarrow AB}$ is an isometry
\begin{align}
    \gamma(\V_{X\rightarrow AB}(\I/d_X)) = \tr [ (V(\I/d_X) V^\dagger)^2] = \frac{1}{d_X}.
\end{align}
Substituting this into the previous inequality we obtain,
\begin{equation}
    u(\E) + u(\bE) \le 1,
\end{equation}
which completes the proof.

\end{proof}

\subsection{Equivalence relations for equal dimensions}\label{append:equivalence-for-qubits}
\begin{lemma}
    Any isometry $\V_{X \to AB}$ from an input system $X$ and joint output system $AB$ with equal dimensions $d_X=d_A=d_B=d$, defines the complementary fixed dimension channels $\E:= tr_B \circ \V_{X \to AB}$ and $\bE:= tr_A \circ \V_{X \to AB}$. The pair $\E$ and $\bE$ obey the following equivalence relations 
    \begin{align}
        \E &= \U & &\iff & \bE&= \D \nonumber \\ 
        &\Updownarrow & &~ & &\Updownarrow  \\ 
        u(\E) &= 1 & & \iff & u(\bE)&= 0 \nonumber
    \end{align}
    where $\D(\rho)=\sigma$ is a completely depolarizing channel to a fixed state $\sigma$.
\end{lemma}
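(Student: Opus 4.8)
The statement has the shape of a commuting diamond of four implications, so the plan is to reduce it to a single horizontal equivalence and import the two vertical ones from results already in hand. The vertical double-arrows are not new: $\E=\U \Leftrightarrow u(\E)=1$ follows from Corollary~\ref{cor:u-for-isometry} (a unitary is an isometry, so $u=1$) together with the characterisation of \cite{cirstoiu2020robustness} that $u(\E)=1$ forces $\E$ to be an isometry, which, because $d_A=d_X$, is a genuine unitary; while $\bE=\D \Leftrightarrow u(\bE)=0$ is exactly Lemma~\ref{lemma:gpt-unitarity-depolar-zero} specialised to quantum theory. Granting these, it suffices to prove the bottom equivalence $u(\E)=1 \Leftrightarrow u(\bE)=0$, since composing arrows around the diamond then yields the top equivalence $\E=\U \Leftrightarrow \bE=\D$ for free.

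First I would dispatch the easy direction $u(\E)=1 \Rightarrow u(\bE)=0$. This is immediate from the upper bound $u+\bar{u}\le 1$ of Theorem~\ref{thm:cup-bounds}: setting $u(\E)=1$ forces $u(\bE)\le 0$, and since unitarity is non-negative, $u(\bE)=0$. No structural input beyond the already-established CUP-set bound is needed here.

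The substantive content, and the main obstacle, is the reverse direction $u(\bE)=0 \Rightarrow u(\E)=1$, which is the genuine no-hiding statement. Using the vertical equivalence I rephrase the hypothesis as $\bE=\D$, i.e. $\tr_A(\V \ketbra{\psi}\V^\dagger)=\sigma$ is one fixed state for every pure input $\ket{\psi}$. Fixing an eigendecomposition $\sigma=\sum_j \lambda_j \ketbra{j}$, and using that the pure output $\V\ket{\psi}$ has $B$-marginal equal to this fixed $\sigma$, I write $\V\ket{\psi}=\sum_{j:\lambda_j>0}\sqrt{\lambda_j}\,(U_j\ket{\psi})_A\otimes\ket{j}_B$, where $U_j:=\lambda_j^{-1/2}(\I_A\otimes\bra{j}_B)\V$ is manifestly linear (so no Schmidt-basis ambiguity survives, even for degenerate $\sigma$). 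Orthonormality of the Schmidt $A$-vectors for every $\psi$ gives $\bra{\psi}U_j^\dagger U_{j'}\ket{\psi}=\delta_{jj'}$ for all $\ket{\psi}$, which upgrades to the operator identity $U_j^\dagger U_{j'}=\delta_{jj'}\I$.

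The decisive step is now to invoke the equal-dimension hypothesis $d_X=d_A=d$: each $U_j$ is then a $d\times d$ matrix with $U_j^\dagger U_j=\I$, hence unitary, and two unitaries can never satisfy $U_j^\dagger U_{j'}=0$. Thus at most one $\lambda_j$ is nonzero, $\sigma$ is pure, and $\E=\tr_B\circ\V=\sum_j \lambda_j U_j(\cdot)U_j^\dagger = U_1(\cdot)U_1^\dagger$ is a single unitary, giving $u(\E)=1$ by the vertical equivalence. I expect this dimension collapse to be where the real work lies: for $d_A>d_X$ the identity $U_j^\dagger U_{j'}=\delta_{jj'}\I$ is consistent with several orthogonal isometries $U_j$, so $\E$ becomes a non-trivial mixed-unitary (partial-hiding) channel; it is precisely the balanced dimensions that force a single unitary. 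This is the quantitative embodiment of the no-hiding theorem of \cite{braunstein2007quantum}, and the care needed is exactly in handling the pointwise-in-$\psi$ orthonormality of the Schmidt vectors rather than assuming any global relation between outputs.
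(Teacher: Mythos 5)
Your proof is correct, but it reaches the conclusion by a genuinely different route from the paper. The paper establishes the top equivalence $\E=\U \iff \bE=\D$ directly, by invoking the fact that a complementary channel is unique up to an isometry on its output: writing $\V(\rho)=\U_{AB}(\rho\otimes\ketbra{0})$, once $\E$ is unitary the complement must be a constant pure-state preparation up to a final unitary, and conversely; the two vertical equivalences are then imported from the characterisations $u(\F)=1 \iff \F$ is an isometry and $u(\F)=0 \iff \F$ is completely depolarizing, exactly as you do. You instead close the diamond along the bottom edge: the direction $u(\E)=1\Rightarrow u(\bE)=0$ falls out of the bound $u+\bar{u}\le 1$ of Theorem~\ref{thm:cup-bounds} (whose proof does not rely on this lemma, so there is no circularity), while the converse is handled by an explicit Kraus/Schmidt analysis --- fixing the eigenbasis of the constant output $\sigma$, extracting the linear maps $U_j=\lambda_j^{-1/2}(\I\otimes\bra{j})\V$, polarising the pointwise relations $\bra{\psi}U_j^\dagger U_{j'}\ket{\psi}=\delta_{jj'}$ to the operator identity $U_j^\dagger U_{j'}=\delta_{jj'}\I$, and using $d_X=d_A$ to force each $U_j$ to be unitary so that at most one can survive. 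This is in effect a self-contained re-derivation of the no-hiding theorem, whereas the paper leans on the uniqueness-of-complementary-channels result it cites; your version is longer but more elementary, makes the role of the equal-dimension hypothesis explicit (for $d_A>d_X$ several mutually orthogonal isometries can coexist and $\E$ is only a mixed-unitary, partial-hiding channel), and delivers the purity of $\sigma$ as a by-product rather than as a separate observation.
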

\begin{proof}
    We first prove $\E = \U \iff  \bE= \D_\psi$. Note that for any quantum channel $\F$, its complementary channel $\tilde{\F}$ is unique up to an isometry on the output of $\tilde{\F}$ \cite{holevo2007complementary}. Further, we can write any isometry from $n=\log{d}$ to $2n$ qubits in the form $\V_{X \to AB}(\rho)=\U_{AB}(\rho \x \ketbra{0}^{\x n})$. Therefore it suffices to find any fixed dimension channel $\tilde{\E}$ complementary to $\E$, and apply a final unitary rotation. The isometry $\V_{X \to AB} = \U_A \x \U_B( \rho \x \ketbra{0}^{\x n} ) = \U_A(\rho) \x \U_B(\ketbra{0}^{\x n} ) $ where $\U_A$ and $\U_B$ are unitaries on the respective subsystems $A$ and $B$, gives the required form for $\E$, and we are free to set $\U_B(\ketbra{0}) = \ketbra{\psi}$ which is the exact form of $\bE$. Applying the same argument starting from $\bE$ completes the inverse direction.

    We now prove $\E = \U \iff  u(\E)=1$. For any quantum channel we have $\F=\V \iff u(\F) = 1$ for any isometry $\V$ \cite{cirstoiu2020robustness}. For fixed input and output dimensions the set of isometric channels is equivalent to the set of unitary channels, which completes the proof.
    
    Finally, we prove $\bE = \D_\psi \iff  u(\bE)=0$. For any quantum channel we have $\F=\D \iff u(\F) = 0$ for any completely depolarizing channel $\D(\rho):=\sigma$ to a fixed (potentially mixed) state $\sigma$. However given the form of $\V_{X \to AB}$ the only marginal channel that can be constructed that disregards any input state completely is given by $\bE (\rho) = \U_B(\ketbra{0}) = \ketbra{\psi}$. Therefore all completely depolarizing channels generated by $\V_{X \to AB}$ must be to pure states, $\D_\psi$, and the condition holds.
\end{proof}

\begin{corollary}
    Given any input system $X$, and output systems $A$ and $B$ of equal dimension $d_X=d_A=d_B$. The associated quantum CUP-set $\C$ is confined such that if
    \begin{equation}
        (u,\bar{u}) = (0,x) \Rightarrow x = 1.
    \end{equation}
\end{corollary}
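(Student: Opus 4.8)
The plan is to read off this corollary as the $(u,\bar u)$-language transcription of the preceding Lemma, applied with the roles of the two output factors interchanged. By definition of the isometric CUP-set, a point of the form $(u,\bar u)=(0,x)$ in $\C$ means there is an isometry $\V_{X\to AB}$ whose marginals $\E:=\tr_B\circ\V_{X\to AB}$ and $\bE:=\tr_A\circ\V_{X\to AB}$ satisfy $u(\E)=0$ and $u(\bE)=x$. So the whole task reduces to showing that, in the equal-dimension case $d_X=d_A=d_B=d$, the condition $u(\E)=0$ forces $u(\bE)=1$.

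First I would use the characterisation of vanishing unitarity: by Lemma~\ref{lemma:gpt-unitarity-depolar-zero} (specialised to quantum theory), $u(\E)=0$ holds if and only if $\E=\D$ is a completely depolarising channel. This converts the numerical hypothesis $u(\E)=0$ into the structural statement $\E=\D$.

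Next I would invoke the preceding Lemma, which establishes the equivalences $\E=\U\iff\bE=\D$ together with $u(\E)=1\iff u(\bE)=0$. The key observation is that the hypotheses of that Lemma are symmetric under swapping the two output tensor factors $A$ and $B$, since all three systems share the common dimension $d$. Relabelling $A\leftrightarrow B$ simply interchanges $\E\leftrightarrow\bE$, so the Lemma also delivers the mirror statement $\bE=\U\iff\E=\D$. Feeding in $\E=\D$ from the previous step yields $\bE=\U$, an isometry, and then Corollary~\ref{cor:u-for-isometry} gives $u(\bE)=1$, i.e.\ $x=1$.

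The main point to be careful about is the legitimacy of the symmetry step together with the fact that a complementary channel is only unique up to a final isometry on its output. However, since unitarity is invariant under post-composition with isometries (Lemma~\ref{lemma:unitary-invariance}), the value $u(\bE)$ is independent of this ambiguity, so relabelling the output factors genuinely carries the original Lemma onto the needed mirror version with no extra work. Beyond this bookkeeping, the corollary is an immediate repackaging of the Lemma, so I do not anticipate any substantive obstacle.
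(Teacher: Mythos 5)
Your proposal is correct and follows essentially the same route as the paper, whose proof is simply that the corollary follows from the preceding lemma (whose four-way equivalence, applied with the output labels $A$ and $B$ interchanged, gives $u(\E)=0\iff u(\bE)=1$) together with the definitions of $u$ and $\bar u$. Your extra detour through Lemma~\ref{lemma:gpt-unitarity-depolar-zero} and the structural statements $\E=\D$, $\bE=\U$ is harmless but not needed, since the swapped lemma already contains the numerical equivalence directly.
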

\begin{proof}
    This follows directly from the previous lemma, and the definitions of $u$ and $\bar{u}$.
\end{proof}

\subsection{Upper bound for marginals of globally reversible channels}
\begin{lemma}\label{lemma:T-matrix-combination}
    For any convex combination of channels $\E = \sum_i^r p_i \E_i$, the respective $T_\E$ matrix has the form:
    \begin{equation}
        T_\E = \sum_i^r p_i T_{\E_i}.
    \end{equation}
\end{lemma}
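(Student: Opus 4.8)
The plan is to exploit the single structural fact that the Liouville (transfer-matrix) representation of a map is \emph{linear} in that map, so that extracting the non-unital block commutes with forming linear combinations; convexity plays no role beyond guaranteeing that the result is again a channel. Following the setup introduced just before Lemma~\ref{lemma:eigenvalue-lower-bound-rand-unit}, I would first fix an orthonormal Hermitian basis $\{\sigma_0,\sigma_1,\dots,\sigma_{d^2-1}\}$ of $\B(\H)$ with respect to the Hilbert--Schmidt inner product $\<A,B\> := \tr(A\hc B)$, chosen so that $\sigma_0 \propto \I$ and the remaining $\sigma_k$ are traceless. In this fixed basis the Liouville representation of any channel is the $d^2\times d^2$ matrix with entries $(\hat{\E})_{jk} := \<\sigma_j, \E(\sigma_k)\>$; trace preservation forces the top row to be $(1,0,\dots,0)$, and $T_\E$ is by definition the lower-right $(d^2-1)\times(d^2-1)$ block, i.e. the restriction of $(\hat{\E})_{jk}$ to indices $j,k\in\{1,\dots,d^2-1\}$. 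Since all the $\E_i$ act on the same input and output space, their representations are matrices of the same size written in this one common basis, and so the index set defining the block is identical for each.

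The key step is then immediate. Writing $\E = \sum_i^r p_i\E_i$ and using linearity of $\E$ together with linearity of the inner product in its second argument, for every pair of indices $j,k$ we have
\begin{equation}
  (\hat{\E})_{jk} = \<\sigma_j, \E(\sigma_k)\> = \<\sigma_j, \textstyle\sum_i^r p_i\,\E_i(\sigma_k)\> = \sum_i^r p_i\,\<\sigma_j, \E_i(\sigma_k)\> = \sum_i^r p_i\,(\hat{\E}_i)_{jk}.
\end{equation}
Restricting both sides to indices $j,k\ge 1$ selects the non-unital block on each side and yields $T_\E = \sum_i^r p_i T_{\E_i}$, as claimed.

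I expect no genuine obstacle here — the content is entirely the linearity of the representation map $\E\mapsto\hat{\E}$ and of block extraction. The only points worth flagging explicitly are that the matrix identity in fact holds for an \emph{arbitrary} linear combination, with the constraints $p_i\ge 0$ and $\sum_i p_i = 1$ needed only to ensure $\E$ is itself a valid channel rather than for the equation itself, and that the block is well defined precisely because the basis, and hence the index range $\{1,\dots,d^2-1\}$, is shared across all $\E_i$. I would close by noting that this is exactly the property that lets convex decompositions of a reversible channel, $\R = \sum_i p_i\V_i$ from Corollary~\ref{cor:rev-upper-bound}, be transported to its transfer matrix in the subsequent reversible-CUP-set arguments.
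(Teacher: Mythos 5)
Your proof is correct and takes essentially the same route as the paper: both arguments simply expand the matrix elements of $T_\E$ in a fixed (traceless) operator basis and invoke linearity of the channel to pull the convex combination outside the inner product. Your additional remarks about the top row and the role of the convexity constraints are accurate but not needed for the identity itself, exactly as you note.
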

\begin{proof}
    $T_\E = \sum_{j,k}^{d_X^2-1,d_Y^2-1} \braket{y_k}{\E(x_j)}  \ketbra{y_k}{x_j} = \sum_{j,k}^{d_X^2-1,d_Y^2-1} \braket{y_k}{\sum_i^r p_i \E_i(x_j)}  \ketbra{y_k}{x_j}$. However as quantum channels are linear, this is $\sum_i^r p_i \sum_{j,k}^{d_X^2-1,d_Y^2-1} \braket{y_k}{ \E_i(x_j)}  \ketbra{y_k}{x_j} = \sum_i^r p_i T_{\E_i}$.
\end{proof}

\begin{lemma}
    The unitarity $u(\E)$ is a convex function of any quantum channel $\E$.
\end{lemma}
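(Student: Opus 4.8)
The plan is to exploit the fact that the unitarity depends on the channel only through the Hilbert--Schmidt norm of its non-unital block $T_\E$, combined with the linearity of $\E \mapsto T_\E$ established in Lemma~\ref{lemma:T-matrix-combination}. Concretely, for a fixed-dimension channel $\E$ (with $d_X = d_Y = d$) one has the quadratic representation
\begin{equation}
    u(\E) = \frac{1}{d^2-1}\,\tr[T_\E^\dagger T_\E] = \frac{1}{d^2-1}\,\|T_\E\|_{HS}^2 .
\end{equation}
This follows by expanding the definition~(\ref{unitarity:haar-defn}) in an orthonormal Hermitian operator basis $\{P_0 = \I/\sqrt{d}, P_1, \dots, P_{d^2-1}\}$: because the input $\psi - \I/d$ is traceless its $P_0$-component vanishes, so only the block $T_\E$ acts on it (the non-unital shift vector never contributes), and the Haar average over $\psi$ collapses by Schur's lemma to a multiple of $\tr[T_\E^\dagger T_\E]$. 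The constant is fixed by $u(id)=1$ using $T_{id} = \I_{d^2-1}$. First I would establish, or simply cite since it is asserted in the main text, this proportionality of the unitarity to the squared Hilbert--Schmidt norm of $T_\E$.

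With this representation in hand the argument is immediate. Given a convex combination $\E = \sum_i^r p_i \E_i$ with $p_i \ge 0$ and $\sum_i p_i = 1$, Lemma~\ref{lemma:T-matrix-combination} gives $T_\E = \sum_i^r p_i T_{\E_i}$. Since $\|\cdot\|_{HS}$ is a norm, the map $A \mapsto \|A\|_{HS}^2$ is convex, so by Jensen's inequality (equivalently, the triangle inequality followed by the power-mean bound $(\sum_i p_i a_i)^2 \le \sum_i p_i a_i^2$),
\begin{equation}
    \|T_\E\|_{HS}^2 = \left\|\sum_i^r p_i T_{\E_i}\right\|_{HS}^2 \le \sum_i^r p_i \,\|T_{\E_i}\|_{HS}^2 .
\end{equation}
Dividing through by $d^2-1$ yields $u(\E) \le \sum_i^r p_i\, u(\E_i)$, which is exactly the assertion that $u$ is convex on the (convex) set of channels.

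The only genuine obstacle is the first step: confirming that the unitarity is, up to a fixed positive constant, the squared Hilbert--Schmidt norm of $T_\E$ \emph{alone}, and that the non-unital part of the Liouville representation plays no role. Once this quadratic-form structure is secured, convexity is forced purely by the linearity of $\E \mapsto T_\E$ and the elementary convexity of a squared norm; no deeper property specific to quantum channels is needed beyond what fixes the block structure. It is worth remarking that the same argument shows $u$ is convex in any theory whose unitarity can be written as a positive quadratic form in a linear representation of the channel, so the result is not special to the fixed-dimension case used elsewhere in the paper.
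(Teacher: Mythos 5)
Your proof is correct and follows essentially the same route as the paper's: both invoke the linearity $T_{\E} = \sum_i p_i T_{\E_i}$ from Lemma~\ref{lemma:T-matrix-combination} and then the convexity of the squared ($l_2$/Hilbert--Schmidt) norm to conclude $u(\E) \le \sum_i p_i\, u(\E_i)$. Your additional care in justifying the identity $u(\E) \propto \norm{T_\E}^2$ (which the paper simply asserts via ``the appropriate dimension factor'') is a welcome tightening rather than a departure.
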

\begin{proof}
    From Lemma \ref{lemma:T-matrix-combination}, for any convex combination of channels $\E = \sum_i^r p_i \E_i$ the corresponding $T_\E$ matrix is the convex combination of each individual term, $T_\E = \sum_i^r p_i T_{\E_i}$. 
    All norms are convex non-negative functions, including the $l_2$ norm, $||\cdot||$ \cite{horn2012matrix}. Further,  if $f(x)$ is convex and non-negative function of $x$ then $f(x)^2$ is also convex. Therefore $||T_\E||^2$ is a convex function of $T_\E$.
    Putting this together with the appropriate dimension factor we have
    \begin{equation}
        u(\E) = \alpha ||\sum_i^r p_i T_{\E_i}||^2 \leq \sum_i^r p_i \alpha ||T_{\E_i}||^2 = \sum_i^r p_i u(\E_i)
    \end{equation}
    which completes the proof.
\end{proof}

\begin{corollary}\label{cor:rev-upper-bound}
    For a reversible channel $\R$ to a bipartite system $AB$, we define the marginals $\E_\R := \tr_B \circ \R$ and $\bE_\R := \tr_A \circ \R$. The sum of the marginal unitarities obey the following bound:
    \begin{equation}
        u(\E_\R) + u(\bE_\R) \leq 1.
    \end{equation}
\end{corollary}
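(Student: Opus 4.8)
The plan is to combine three ingredients that are all available at this point: the characterisation of reversible quantum channels, the linearity of the partial trace, and the convexity of unitarity established in the preceding lemma, together with the isometric upper bound of Theorem~\ref{thm:cup-bounds}. The strategy is to reduce the reversible case to a convex mixture of isometric cases, each of which already obeys $u + \bar{u} \le 1$.

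First I would decompose $\R$ into isometries. By the characterisation of reversible quantum channels, any such $\R$ acts as $\R(\rho) = U(\rho \otimes \sigma)U^\dagger$ for some unitary $U$ and a fixed (generally mixed) state $\sigma$. Writing $\sigma = \sum_i^r p_i \ketbra{\psi_i}$ in its spectral decomposition and using linearity gives $\R = \sum_i^r p_i \V_i$, where $\V_i(\rho) := U(\rho \otimes \ketbra{\psi_i})U^\dagger$. Each $\V_i$ appends a \emph{pure} state and then conjugates by a unitary, so a short computation shows $\langle \V_i(\rho), \V_i(\tau)\rangle = \langle \rho, \tau\rangle$; hence every $\V_i$ is an isometry channel from $X$ to $AB$. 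This is precisely the decomposition asserted (and deferred to this corollary) in the main text.

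Next I would push the two partial traces through the convex sum. Since $\tr_B$ and $\tr_A$ are linear, $\E_\R = \tr_B \circ \R = \sum_i^r p_i \E_i$ with $\E_i := \tr_B \circ \V_i$, and likewise $\bE_\R = \sum_i^r p_i \bE_i$ with $\bE_i := \tr_A \circ \V_i$. For each fixed $i$ the pair $(\E_i, \bE_i)$ consists of the two marginals of the single isometry $\V_i$, so it defines a point of the isometric CUP-set $\C$, and the upper bound of Theorem~\ref{thm:cup-bounds} applies termwise to give $u(\E_i) + u(\bE_i) \le 1$. Finally, invoking convexity of the unitarity (the preceding lemma) gives $u(\E_\R) \le \sum_i^r p_i\, u(\E_i)$ and $u(\bE_\R) \le \sum_i^r p_i\, u(\bE_i)$; summing and applying the per-isometry bound yields
\[
u(\E_\R) + u(\bE_\R) \;\le\; \sum_i^r p_i \big( u(\E_i) + u(\bE_i) \big) \;\le\; \sum_i^r p_i \;=\; 1,
\]
which is the claim.

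The main subtlety—more a point requiring care than a genuine obstacle—is that the isometric upper bound must apply termwise, so each $\V_i$ has to be a \emph{genuine} isometry whose marginals really lie in $\C$. This is exactly why the \emph{spectral} decomposition of $\sigma$ into pure states $\ketbra{\psi_i}$ is used rather than an arbitrary decomposition: had $\sigma$ been carried along as a mixed state, the map $\rho \mapsto U(\rho \otimes \sigma)U^\dagger$ would only preserve inner products up to a factor $\tr[\sigma^2] < 1$ and would fail to be an isometry. Everything else is bookkeeping: linearity of the partial trace to split the marginals, and the one-line convexity estimate.
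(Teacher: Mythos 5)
Your proof is correct and follows essentially the same route as the paper's: decompose $\R = \sum_i p_i \V_i$ via the spectral decomposition of the ancilla state $\sigma$, push the partial traces through by linearity, apply the isometric bound $u(\E_i)+u(\bE_i)\le 1$ termwise, and conclude via convexity of the unitarity. Your additional remark on why the spectral (pure-state) decomposition is needed for each $\V_i$ to be a genuine isometry is a sensible clarification of a point the paper leaves implicit.
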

\begin{proof}
    We can always write a reversible channel $\R$ as the convex combination of isometries $\V_i$ as 
    \begin{equation}
        \R(\rho)=\U_{AB}(\rho \x \sigma) = \U_{AB}(\rho \x \sum_i^r p_i \psi_i) = \sum_i^r p_i \U_{AB}(\rho \x \psi_i) = \sum_i^r p_i \V_i
    \end{equation}
    for some pure states $\psi_i = \ketbra{\psi_i}$.
    Therefore we can write the marginal channel $\E_\R$ as
    \begin{equation}
        \begin{split}
            \E_\R &= \tr_B \circ \R = \sum_i^r p_i \tr_B \circ \V_i = \sum_i^r p_i \E_i 
        \end{split}
    \end{equation}
    and similarly for $\bE_\R$. As the unitarity is convex
    \begin{equation}
        u(\E_\R) = u(\sum_i^r p_i \E_i) \leq \sum_i^r p_i u(\E_i)
    \end{equation}
    and similarly for $\bE_\R$. As $u(\E_i) + u(\bE_i) \leq 1$ for all $i$, we have
    \begin{equation}
        u(\E_\R) + u(\bE_\R) \leq \sum_i^r p_i (u(\E_i) + u(\bE_i)) \leq \sum_i^r p_i = 1.
    \end{equation}
    Which completes the proof. 
\end{proof}

\section{Properties of unitarity}\label{append:unitarity-props}
\subsection{Properties of the eigenvalues of $T$ and a bound on unitarity}\label{append:eigenvalue-props}

\begin{lemma}
    For any quantum channel $\E$, of fixed dimension $d$ and unitary channels $\U_i$ and $\U_j$ of the same dimension, we have
    \begin{equation}
        u(\E) \geq  \sum_k^3 \frac{\abs{\lambda_k(\U_i \circ \E \circ \U_j)^2}}{d^2 - 1}
    \end{equation}
    where for any quantum channel $\F$, $\{\lambda_k(\F)\}$ are the eigenvalues of the associated matrix $T_\F$. 
\end{lemma}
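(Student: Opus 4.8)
The plan is to reduce the statement to two facts that are already in hand: the expression of unitarity as a normalised Hilbert--Schmidt norm of the non-unital block $T_\E$, and the unitary-invariance of unitarity. The only genuinely new input is the elementary matrix inequality that the sum of squared eigenvalue magnitudes of a square matrix never exceeds its squared Hilbert--Schmidt norm (equivalently, the sum of its squared singular values). Everything else is bookkeeping.

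First I would pin down the normalisation. As used in the proof of the convexity lemma, any quantum channel of dimension $d$ satisfies $u(\F) = \alpha \|T_\F\|^2$ with the Hilbert--Schmidt norm, and since $T_{id}$ is the $(d^2-1)\times(d^2-1)$ identity with $\|T_{id}\|^2 = d^2-1$, the requirement $u(id)=1$ forces $\alpha = 1/(d^2-1)$. Hence $u(\F) = \frac{1}{d^2-1}\|T_\F\|^2 = \frac{1}{d^2-1}\sum_k \sigma_k(T_\F)^2$, where $\{\sigma_k(T_\F)\}$ are the singular values of $T_\F$.

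Next I would pass from $\E$ to $\G := \U_i \circ \E \circ \U_j$. By the two-sided unitary invariance of unitarity (Lemma~\ref{lemma:unitary-invariance}, i.e. $u(\V_1 \circ \E \circ \V_2) = u(\E)$ for isometries $\V_1,\V_2$, which includes unitaries) we have $u(\G) = u(\E)$. Equivalently, one may note that the Liouville representation is multiplicative on the non-unital block, so $T_\G = T_{\U_i}\,T_\E\,T_{\U_j}$ with $T_{\U_i}, T_{\U_j}$ orthogonal, whence $\|T_\G\| = \|T_\E\|$; both routes give the same conclusion. This is the step that makes the arbitrary surrounding unitaries drop out, and is really the only place where the structure of the problem is used.

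Finally I would invoke Schur's inequality. For any square matrix $M$ with eigenvalues $\{\lambda_k\}$ and singular values $\{\sigma_k\}$, Schur triangularisation gives $M = W \Delta W^\dagger$ with $\Delta$ upper triangular and diagonal entries $\lambda_k$, so $\|M\|^2 = \|\Delta\|^2 = \sum_k |\lambda_k|^2 + (\text{strictly-upper part}) \ge \sum_k |\lambda_k|^2$. Applying this to $M = T_\G$ and then discarding all but any three of the non-negative terms $|\lambda_k(\G)|^2$ yields
\[
\sum_{k=1}^3 |\lambda_k(\G)|^2 \;\le\; \sum_k |\lambda_k(\G)|^2 \;\le\; \|T_\G\|^2 \;=\; (d^2-1)\,u(\G) \;=\; (d^2-1)\,u(\E),
\]
which rearranges to the claimed bound. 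I do not expect a serious obstacle here: the reduction $u(\G)=u(\E)$ is already proven, the normalisation is a one-line check, and the eigenvalue--singular-value inequality is standard. The restriction to three eigenvalues is harmless, since every term dropped is non-negative and so only weakens the right-hand side; in the single-qubit case $d=2$ there are exactly $d^2-1=3$ eigenvalues and the bound uses all of them, which is the regime relevant for the subsequent variational characterisation.
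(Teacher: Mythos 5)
Your proposal is correct and follows essentially the same route as the paper: the paper also writes $(d^2-1)\,u(\E)=\sum_k\sigma_k(T_\E)^2$, uses invariance of the singular values under the surrounding unitaries, and bounds $\sum_k|\lambda_k|^2$ by $\sum_k\sigma_k^2$ via Weyl's majorant theorem, which at the exponent $p=2$ is exactly the Schur inequality you invoke. Your explicit remark that restricting to three of the non-negative terms only weakens the bound is a small but worthwhile clarification that the paper leaves implicit.
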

\begin{proof}
    For any quantum channel $\F$, $T_\F$ is real matrix \cite{girling2021estimation}. From Weyl's Majorant Theorem \cite{bhatia2013matrix}, $ \sum_k^{d^2-1} \abs{\lambda_k(\F)^2} \leq \sum_k^{d^2-1} \sigma_k(\F)^2$ , where $\lambda_k(\F)$ denote eigenvalues and $\sigma_k(\F)$ singular values of $T_\F$. With $\F = \U_i \circ \E \circ \U_j$ for unitary channels $\U_i$ \& $\U_j$ it follows that
    \begin{equation}
        \sum_k^{d^2-1} \abs{\lambda_k(\U_i \circ \E \circ \U_j)^2} \leq \sum_k^{d^2-1} \sigma_k(\U_i \circ \E \circ \U_j)^2.
    \end{equation}
    Further more 
    \begin{equation}
        \sum_k^{d^2-1} \sigma_k(\U_i \circ \E \circ \U_j)^2 = \sum_k^{d^2-1} \sigma_k(\E)^2 = (d^2 -1) \ u(\E)
    \end{equation}
    from the invariance of singular values under unitary rotations, which completes the proof.
\end{proof}

\begin{lemma}
    For any single qubit quantum channel $\E$, over all single qubit unitary channels $\{\U_i\}$ and $\{\U_j\}$ we have
    \begin{equation}
        u(\E) = \max_{\U_i,\U_j} \sum_k^3 \frac{\abs{\lambda_k(\U_i \circ \E \circ \U_j)^2}}{3}.
    \end{equation}
    where  for any quantum channel $\F$, $\{\lambda_k(\F)\}$ are the eigenvalues of the associated matrix $T_\F$.
\end{lemma}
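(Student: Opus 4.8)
The plan is to establish the two inequalities separately. The inequality $\max_{\U_i,\U_j} \sum_k |\lambda_k(\U_i\circ\E\circ\U_j)|^2/3 \le u(\E)$ is already in hand: it is exactly the content of the preceding lemma (Lemma~\ref{lemma:eigenvalue-lower-bound-rand-unit}) specialised to $d=2$, since that bound holds for \emph{every} choice of unitary channels $\U_i,\U_j$ and therefore also for the pair maximising the left-hand side. So the whole task reduces to exhibiting a single pair $(\U_i,\U_j)$ that saturates the bound, i.e. for which $\sum_k |\lambda_k|^2 = \sum_k \sigma_k^2 = 3\,u(\E)$, where the $\sigma_k$ are the singular values of $T_\E$ and the last equality is the identity $\sum_k\sigma_k(\E)^2 = (d^2-1)u(\E)$ used in the preceding proof.

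First I would pass to the Bloch representation. For a single qubit the non-unital block $T_\E$ is a real $3\times 3$ matrix, a single-qubit unitary channel $\U$ acts on it as a proper rotation $R_\U\in SO(3)$, and composition of channels corresponds to multiplication of these matrices, so $T_{\U_i\circ\E\circ\U_j}=R_{\U_i}\,T_\E\,R_{\U_j}$. Since the homomorphism $SU(2)\to SO(3)$ is surjective, I am free to let $R_{\U_i},R_{\U_j}$ range over all of $SO(3)$. The idea is then to rotate $T_\E$ into diagonal form: a diagonal matrix is trivially normal, its eigenvalues are its diagonal entries, and their magnitudes coincide with the singular values, giving $|\lambda_k|^2=\sigma_k^2$ term by term.

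Concretely I would take a singular value decomposition $T_\E = O_1\,\Sigma\,O_2^T$ with $O_1,O_2\in O(3)$ and $\Sigma=\mathrm{diag}(\sigma_1,\sigma_2,\sigma_3)$, so that $O_1^T T_\E O_2 = \Sigma$. If $O_1,O_2$ happen to lie in $SO(3)$, then taking $R_{\U_i}=O_1^T$ and $R_{\U_j}=O_2$ finishes the argument. The hard part --- really the only subtlety --- is that physical unitary channels correspond only to \emph{proper} rotations, whereas the SVD factors may have determinant $-1$. I would handle this by absorbing a reflection: replacing $O_a$ by $O_a D_a$ with $D_a=\mathrm{diag}(1,1,\pm1)$ chosen to force determinant $+1$ turns $\Sigma$ into $D_1\Sigma D_2$, which merely flips signs of some diagonal entries. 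Because $|\pm\sigma_k|^2=\sigma_k^2$, the resulting matrix $T_{\U_i\circ\E\circ\U_j}=D_1\Sigma D_2$ is still diagonal and still satisfies $\sum_k|\lambda_k|^2=\sum_k\sigma_k^2=3\,u(\E)$, while now $R_{\U_i},R_{\U_j}\in SO(3)$ correspond to genuine single-qubit unitary channels.

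Combining the two directions yields the claimed equality. I expect the proper-rotation bookkeeping above to be the main obstacle; everything else --- the Bloch correspondence, the invariance of the singular values under the rotations $R_{\U_i},R_{\U_j}$, and the Frobenius-norm identity $\sum_k\sigma_k^2 = 3\,u(\E)$ --- is standard and already available from the earlier results.
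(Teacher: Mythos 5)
Your proposal is correct and follows essentially the same route as the paper: the upper bound is inherited from Lemma~\ref{lemma:eigenvalue-lower-bound-rand-unit}, and the equality is achieved by rotating $T_\E$ into diagonal form so that eigenvalue magnitudes coincide with singular values and the Frobenius-norm identity $\sum_k\sigma_k^2=3u(\E)$ applies. The only difference is that the paper simply cites the existence of the two diagonalising unitaries, whereas you supply that step explicitly via the SVD together with the determinant-correcting sign matrices $D_a=\mathrm{diag}(1,1,\pm1)$, which makes your version self-contained.
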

\begin{proof}
    From the previous lemma we have $\sum_k^3 \abs{\lambda_k(\U_i \circ \E \circ \U_j)^2} \leq \sum_k^3 \sigma_k({\U_i \circ \E \circ \U_j})^2$ for all $\U_i$ \& $\U_j$. However for a single qubit channel, $\E$, we can always find \cite{bengtsson2017geometry} two specific unitaries  ${\U_{AB}}_1$ \& ${\U_{AB}}_2$  such that $T_{{\U_{AB}}_1 \circ \E \circ {\U_{AB}}_2}$ is a diagonal matrix, and therefore the eigenvalues and singular values are equal $\{\lambda_i({{\U_{AB}}_1 \circ \E \circ {\U_{AB}}_2}) \} = \{ \sigma_i({{\U_{AB}}_1 \circ \E \circ {\U_{AB}}_2}) \}$ and
    \begin{equation}\label{eqn:singular-vals-inequal}
        \sum_k^3 \abs{\lambda_k({{\U_{AB}}_1 \circ \E \circ {\U_{AB}}_2})^2} = \sum_k^3 \sigma_k({{\U_{AB}}_1 \circ \E \circ {\U_{AB}}_2})^2,
    \end{equation}
    saturating the inequality. Therefore over the complete set of single qubit unitary channels $\{\U_i\}$ and $\{\U_j\}$, from the above two equations
    \begin{equation}
        \max_{\U_i,\U_j} \sum_k^3 \abs{\lambda_k({\U_i \circ \E \circ \U_j})^2} = \sum_k^3 \sigma_k({{\U_{AB}}_1 \circ \E \circ {\U_{AB}}_2})^2
    \end{equation}
    which completes the proof.
\end{proof}

\subsection{Choi-Jamiołkowski isomorphism}
For a quantum channel $\E$ with input dimension $d_X$  the Choi-Jamiołkowski state is given by
\begin{equation}
    \J(\E) := \E \x id( \psi )
\end{equation}
where $\psi = \ketbra{\psi}$ with $ \ket{\psi} := \frac{1}{\sqrt{d_X}} \sum_i^{d_X} \ket{i} \x \ket{i}$, a generalized Bell state \cite{nielsen2002quantum}.

\section{Analytical form of marginal unitarities for surfaces}\label{sec:analytcal-form-marginals}
\subsection{Analytical form for marginal unitarities of $SWAP^\alpha$ isometry}\label{sec:swapalpha}

\begin{lemma}\label{lemma:swap-marginal-unitarities}
    For the isometry $ \V_{\alpha}(\rho):= SWAP^\alpha ( \rho \x \ketbra{0} )$ where $0 \leq \alpha \leq 1$, we define the marginals $\E_{\alpha} (\rho) := \tr_B[\V(\rho)_{\alpha}]$ and $\bE_{\alpha} (\rho) := \tr_A[\V(\rho)_{\alpha}]$. The unitarities of each marginal are
    \begin{equation}
        u(\E_{\alpha}) = \frac{(1-s)(3 -s )}{3}
    \end{equation}
    and
    \begin{equation}
        u(\bE_{\alpha}) = 1 - \frac{(1-s)(3+s)}{3}
    \end{equation}
    respectively, where $s=\sin^2(\frac{\pi \alpha}{2})$. 
\end{lemma}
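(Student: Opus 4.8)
The plan is to obtain the two marginal channels in closed form and then read off their unitarities from the associated $T$-matrices, using the qubit identity $u(\F) = \tfrac{1}{3}\,\tr[T_\F^{T} T_\F]$. This identity is the specialization of the relation $u(\F)=\alpha\lVert T_\F\rVert^2$ (used in the convexity lemma of Appendix~\ref{append:unitarity-props}) to $d=2$, where $\alpha = 1/(d^2-1)=1/3$ is fixed by $u(id)=1$ together with $T_{id}=\I_3$.

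The first step is to replace $\swap^\alpha$ by a convenient generator. Since $\swap$ acts as $+1$ on the three-dimensional symmetric subspace and $-1$ on the one-dimensional antisymmetric subspace, one has $\swap^\alpha = P_{\mathrm{sym}} + e^{i\pi\alpha}P_{\mathrm{asym}}$; discarding the global phase $e^{i\pi\alpha/2}$ (irrelevant under conjugation) gives the form $W := \cos\theta\,\I - i\sin\theta\,\swap$ with $\theta := \pi\alpha/2$, so that $s=\sin^2\theta$. Conjugation by $\swap^\alpha$ then equals conjugation by $W$, which is much easier to expand.

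Next I would expand $W(\rho\x\ketbra{0})W^\dagger = c^2\tau + icn(\tau S - S\tau) + n^2 S\tau S$, writing $c=\cos\theta$, $n=\sin\theta$, $S=\swap$ and $\tau=\rho\x\ketbra{0}$, and take partial traces term by term. The cross terms collapse through the partial-swap identities $\tr_B[S(M_A\x N_B)]=NM$, $\tr_B[(M_A\x N_B)S]=MN$ and their $\tr_A$ analogues (with the roles of $M,N$ interchanged), together with $\tr_B[S\tau S]=\ketbra{0}$ and $\tr_A[S\tau S]=\rho$. This yields
\[
\E_\alpha(\rho) = c^2\rho + icn[\rho,\ketbra{0}] + n^2\,\tr[\rho]\,\ketbra{0},
\]
\[
\bE_\alpha(\rho) = n^2\rho - icn[\rho,\ketbra{0}] + c^2\,\tr[\rho]\,\ketbra{0};
\]
note the sign of the commutator term flips between the two marginals, which is exactly the $\tr_A$ versus $\tr_B$ ordering. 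As a consistency check, both maps are trace-preserving and Hermiticity-preserving, and $\alpha=0,1$ recover $(id,\D)$ and $(\D,id)$ respectively.

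Finally I would evaluate on the traceless Paulis to extract the $T$-matrices. Using $\ketbra{0}=\tfrac12(\I+Z)$ and $[X,Z]=-2iY$, $[Y,Z]=2iX$, $[Z,Z]=0$, the constant ($\tr[\rho]$) terms drop out and I obtain, in the ordered basis $(X,Y,Z)$,
\[
T_{\E_\alpha} = \begin{pmatrix} c^2 & -cn & 0 \\ cn & c^2 & 0 \\ 0 & 0 & c^2 \end{pmatrix}, \qquad
T_{\bE_\alpha} = \begin{pmatrix} n^2 & cn & 0 \\ -cn & n^2 & 0 \\ 0 & 0 & n^2 \end{pmatrix}.
\]
Summing squared entries gives $3c^4+2c^2n^2$ and $3n^4+2c^2n^2$; dividing by $3$ and substituting $c^2=1-s$, $n^2=s$ is then purely mechanical and collapses to $u(\E_\alpha)=(1-s)(1-\tfrac{1}{3}s)=\tfrac{(1-s)(3-s)}{3}$ and $u(\bE_\alpha)=\tfrac{s(s+2)}{3}=1-\tfrac{(1-s)(3+s)}{3}$, as claimed. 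The main obstacle is the bookkeeping in the second step: keeping the operator ordering straight in the partial-swap identities and tracking the commutator sign differently for the two partial traces; once the closed forms are correct, everything downstream is a short computation.
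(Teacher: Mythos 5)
Your proposal is correct and follows essentially the same route as the paper's proof in Appendix~\ref{sec:swapalpha}: diagonalise $SWAP$ to write $\swap^\alpha$ as a linear combination of $\I$ and $\swap$, conjugate $\rho\x\ketbra{0}$, take partial traces, assemble the $3\times 3$ unital blocks $T_{\E_\alpha}$ and $T_{\bE_\alpha}$, and evaluate $u=\tfrac13\tr[T^{T}T]$. The only differences are cosmetic (you factor out the global phase to get real/imaginary coefficients $c^2,\,icn,\,n^2$ and record the marginals in commutator form before reading off $T$; your off-diagonal sign convention is the transpose of the paper's, which does not affect the sum of squared entries), and your final expressions agree with the paper's.
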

If we consider the sum of the marginals from Lemma \ref{lemma:swap-marginal-unitarities} we have
\begin{equation}
    u(\E_{\alpha}) + u(\bE_{\alpha}) = 1- \frac{2s(1-s)}{3}
\end{equation}
with $0 \leq s \leq 1$ and produce a tighter bound on the marginals, namely for any isometry with $d_X=d_A=d_B=2$, for a given $u(\E)$ we have
\begin{equation}
    u(\bE) \leq 3 + u(\E) - 2 \sqrt{1 + 3 u(\E)}.
\end{equation}

\begin{proof}{(of Lemma \ref{lemma:swap-marginal-unitarities})}
First we must obtain a useful analytical form for $SWAP^\alpha$. 
As $SWAP$ is a unitary channel to derive the analytical form it is sufficient to find the unitary matrix $U$ that transforms the two qubit pure state $\ket{\psi}\x\ket{\phi}$ such that
\begin{equation}
    U\ket{\psi}\x\ket{\phi} = \ket{\phi}\x\ket{\psi}.
\end{equation}
From this definition we can write
\begin{equation}
    \begin{split}
        U &= \ketbra{00}{00} + \ketbra{10}{01} + \ketbra{01}{10} + \ketbra{11}{11}, \\
    &= \frac{1}{2}(\I^{\x 2} + X^{\x 2} + Y^{\x 2} + Z^{\x 2}),
    \end{split}
\end{equation}
where $\{ \I, X, Y, Z \}$ are the Pauli matrices on 1 qubit.
Defining the Bell states as $\ket{\Phi_\pm}:=\frac{1}{\sqrt{2}}(\ket{00} \pm \ket{11})$ and $\ket{\Psi_\pm}:=\frac{1}{\sqrt{2}}(\ket{01} \pm \ket{10})$, we can diagonalise this unitary as
\begin{equation}
    \begin{split}
        U &= \ketbra{\Phi_+} + \ketbra{\Phi_-} + \ketbra{\Psi_+} - \ketbra{\Psi_-}, \\
        &= \ketbra{\Phi_+} + \ketbra{\Phi_-} + \ketbra{\Psi_+} + e^{i \pi} \ketbra{\Psi_-}.
    \end{split}
\end{equation}
As $SWAP^\alpha(\rho) = U^\alpha \rho (U^\alpha)^\dag$, up to a global phase we can find \cite{fan2005optimal}
\begin{equation}
    U^\alpha = \ketbra{\Phi_+} + \ketbra{\Phi_-} + \ketbra{\Psi_+} + e^{i \pi \alpha} \ketbra{\Psi_-},
\end{equation}
and through careful expansion
\begin{equation}
    \begin{split}
        U^\alpha &= \ketbra{00} + \ketbra{11} + \frac{1}{2}(1+e^{i \pi \alpha})(\ketbra{01}{01} + \ketbra{10}{10}) + \frac{1}{2}(1-e^{i \pi \alpha})(\ketbra{01}{10} + \ketbra{10}{01}), \\
        &= \frac{1}{2}( \I^{\x 2} +  Z^{\x 2} ) + \frac{1}{4}(1+e^{i \pi \alpha})(\I^{\x 2} -  Z^{\x 2}) + \frac{1}{4}(1-e^{i \pi \alpha})(X^{\x 2} + Y^{\x 2}), \\
        &= \frac{1}{2}(1+e^{i \pi \alpha})\I^{\x 2} + \frac{1}{4}(1-e^{i \pi \alpha})(\I^{\x 2} + X^{\x 2} + Y^{\x 2} + Z^{\x 2}), \\
        &= \frac{1}{2}(1+e^{i \pi \alpha})\I^{\x 2} + \frac{1}{2}(1-e^{i \pi \alpha})U.
    \end{split}
\end{equation}
If we now expand the isometry definition we have
\begin{equation}
    \begin{split}
        \V(\rho)_{\alpha} &= U^\alpha \rho \x \ketbra{0} (U^\alpha)^\dag, \\
        &= (\frac{1}{2}(1+e^{i \pi \alpha})\I^{\x 2} + \frac{1}{2}(1-e^{i \pi \alpha})U) (\rho \x \ketbra{0} ) (\frac{1}{2}(1+e^{-i \pi \alpha})\I^{\x 2} + \frac{1}{2}(1-e^{-i \pi \alpha})U^\dag), \\
        &= \cos(\frac{\pi \alpha}{2})^2 \I^{\x 2}(\rho \x \ketbra{0} )\I^{\x 2} + \sin(\frac{\pi \alpha}{2})^2 U(\rho \x \ketbra{0} ) U^\dag  \\
        & \hspace{2cm} + \frac{i}{2}\sin(\pi \alpha)\I^{\x 2} (\rho \x \ketbra{0} ) U^\dag -  \frac{i}{2}\sin(\pi \alpha) U (\rho \x \ketbra{0} ) \I^{\x 2}, \\
        &= \cos(\frac{\pi \alpha}{2})^2 \rho \x \ketbra{0}  + \sin(\frac{\pi \alpha}{2})^2 \ketbra{0} \x \rho   \\
        & \hspace{2cm} + \frac{i}{2}\sin(\pi \alpha)\I^{\x 2} (\rho \x \ketbra{0} ) U^\dag -  \frac{i}{2}\sin(\pi \alpha) U (\rho \x \ketbra{0} ) \I^{\x 2}. \\
    \end{split}
\end{equation}
From this point it is relatively straightforward to show that the unital block $T$ for the 1 qubit channels $\E_{\alpha}$ \& $\bE_{\alpha}$ will be
\begin{equation}
    T_{\E,\alpha} =
    \bordermatrix{~ 
        & \vket{X/\sqrt{2}} & \vket{Y/\sqrt{2}} & \vket{Z/\sqrt{2}} \cr
        \vbra{X/\sqrt{2}} &   \cos(\frac{\pi \alpha}{2})^2 & \frac{1}{2}\sin(\pi \alpha) & 0   \cr
        \vbra{Y/\sqrt{2}}  & -\frac{1}{2}\sin(\pi \alpha) &  \cos(\frac{\pi \alpha}{2})^2 & 0 \cr
        \vbra{Z/\sqrt{2}}  & 0 & 0 &  \cos(\frac{\pi \alpha}{2})^2 \cr
        },
\end{equation}
and
\begin{equation}
    T_{\bE,\alpha} =
    \bordermatrix{~ 
        & \vket{X/\sqrt{2}} & \vket{Y/\sqrt{2}} & \vket{Z/\sqrt{2}} \cr
        \vbra{X/\sqrt{2}} &   \sin(\frac{\pi \alpha}{2})^2 & -\frac{1}{2}\sin(\pi \alpha) & 0   \cr
        \vbra{Y/\sqrt{2}}  & \frac{1}{2}\sin(\pi \alpha) &  \sin(\frac{\pi \alpha}{2})^2 & 0 \cr
        \vbra{Z/\sqrt{2}}  & 0 & 0 &  \sin(\frac{\pi \alpha}{2})^2 \cr
        },
\end{equation}
respectively.
Therefore the unitarity of $\E_{\alpha}$ is given by
\begin{equation}
    u(\E_{\alpha}) = \frac{1}{3} \tr[T_{\E,\alpha}^\dag T_{\E,\alpha}] = \cos(\frac{\pi \alpha}{2})^4 + \frac{1}{6}\sin(\pi \alpha)^2  = \frac{1}{6}\cos(\frac{\pi \alpha}{2})^2(5 + \cos(\pi \alpha)).
\end{equation}
For the other marginal, the unitarity of $\bE_{\alpha}$ is given by
\begin{equation}
    u(\bE_{\alpha}) = \frac{1}{3} \tr[T_{\bE,\alpha}^\dag T_{\bE,\alpha}] = \sin(\frac{\pi \alpha}{2})^4 + \frac{1}{6}\sin(\pi \alpha)^2  = \frac{1}{6}\sin(\frac{\pi \alpha}{2})^2(5 - \cos(\pi \alpha)).
\end{equation}
This completes the proof.
\end{proof}

\subsection{Analytical form for marginal unitarities of $CNOT_{AB}^\alpha$ isometry}\label{sec:cnotalpha}

\begin{lemma}\label{lemma:CNOT-marginal-unitarities}
    For the isometry $ \V(\rho)_{\alpha} := CNOT_{AB}^\alpha ( \rho \x \ketbra{0} )$ where $0 \leq \alpha \leq 1$, we define the marginals $\E_{\alpha} (\rho) := \tr_B[\V(\rho)_{\alpha}]$ and $\bE_{\alpha} (\rho) := \tr_A[\V(\rho)_{\alpha}]$. The unitarities of each marginal are
    \begin{equation}
        u(\E_{\alpha}) = 1- \frac{2s}{3}
    \end{equation}
    and
    \begin{equation}
        u(\bE_{\alpha}) = \frac{s}{3}
    \end{equation}
    respectively, where $s=\sin^2(\frac{\pi \alpha}{2})$.
\end{lemma}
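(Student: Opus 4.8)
The plan is to follow exactly the template of the proof of Lemma \ref{lemma:swap-marginal-unitarities}, exploiting that $CNOT_{AB}$, like $SWAP$, is an involution ($CNOT_{AB}^2=\I^{\x 2}$). First I would diagonalise $CNOT_{AB}$: it fixes every computational basis vector except for swapping $\ket{10}\leftrightarrow\ket{11}$, so its only nontrivial eigenspace is the one-dimensional $-1$ eigenspace spanned by $\tfrac{1}{\sqrt 2}(\ket{10}-\ket{11})$, with eigenvalue $+1$ on the orthogonal complement. Because the square is the identity, taking the fractional power collapses to a two-term expression
\begin{equation}
CNOT_{AB}^\alpha = \tfrac{1}{2}(1+e^{i\pi\alpha})\,\I^{\x 2} + \tfrac{1}{2}(1-e^{i\pi\alpha})\,CNOT_{AB},
\end{equation}
which is the exact analogue of the form obtained for $SWAP^\alpha$.

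Next I would substitute this into $\V(\rho)_\alpha = CNOT_{AB}^\alpha(\rho\x\ketbra{0})(CNOT_{AB}^\alpha)^\dagger$ and expand. Writing $c_\pm=\tfrac12(1\pm e^{i\pi\alpha})$ and using $|c_+|^2=\cos^2(\tfrac{\pi\alpha}{2})$, $|c_-|^2=\sin^2(\tfrac{\pi\alpha}{2})$ and $c_+c_-^*=\tfrac{i}{2}\sin(\pi\alpha)$, this yields the same three-term shape as in the $SWAP^\alpha$ case,
\begin{equation}
\begin{split}
\V(\rho)_\alpha &= \cos^2(\tfrac{\pi\alpha}{2})\,\rho\x\ketbra{0} + \sin^2(\tfrac{\pi\alpha}{2})\,CNOT_{AB}(\rho\x\ketbra{0})CNOT_{AB} \\
&\quad + \tfrac{i}{2}\sin(\pi\alpha)\big[(\rho\x\ketbra{0})CNOT_{AB} - CNOT_{AB}(\rho\x\ketbra{0})\big].
\end{split}
\end{equation}

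Then I would take the two partial traces. The useful simplifications are that $CNOT_{AB}(\rho\x\ketbra{0})CNOT_{AB}$ copies the computational-basis populations of $\rho$ onto both systems, so both $\tr_B$ and $\tr_A$ of this term return the completely dephased state $\mathrm{diag}(\rho)$; the first term reduces to $\rho$ under $\tr_B$ but to the fixed state $\ketbra{0}$ under $\tr_A$; and the cross-term traces can be evaluated elementwise from $\bra{j0}CNOT_{AB}$. Evaluating the resulting marginal channels on the Pauli basis $\{X,Y,Z\}$ produces the non-unital blocks
\begin{equation}
T_{\E,\alpha} = \begin{pmatrix} \cos^2(\tfrac{\pi\alpha}{2}) & -\tfrac{1}{2}\sin(\pi\alpha) & 0 \\ \tfrac{1}{2}\sin(\pi\alpha) & \cos^2(\tfrac{\pi\alpha}{2}) & 0 \\ 0 & 0 & 1 \end{pmatrix}, \qquad T_{\bE,\alpha} = \begin{pmatrix} 0 & 0 & 0 \\ 0 & 0 & \tfrac{1}{2}\sin(\pi\alpha) \\ 0 & 0 & \sin^2(\tfrac{\pi\alpha}{2}) \end{pmatrix}.
\end{equation}

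Finally I would compute $u(\E_\alpha)=\tfrac{1}{3}\tr[T_{\E,\alpha}^\dagger T_{\E,\alpha}]$ and $u(\bE_\alpha)=\tfrac{1}{3}\tr[T_{\bE,\alpha}^\dagger T_{\bE,\alpha}]$ and re-express everything through $s=\sin^2(\tfrac{\pi\alpha}{2})$ using $\cos^2(\tfrac{\pi\alpha}{2})=1-s$ and $\sin^2(\pi\alpha)=4s(1-s)$, obtaining $u(\E_\alpha)=\tfrac{1}{3}\big(2(1-s)^2+2s(1-s)+1\big)=\tfrac{1}{3}(3-2s)=1-\tfrac{2s}{3}$ and $u(\bE_\alpha)=\tfrac{1}{3}\big(s(1-s)+s^2\big)=\tfrac{s}{3}$. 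I expect the only delicate step to be the bookkeeping of the anti-Hermitian cross-term under the two partial traces, namely tracking which off-diagonal matrix elements survive $\tr_A$ versus $\tr_B$ and confirming that they recombine into the correct Pauli operators (the $X\!\leftrightarrow\! Y$ rotation appearing in $T_{\E,\alpha}$ and the $Z\!\to\! Y$ leakage appearing in $T_{\bE,\alpha}$). The boundary cases $\alpha=0$ (giving identity and erasure marginals) and $\alpha=1$ (both marginals the dephasing channel, $u=\tfrac13$) provide convenient consistency checks on the final formulae.
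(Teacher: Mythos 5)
Your proposal is correct and follows essentially the same route as the paper's proof: obtain an explicit form for $CNOT_{AB}^\alpha$ (you via the involution/projector decomposition, the paper via Hadamard conjugation to a controlled phase --- the two expressions for $U^\alpha$ coincide), expand the conjugated isometry, read off the unital blocks $T_{\E,\alpha}$ and $T_{\bE,\alpha}$, and evaluate $u=\frac{1}{3}\tr[T^\dagger T]$. The only discrepancy is that your off-diagonal $\frac{1}{2}\sin(\pi\alpha)$ entries appear transposed (with opposite signs) relative to the paper's matrices, which is a row/column convention issue and does not affect the Frobenius norms or the final unitarities.
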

If we consider the sum of the marginals from Lemma \ref{lemma:CNOT-marginal-unitarities} we have
\begin{equation}
    u(\E_{\alpha}) + u(\bE_{\alpha}) =  1- \frac{s}{3}
\end{equation}
with $0 \leq s \leq 1$.
\begin{proof}(of Lemma \ref{lemma:CNOT-marginal-unitarities})
    The proof follows in a similar way to the $SWAP^\alpha$ case.
    For the $CNOT_{AB}$ channel we use the notation $CNOT_{AB}(\rho) = U \rho U^\dag$, to clarify that we mean the unitary matrix $U$ itself. We can diagonalise $U$ w.r.t. the computational basis by applying a hadamard transform $H = (Z + X)/\sqrt{2}$ on the target qubit before and after such that the sandwiched unitary is the controlled phase gate:
    \begin{equation}
        \begin{split}
            U &= \frac{1}{2}(\I \x \I + Z \x \I + \I \x X - Z \x X),\\
            &= \frac{1}{2}(\I \x H \I H + Z \x H \I H + \I \x HZH - Z \x HZH), \\
            &= (\I \x H) \frac{1}{2} (\I \x \I + Z \x \I + \I \x Z - Z \x Z )(\I \x H), \\
            &= (\I \x H) (\ketbra{00} + \ketbra{01} + \ketbra{10} -\ketbra{11}) (\I \x H), \\
            &= (\I \x H) (\ketbra{00} + \ketbra{01} + \ketbra{10} + e^{i \pi}\ketbra{11}) (\I \x H).
        \end{split}
    \end{equation}
    Therefore we have
    \begin{equation}
        \begin{split}
            U^\alpha &= (\I \x H) (\ketbra{00} + \ketbra{01} + \ketbra{10} + e^{i \pi \alpha}\ketbra{11}) (\I \x H), \\
            &= \ketbra{0} \x \I + \frac{1}{2}(1 + e^{i \pi \alpha})(\ketbra{1} \x \I) + \frac{1}{2}(1 - e^{i \pi \alpha})(\ketbra{1} \x X).
        \end{split}
    \end{equation}
    From the isometry definition we have $\V(\rho)_{\alpha} = U^\alpha \rho \x \ketbra{0} (U^\alpha)^\dag $, substituting in the definition of $U^\alpha$ we can show that the unital block $T$ for the 1 qubit channels $\E_{\alpha}$ \& $\bE_{\alpha}$ will be
    \begin{equation}
        T_{\E,\alpha} =
        \bordermatrix{~ 
            & \vket{X/\sqrt{2}} & \vket{Y/\sqrt{2}} & \vket{Z/\sqrt{2}} \cr
            \vbra{X/\sqrt{2}} &   \cos^2(\frac{\pi \alpha}{2}) & \frac{1}{2}\sin(\pi \alpha) & 0   \cr
            \vbra{Y/\sqrt{2}}  & -\frac{1}{2}\sin(\pi \alpha) &  \cos^2(\frac{\pi \alpha}{2}) & 0 \cr
            \vbra{Z/\sqrt{2}}  & 0 & 0 &  1\cr
            },
    \end{equation}
    and
    \begin{equation}
        T_{\bE,\alpha} =
        \bordermatrix{~ 
            & \vket{X/\sqrt{2}} & \vket{Y/\sqrt{2}} & \vket{Z/\sqrt{2}} \cr
            \vbra{X/\sqrt{2}} & 0 & 0 & 0   \cr
            \vbra{Y/\sqrt{2}}  & 0 & 0 & 0 \cr
            \vbra{Z/\sqrt{2}}  & 0 & \frac{1}{2}\sin(\pi \alpha) &  \sin^2(\frac{\pi \alpha}{2}) \cr
            },
    \end{equation}
    respectively. Therefore, with some multiplication, the unitarity of $\E_{\alpha}$ is given by $u(\E_{\alpha})  = 1- \frac{2}{3} \sin^2(\frac{\alpha \pi}{2})$ and the unitarity of $\bE_{\alpha}$ is given by $ u(\bE_{\alpha}) = \frac{1}{3} \sin^2(\frac{\alpha \pi}{2})$.
    This completes the proof.
\end{proof}

\subsection{Analytical form for marginal unitarities of $CNOT_{BA}^\alpha \circ CNOT_{AB}$ isometry}\label{sec:cnotalpha-cnot}

\begin{lemma}\label{lemma:cnotalpha-cnot-marginal-unitarities}
    For the isometry $ \V(\rho)_{\alpha} := CNOT_{BA}^\alpha \circ CNOT_{AB} ( \rho \x \ketbra{0} )$ where $0 \leq \alpha \leq 1$, we define the marginals $\E_{\alpha} (\rho) := \tr_B[\V(\rho)_{\alpha}]$ and $\bE_{\alpha} (\rho) := \tr_A[\V(\rho)_{\alpha}]$. The unitarities of each marginal are
    \begin{equation}
        u(\E_{\alpha}) = \frac{1}{3}(1-s)
    \end{equation}
    and
    \begin{equation} 
        u(\bE_{\alpha}) = 1- \frac{2}{3}(1-s) 
    \end{equation}
    respectively, where $s=\sin^2(\frac{\pi \alpha}{2})$.
\end{lemma}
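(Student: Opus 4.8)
The plan is to follow exactly the template established in the proofs of Lemma \ref{lemma:swap-marginal-unitarities} and Lemma \ref{lemma:CNOT-marginal-unitarities}: obtain a closed form for the fractional gate, expand the isometry acting on $\rho \x \ketbra{0}$, read off the $3\times 3$ blocks $T_{\E,\alpha}$ and $T_{\bE,\alpha}$ of the two marginals' Liouville representation in the normalised Pauli basis $\{X/\sqrt{2}, Y/\sqrt{2}, Z/\sqrt{2}\}$, and finally evaluate $u(\E_{\alpha}) = \tfrac{1}{3}\tr[T_{\E,\alpha}^\dagger T_{\E,\alpha}]$ and likewise for $\bE_{\alpha}$.

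First I would write down $CNOT_{BA}^\alpha$ in closed form. This is the identical diagonalisation used for $CNOT_{AB}^\alpha$ in Lemma \ref{lemma:CNOT-marginal-unitarities}, but with the roles of control and target (that is, of $A$ and $B$) exchanged, so that
\begin{equation*}
CNOT_{BA}^\alpha = \I \x \ketbra{0} + \tfrac{1}{2}(1+e^{i\pi\alpha})\, \I \x \ketbra{1} + \tfrac{1}{2}(1-e^{i\pi\alpha})\, X \x \ketbra{1}.
\end{equation*}
Writing the total unitary as $W^\alpha := CNOT_{BA}^\alpha\, CNOT_{AB}$, a useful simplification is to apply the full $CNOT_{AB}$ first: it maps $\ket{a}_A\ket{0}_B \mapsto \ket{a}_A\ket{a}_B$, so the intermediate state $CNOT_{AB}(\rho \x \ketbra{0})$ is supported entirely on $\mathrm{span}\{\ket{00},\ket{11}\}$. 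On this support $CNOT_{BA}^\alpha$ acts as the identity on $\ket{00}$ and as the fractional bit-flip $X^\alpha$ on the first qubit of $\ket{11}$, which keeps the subsequent algebra compact.

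With the explicit output state $\V(\rho)_{\alpha} = W^\alpha (\rho \x \ketbra{0}) (W^\alpha)^\dagger$ in hand, I would trace out $B$ and then $A$ to obtain $\E_\alpha$ and $\bE_\alpha$ and tabulate their Liouville blocks, exactly as in the two preceding lemmas. I expect $T_{\E,\alpha}$ and $T_{\bE,\alpha}$ to be as sparse as in the $CNOT_{AB}^\alpha$ case, with the $\cos^2(\tfrac{\pi\alpha}{2})$, $\sin^2(\tfrac{\pi\alpha}{2})$ and $\tfrac12\sin(\pi\alpha)$ entries distributed so that $\tr[T_{\E,\alpha}^\dagger T_{\E,\alpha}] = (1-s)$ and $\tr[T_{\bE,\alpha}^\dagger T_{\bE,\alpha}] = 3 - 2(1-s)$, giving the stated unitarities after the $\tfrac{1}{3}$ normalisation.

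The main obstacle is purely bookkeeping: correctly propagating the off-diagonal cross terms (the $\tfrac{i}{2}\sin(\pi\alpha)$-type contributions that appear in the earlier proofs) through the composition of the two gates and the two partial traces, so that the surviving Pauli-transfer entries are assigned to the correct marginal. I would guard against sign or labelling errors with two endpoint checks: at $\alpha = 0$ (so $s=0$) the map collapses to $CNOT_{AB}(\rho \x \ketbra{0})$ and must reproduce the optimal-hiding point $(u,\bar{u}) = (\tfrac{1}{3},\tfrac{1}{3})$, while at $\alpha = 1$ (so $s=1$) it must give $(0,1)$; both are consistent with the claimed formulas and with the linear lower boundary $\bar{u} = 1 - 2u$ reported in the main text.
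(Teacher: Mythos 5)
Your proposal is correct and follows essentially the same route as the paper's proof: the same closed form for $CNOT_{BA}^{\alpha}$ obtained by diagonalising in the Hadamard-rotated basis, the same expansion of $\V(\rho)_{\alpha}$, and the same evaluation of $u = \tfrac{1}{3}\tr[T^{\dagger}T]$ on the two Pauli-transfer blocks, with your predicted traces $\tr[T_{\E,\alpha}^{\dagger}T_{\E,\alpha}] = 1-s$ and $\tr[T_{\bE,\alpha}^{\dagger}T_{\bE,\alpha}] = 1+2s$ matching the matrices the paper tabulates. The observation that $CNOT_{AB}(\rho \x \ketbra{0})$ is supported on $\mathrm{span}\{\ket{00},\ket{11}\}$ is a small but genuine streamlining of the bookkeeping, and your endpoint checks at $\alpha=0$ and $\alpha=1$ are consistent with the paper's boundary curves.
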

If we consider the sum of the marginals from Lemma \ref{lemma:cnotalpha-cnot-marginal-unitarities} we have
\begin{equation}
    u(\E_{\alpha}) + u(\bE_{\alpha}) =  1- \frac{1}{3}(1-s)
\end{equation}
with $0 \leq s \leq 1$.
\begin{proof}(of Lemma \ref{lemma:cnotalpha-cnot-marginal-unitarities})
    Proof follows in the same way as the previous two lemmas. From the previous lemma we can write the unitary matrix  for the channel $CNOT_{BA}(\rho):= U_{BA} \rho U_{BA}^\dag$ as
    \begin{equation}
        U_{BA} = (H \x \I) (\ketbra{00} + \ketbra{01} + \ketbra{10} + e^{i \pi}\ketbra{11}) (H \x \I),
    \end{equation}
    and therefore
    \begin{equation}
        \begin{split}
            U_{BA}^\alpha &= (H \x \I) (\ketbra{00} + \ketbra{01} + \ketbra{10} + e^{i \pi \alpha}\ketbra{11}) (H \x \I), \\
            &= \I \x \ketbra{0} + \frac{1}{2}(1 + e^{i \pi \alpha})(\I \x \ketbra{1}) + \frac{1}{2}(1 - e^{i \pi \alpha})(X \x \ketbra{1}).
        \end{split}
    \end{equation}
    The unitary matrix for the channel $CNOT_{AB}(\rho):= U_{AB} \rho U_{AB}^\dag$ is given in the Pauli basis as
    \begin{equation}
        U_{AB} = \frac{1}{2}(\I \x \I + Z \x \I + \I \x X - Z \x X),
    \end{equation}
    therefore from the isometry definition $\V(\rho)_{\alpha} =U_{BA}^\alpha  U_{AB} \rho \x \ketbra{0} U_{AB}^{\dag} (U_{BA}^\alpha)^\dag$ we can show that the unital block $T$ for the 1 qubit channels $\E_{\alpha}$ \& $\bE_{\alpha}$ will be
    \begin{equation}
        T_{\E,\alpha} =
        \bordermatrix{~ 
            & \vket{X/\sqrt{2}} & \vket{Y/\sqrt{2}} & \vket{Z/\sqrt{2}} \cr
            \vbra{X/\sqrt{2}} & 0 & 0 & 0   \cr
            \vbra{Y/\sqrt{2}}  & 0 & 0 & 0 \cr
            \vbra{Z/\sqrt{2}}  & 0 & -\frac{1}{2}\sin(\pi \alpha) &  \cos^2(\frac{\pi \alpha}{2}) \cr
            },
    \end{equation}
    and
    \begin{equation}
        T_{\bE,\alpha} =
        \bordermatrix{~ 
            & \vket{X/\sqrt{2}} & \vket{Y/\sqrt{2}} & \vket{Z/\sqrt{2}} \cr
            \vbra{X/\sqrt{2}} &   \sin^2(\frac{\pi \alpha}{2}) & -\frac{1}{2}\sin(\pi \alpha) & 0   \cr
            \vbra{Y/\sqrt{2}}  & \frac{1}{2}\sin(\pi \alpha) &  \sin^2(\frac{\pi \alpha}{2}) & 0 \cr
            \vbra{Z/\sqrt{2}}  & 0 & 0 &  1\cr
            },
    \end{equation}
    respectively. Therefore, with some multiplication, the unitarity of $\E_{\alpha}$ is given by $u(\E_{\alpha})  = \frac{1}{3} \cos^2(\frac{\alpha \pi}{2}) $ and the unitarity of $\bE_{\alpha}$ is given by $ u(\bE_{\alpha}) = 1 - \frac{2}{3} \cos^2(\frac{\alpha \pi}{2}) $.
    This completes the proof.

\end{proof}

\section{SPAM robust estimation through interleaved RB}\label{append:interleaveduRB}
\subsection{Interleaved unitarity protocol for $(\E,\bE)$ without noise}\label{append:interleaveduRB-noiseless}
We now give a sketch of the proof for Protocol \ref{protocol:inter-uRB-E_A}. Define the elements of the Clifford group on qubit $A$ to be $\{\U_{A,i}\}$. We define channel induced by averaging over many Clifford unitaries as
\begin{equation}
    \U_A := \frac{1}{N} \sum_i^N \C_{A,i}.
\end{equation}
In the case $\Lambda=\Lambda_C=id$, the circuit diagram representation of Protocol \ref{protocol:inter-uRB-E_A} is
\begin{equation}
    \begin{array}{c}
        \Qcircuit @C=1.2em @R=.7em {
             \lstick{\rho_A} &  \gate{\U_A} &  \multigate{1}{{\U_{AB}}}    &  \gate{\U_A}  & \rstick{M_A}  \qw \\
             \lstick{\ketbra{0}} & \gate{\D} & \ghost{{\U_{AB}}}   & \gate{\D}  \qw & \blacktriangleright  \qw \gategroup{1}{2}{2}{3}{1.0em}{--} \\
            & & \dstick{ \mathrm{Repeat} \ k-1 \ \mathrm{times} \ \ \ \ \ \ \ \ \ \ \ \ \ \ }
    }
    \end{array}
    \vspace{0.6cm}
\end{equation}
where $\blacktriangleleft$ indicates the channel preparing $\ketbra{0}$ and $\blacktriangleright$ the trace operation. As $\blacktriangleright \blacktriangleleft = \D$, the circuit reduces to
\begin{equation}
    \begin{array}{c}
        \Qcircuit @C=1.2em @R=.7em {
            \lstick{\rho_A} & \qw &  \gate{\U_A} &  \multigate{1}{{\U_{AB}}}    &  \gate{\U_A} & \qw  &\rstick{M_A}   \qw \\
            ~ & \blacktriangleleft & \qw & \ghost{{\U_{AB}}}   & \qw  & \blacktriangleright  \qw & ~  \gategroup{1}{2}{2}{6}{1.9em}{--} \\
            & & & \dstick{\mathrm{Repeat} \ k-1 \ \mathrm{times}}
    }
    \end{array}
    \vspace{0.6cm}
\end{equation}
which further reduces to
\begin{equation}
    \begin{array}{c}
        \Qcircuit @C=1.2em @R=.7em {
            \lstick{\rho_A} &  \gate{\U_A} &  \gate{\E}    &  \gate{\U_A}  &\rstick{M_A}   \qw  \gategroup{1}{2}{1}{4}{1.0em}{--} \\
            & & \dstick{\mathrm{Repeat} \ k-1 \ \mathrm{times}}
    }
    \end{array}
    \vspace{0.6cm}
\end{equation}
Which is exactly the right form for the circuit to estimate $u(\E)$. The decay parameter $e_1$ in Protocol \ref{protocol:inter-uRB-E_A} is exactly $u(\E)$ in this idealised case.

The protocol to estimate $u(\bE)$ is the same as for $u(\E)$ replacing $\U_{AB}$ with $\swap \circ \U_{AB}$. This follows from the fact that
\begin{equation}
    \bE (\rho) := \tr_A \circ \ {\U_{AB}} (\rho \x \ketbra{0}) = \tr_B \circ \swap \circ {\U_{AB}} (\rho \x \ketbra{0}).
\end{equation}
or as a circuit diagram
\begin{equation}
    \begin{array}{ccc}
        \Qcircuit @C=0.0em @R=1.0em {
            - \hspace{2.4mm}  & \gate{\bE} & \hspace{2.4mm} -  \qw \\
        }
    & ~ \hspace{0.4cm} ~ &
        \Qcircuit @C=1.0em @R=1.0em {
               & \multigate{1}{{\U_{AB}}} &  \qswap &  \qw  \\
            \blacktriangleleft  & \ghost{{\U_{AB}}} &  \qswap \qwx &  \qw \blacktriangleright \inputgroup{1}{2}{.75em}{=}  \\
        }
    \end{array}
\end{equation}
where we implicitly assume $d_A=d_B$, which is appropriate in this two qubit case.

\subsection{Efficient implementation of protocols}\label{sec:efficient-protocols}

In the experiments for the SPAM robust CUP-set, we perform an \textit{efficient} unitarity RB protocol, as introduced in \cite{dirkse2019efficient}. The protocol allows for rigorous bounds on the variance in the  associated decay curve, and therefore the value of unitarity extracted. We summarize the efficient unitarity RB protocol applied to our scheme here, where we consider $\E$ as a black box single qubit channel whose implementation is detailed in Appendix \ref{append:interleaveduRB-noiseless}.

\begin{protocol}[Efficient interleaved unitarity RB for $\E$.]
    \label{protocol:efficientC}
    \begin{enumerate}[wide, labelwidth=!, labelindent=0pt]
        \setlength{\itemsep}{2pt}
        \setlength{\parskip}{0pt}
        \setlength{\parsep}{0pt}
        \item \textbf{Select} a random sequence, $\U_{\textbf{k}} := \U_{k} \circ \E \circ \U_{k-1} \circ \E \circ ... \circ \U_2 \circ \E \circ \U_1$, of random Clifford gates interleaved with target channel $\E$.
        \item \textbf{Prepare} the system in the state $\rho_{\pm,i} := \frac{1}{d}(\ident \pm P_i)$ for all non-identity elements $P_i$, of the Pauli group $P \neq \ident$. In the single qubit input and output case, the states $\rho_{\pm,i}$ are pure states given by $\rho_{\pm,i} = \{ \ket{+}, \ket{-}, \ket{+i}, \ket{-i}, \ket{0}, \ket{1}\}$.
        \item \textbf{Estimate} the average purity of the sequence across all possible traceless input and output Pauli	:
        \begin{equation*}
            q_{\textbf{k}} = \frac{1}{d^2-1} \sum_{i,j}^{d^2-1} (\tr[P_j   \ \C_{\textbf{k}}(\rho_{+,i})] - \tr[P_j \ \C_{\textbf{k}}(\rho_{-,i})])^2.
        \end{equation*}
        \item \textbf{Repeat 1, 2 \& 3} for $N_k$ random sequences of length $k$, finding the average estimation $\mathbb{E}[q_{\textbf{k}}] := \frac{1}{N_k} \sum_{\textbf{k}}^{\textbf{N}_{\textbf{k}}} q_{\textbf{k}}$.
        \item \textbf{Repeat 1, 2, 3 \& 4} increasing the length of the sequence, e.g. $k = k + 1$.
        \item \textbf{Fit} the data with $\mathbb{E}[q_{\textbf{k}}] = c_1 s^{k-1} $, to find $s$ the estimated value of $u(\E)$. 
    \end{enumerate}
\end{protocol}

\subsection{Interleaved unitarity protocol for $(\E,\bE)$ with noise}\label{append:general-noise-in-protocol}
In this section we set out the minimum assumptions required to produce interleaved unitarity RB circuits, where all operations are assumed to be noisy. We then show how this effects the estimation of CUP-sets. For channels, states, functions, any $X$, we write the noisy version $\noisy{X}$.

For a two qubit system when we implement any gate or mid-circuit measurement, the noise associated with the process may effect the whole device. Therefore we should model errors as bipartite quantum channels. We make two simplifying assumptions about these errors. Firstly, we consider the noise to be fixed across the Clifford group gateset, such that $\Lambda_{\C,i} = \Lambda_\C$ for all $\U_i$. E.g. ${\U}_{i,N} =  \Lambda_\C \circ \U_i \x id_B$. Secondly, we assume the reset of a qubit is perfectly incoherent, but potentially noisy. Therefore the total channel can be written as ${\D}_{B,N} = \Lambda_{\D} \circ id_A \x \D_B$, with a general bipartite error channel $\Lambda_{\D}$.

A direct consequence of these two assumptions is that we can write the noisy version of the summation of Clifford unitaries ${\U}_{A,N}$ and the reset operation as
\begin{equation}
    \begin{array}{ccc}
    \Qcircuit @C=1.0em @R=1.0em {
        & \multigate{1}{\U_{A,N} \x \D_{B,N}} & \qw \\
        & \ghost{\U_{A,N} \x \D_{B,N}} &  \qw  \\
    }
    & ~ \hspace{0.4cm} ~ &
    \Qcircuit @C=1.0em @R=1.0em {
        & \gate{\U_{A}}  & \multigate{1}{\Lambda_C} & \qw \\
         & \gate{\D_{B}} & \ghost{\Lambda_C} &  \qw  \inputgroup{1}{2}{.75em}{=} \\
    }
    \end{array}
\end{equation}
where $\Lambda_C$ is an error channel associated with the operations together. Putting this together with a noisy version of the interleaved unitary $\U_{AB,N} = \U_{AB} \circ \Lambda_{AB}$ we can write a noisy version of the circuit for Protocol \ref{protocol:inter-uRB-E_A}:

\begin{equation}
    \begin{array}{c}
        \Qcircuit @C=1.2em @R=.7em {
             \lstick{\rho_{N}} & \gate{\U_A} & \multigate{1}{\Lambda_C}     &  \multigate{1}{{{\U_{AB}}}} &  \multigate{1}{\Lambda_{\U_{AB}}}  & \gate{\U_A}  & \multigate{1}{\Lambda_C}    & \rstick{M_{A,N}}  \qw \\
             \lstick{\noisy{\ketbra{0}}} & \gate{\D_B} & \ghost{\Lambda_C}   & \ghost{{{\U_{AB}}}} & \ghost{\Lambda_{\U_{AB}}} & \gate{\D_B}  & \ghost{\Lambda_C}   & \blacktriangleright  \qw \gategroup{1}{2}{2}{5}{1.0em}{--} \\
            & & & & \dstick{ \mathrm{Repeat} \ k-1 \ \mathrm{times} \hspace{4cm}}
    }
    \end{array}
    \vspace{0.4cm}
\end{equation}

Further we can write the reset operations as trace and preparation operations in our notation and absorb the initial and final error channels as SPAM errors in the $A$ subsystem. This leaves us with:
\begin{equation}
    \begin{array}{c}
        \Qcircuit @C=1.2em @R=.7em {
            \lstick{\rho_{N}} & \gate{\U_A}  & \multigate{1}{\Lambda_C}   &  \multigate{1}{{{\U_{AB}}}}   & \multigate{1}{\Lambda_{\U_{AB}}}  & \gate{\U_A}   &  \rstick{M_{A,N}}   \qw \\
           ~ & \blacktriangleleft & \ghost{\Lambda_C} & \ghost{{{\U_{AB}}}}   & \ghost{\Lambda_{\U_{AB}}}  & \qw \blacktriangleright  & ~  \gategroup{1}{2}{2}{6}{1.4em}{--} \\
        & & & \dstick{\mathrm{Repeat} \ k-1 \ \mathrm{times}}
    }
    \end{array}
    \vspace{0.2cm}
\end{equation}
As the protocol is SPAM robust, we get an	 estimation of the unitarity $u(\noisy{\E})$ of the channel $\noisy{\E}(\rho):=  tr_B \circ \Lambda_{AB} \circ \U_{AB} \circ \Lambda_C ( \rho \x \ketbra{0} )$.

When we implement the protocol for $\bE$ we will have an additional required operation, $\swap$, and the noise associated with it. We can write the noisy version of this in full generality, by including it with the preceding defined unitary ${\U_{AB}}$. E.g. $\noisy{\swap} \circ \U_{AB,N} := \Lambda_{S,AB} \circ \swap \circ \U_{AB}$. This leads to a noisy circuit of the form 
\begin{equation}
    \begin{array}{c}
        \Qcircuit @C=1.2em @R=.7em {
            \lstick{\rho_{N}} & \gate{\U_A}  & \multigate{1}{\Lambda_C}  &  \multigate{1}{{{\U_{AB}}}} &  \qswap &  \multigate{1}{\Lambda_{{S,AB}}}     & \gate{\U_A} & \multigate{1}{\Lambda_C}   &  \rstick{M_{A,N}} \qw \\
            \lstick{\noisy{\ketbra{0}}} & \gate{\D_B} & \ghost{\Lambda_C}   & \ghost{{{\U_{AB}}}} & \qswap \qwx    & \ghost{\Lambda_{{S,AB}}} & \gate{\D_B} & \ghost{\Lambda_C}  & \blacktriangleright  \qw \gategroup{1}{2}{2}{6}{1.2em}{--} \\
            & & & & & \dstick{\mathrm{Repeat} \ k-1 \ \mathrm{times} \hspace{4cm}}
    }
    \end{array}
    \vspace{0.4cm}
\end{equation}
Finally, absorbing the initial and final error channels as SPAM errors in the $A$ subsystem leaves us with:
\begin{equation}
    \begin{array}{c}
        \Qcircuit @C=1.2em @R=.7em {
            \lstick{\rho_{N}} & \gate{\U_A}   & \multigate{1}{\Lambda_C} &    \multigate{1}{{\U_{AB}}} &  \qswap  &  \multigate{1}{{{\Lambda_{S,AB}}}}   & \gate{\U_A}   &   \rstick{M_{A,N}}  \qw \\
           ~ & \blacktriangleleft & \ghost{\Lambda_C}  & \ghost{{\U_{AB}}} & \qswap \qwx  & \ghost{{{\Lambda_{S,AB}}}}   & \qw \blacktriangleright  & ~  \gategroup{1}{2}{2}{7}{1.6em}{--} \\
            & & & & \dstick{\mathrm{Repeat} \ k-1 \ \mathrm{times} \hspace{1cm}}
    }
    \end{array}
    \vspace{0.2cm}
\end{equation}
Giving an exact estimation of the unitarity $u(\noisy{\bE})$ for the channel $\noisy{\bE}(\rho) := tr_A \circ \Lambda_{S,AB} \circ \U_{AB} \circ \Lambda_C ( \rho \x \ketbra{0} ).$

\clearpage
\small
\bibliography{references.bib}
\end{document}